\definecolor{purple}{rgb}{0.65, 0, 1}
\definecolor{orange}{rgb}{1,.5,0}
\newtheoremstyle{thm}{1.5ex}{1.5ex}{\itshape\rmfamily}{}
{\bfseries\rmfamily}{}{2ex}{}
\newtheoremstyle{rem}{1.3ex}{1.3ex}{\rmfamily}{}
{\itshape}
{} {1.5ex}{}
\theoremstyle{thm}
\newtheorem{thm}{Theorem}[section]
\newtheorem{corollary}[thm]{Corollary}
\newtheorem{lemma}[thm]{Lemma}
\newtheorem{proposition}[thm]{Proposition}
\theoremstyle{definition}
\newtheorem{theorem}{Theorem}[section]
\title{\bf Territorial Developments Based on Graffiti: a 
Statistical Mechanics Approach}
\author[case]{Alethea B. T. Barbaro}
\ead{alethea.barbaro@case.edu}
\author[ucla]{Lincoln Chayes}
\ead{lchayes@math.ucla.edu}
\author[csun]{Maria R. D'Orsogna \corref{cor1}}
\ead{dorsogna@csun.edu}
\address[ucla]{UCLA Mathematics Department\\
520 Portola Plaza \\
Box 951555\\
Los Angeles, CA 90095-1555\\
USA}
\address[csun]{
CSUN Mathematics Department\\ 
18111 Nordhoff St\\ 
Los Angeles, CA 91330-8313\\
USA 
}
\address[case]{
CWRU Department of Mathematics\\
10900 Euclid Avenue-Yost Hall Room 220\\
Cleveland, Ohio 44106-7058\\
USA
}
\begin{document}

\begin{abstract}
We study the well-known sociological phenomenon of gang aggregation
and territory formation through an interacting agent system defined on a
lattice. We introduce a two-gang Hamiltonian model where agents have
red or blue
affiliation but are otherwise
indistinguishable.
In this model, all interactions are indirect and occur only via graffiti
markings, on-site as well as on nearest neighbor locations.  We also
allow for gang proliferation and graffiti suppression.  Within the
context of this model, we show that gang clustering and territory
formation may arise under specific parameter choices and that a phase
transition may occur between well--mixed, possibly dilute
configurations and well separated, clustered ones.  Using methods from
statistical mechanics, we study the phase transition between these two
qualitatively different scenarios.
In the mean--fields rendition of this model, we identify parameter regimes
where the transition is first or second order. 
In all cases, we have found that the transitions are a consequence
solely of the gang to graffiti couplings, implying that direct gang to gang
interactions are not strictly necessary for gang territory formation;
in particular, graffiti may be the sole driving force behind gang
clustering.  We further discuss possible sociological -- as well as ecological --
ramifications of our results.

\end{abstract}

\maketitle

\noindent{\bf Keywords:} Territorial Formation, Spin Systems, Phase Transitions

\section{Introduction} \label{S:intro}

\noindent
Lattice models have been extensively used in
the physical sciences over the past decades
to describe a wide variety of condensed matter
equilibrium and non equilibrium phenomena
see e.g., the reviews in
 \cite{baxter, wu, alea}.
Magnetization was the original application, but the list has grown to
include structural transitions in DNA \cite{DNA, DNA1, DNA2}, polymer
coiling \cite{polymer, polymer1}, cellular automata \cite{cellauto,
  cellauto1}, and gene regulation \cite{gene, gene1, gene2} to name a
few.  The resulting models are certainly simplified, but what they
lack in detail is compensated by their amenability to analytical
and computational treatment -- and, occasionally, to exact solution.  
Moreover, at least for the behavior in the vicinity of a continuous transition, 
the simplifications inherent in these approximate models may be presumed to be inconsequential.
In short, lattice models have proved extremely useful in the context of the physical, biological and even chemical sciences.
In more recent years, lattice models have also been applied to study
social phenomena \cite{castellano, stauffer1, stauffer}, such as
racial segregation \cite{race1, race2}, voter preferences \cite{voter,
  voter1, voter2}, opinion formation in financial markets
\cite{opinion, opinion1, opinion2}, and language changes in society
\cite{language1, language2, language3}, offering insight into
socioeconomic dynamics and equilibria.
In this paper we consider the problem of gang aggregation via
graffiti in what is -- to the best of our knowledge -- the first
application of lattice model results 
 to the emergence
of gang territoriality.  

Scratching words or painting images on visible surfaces is certainly
not a new phenomenon. Wall scribblings have survived from ancient
times and have been used to reconstruct historical events and to
understand societal attitudes and values. Today, graffiti (from the
Italian \emph{graffiare}, to scratch) is a pervasive characteristic of
all metropolitan areas \cite{Alonso2}. Several types of graffiti
exist. Some are political in nature, expressing activist views against
the current establishment; others are expressive or offensive
manifestations on love, sex or race. At times, the graffiti is a mark
of one's passage through a certain area, with prestige being
attributed to the most prolific or creative tagger or to one who is able to
reach inaccessible locations. The mark can be anything from a simple
signature to a more elaborate decorative aerosol painting
\cite{Phillips, Alonso}.  All of these types of graffiti are usually
scattered around the urban landscape and do not 
appear to
follow any
predetermined spatio--temporal pattern of evolution. They affect the
quality of life simply as random defacement of property, although
sometimes they are considered art \cite{Knox}.

On the other hand, \textit{gang} graffiti represents a much more serious
threat to the public, since it is usually a sign of the presence of
criminal gangs engaged in illegal or underground activities such as
drug trafficking or extortion \cite{Smith, Fagan}. Street gangs are
extremely territorial, and aim to preserve economic interests and
spheres of influence within the neighborhoods they control. 
A gang's ``turf'' is usually marked in a characteristic
style, recognizable to members and antagonists \cite{Brown, LC1974}
with incursions by enemies often resulting in violent acts.
The established boundaries between different gang factions are
sometimes respected peacefully, but more often become contested
locations where it is not uncommon for murders and assaults to occur
\cite{BB1993}. It is here, on the boundaries between gang turfs, that
the most intense graffiti activity is usually concentrated.

Several criminological and geographical studies have been presented
connecting gang graffiti and territoriality in American cities
\cite{Knox, Alonso, LC1974}. In particular, it is now considered
well-established that the spatial extent of a gang's area of influence
is strongly correlated to the spatial extent of that particular gang's
graffiti style or language. Furthermore, it is known that the
incidence of gang graffiti may change in time, reflecting specific
occurrences or neighborhood changes. For example, rival gangs may
alternate between periods of truce and hostility, the latter being
triggered by arrests or shootings. Similarly, boundaries may shift
locations when the racial or socio--economic makeup of a neighborhood
changes, creating new tensions, or when gang members migrate to new
communities \cite{Alonso2}. In all these cases, periods of more
intense gang hostility are usually accompanied by intense graffiti
marking and erasing by rival factions in contested or newly settled
boundary zones \cite{LC1974}.

The purpose of this paper is to present a mathematical model that
includes relevant sociological and geographical information relating
gang graffiti to gang activity. In particular, we study the
segregation of individuals into well defined gang clusters as driven
by gang graffiti, and the creation of boundaries between rival
gangs. We use a spin system akin to a 2D lattice Ising model to
formulate our problem through the language of statistical mechanics.
In this context, the site variables $s_{i}$ have two constituents
which represent `gang' and `graffiti' types, respectively, and
\textit{phase separation} is assumed to be the proxy for gang
clustering.  For the purpose of simplicity, we consider only two
gangs, hereafter referred to as the red and blue gang, whose members
we refer to as \textit{agents}.  Lattice sites may be occupied by
agents of either color or be void.  Since gang members are assumed to
tag their territory with graffiti of their same color, we also assign
a graffiti index to each site representing the preponderance of red or
blue markings.

In particular, agents are attracted to sites with graffiti of their
same color, and avoid locations marked by their opponents. We
deliberately avoid including direct interactions between gang members,
so that ``ferromagnetic" type gang--gang attractions exist only insofar
as they are mediated by the graffiti. On one hand this is
mathematically interesting: in the broader context of physical
systems, interactions are often mediated but rarely are indirect
interactions the subject of mathematical analysis. On the other hand,
by excluding direct gang interactions, we can specifically focus on
the role of graffiti in gang dynamics and segregation.  Furthermore,
as will be later discussed, under certain conditions, gang--gang
couplings may be unimportant, and one of the primary conclusions of
this work is that they appear to be unnecessary to account for the
observed phenomena of gang segregation.  In any case, we informally
state without proof that all the results of this work also hold if
explicit agent--agent interactions are included.

We thus write $s_i = (\eta_i, g_i)$, representing the agents and
graffiti configuration at site $i$, respectively. The former component
$\eta_i$ is discrete allowing, for simplicity, at most one agent on
each site.  The latter $g_i$ is continuous and, in principle,
unbounded.  We let $\mathbf{s}$ denote a spin configuration on the
entire lattice, and in Section \ref{S:Hamiltonian}, propose a
Hamiltonian, $\mathscr{H}(\mathbf{s})$, to embody all relevant
sociological information.  Once $\mathscr H(\mathbf{s})$ has been
determined, the probability for the occurrence of a spin configuration
$\mathbf{s}$ on a finite connected lattice $\Lambda\subset \mathbb
Z^{2}$ is determined by the corresponding Gibbs distribution
$\mathbb{F}(\mathbf{s})$. Note that due to the choices made on the
range of the $\eta_i, g_i$ values, $\mathbb{F}(\mathbf{s})$ is
discrete in the $\eta$ variables and continuous in the $g$ ones. It is
given by

\begin{equation*}
\mathbb{F}(\mathbf{s}) = \frac{1}{\mathcal{Z}}
\exp(-\mathscr{H}(\mathbf{s})),
\end{equation*}

\noindent
where $\mathcal{Z}$ is the partition function for the finite lattice
$\Lambda$ formally provided by the expression

\begin{equation*}
\mathcal{Z} = \sum_{\mathbf{s} \in \mathbb{S}}
\exp(-\mathscr{H}(\mathbf{s})).
\end{equation*}

\noindent
Here, $\mathbb{S}$ denotes the set of all possible configurations on
$\Lambda$ and the summation symbol is understood to be a summation
over the discrete components and an integration over the continuous
ones.  As usual, we begin with a finite lattice and its associated
boundary conditions, and obtain infinite volume results by taking the
appropriate limits.
Using techniques from statistical mechanics, we prove that
our system undergoes a phase transition as the coupling parameters are
varied.  In the unconstrained ensemble, certain parameter choices lead
to predominance of either the red or blue gang, indicating that for
configurations where the red to blue gang ratio is fixed at unity,
a phase separation will occur.  Conversely, in other regions of parameter
space, there is no dominance of either gang type, indicating that the
two are well-mixed and/or dilute. In this work we will investigate under 
which conditions to expect phase separation or gang dilution.

Our paper is organized as follows: in Section \ref{S:Hamiltonian}, we
give details of the model and in Section
\ref{S:nearestNeighPhaseTransition}, we prove that a phase transition
exists as a function of the relevant parameters. Since information on the
location of \textit{all} transition points is, by necessity,
incomplete we consider an approximation in the form of a simplified
mean field version of our Hamiltonian and derive the corresponding
mean field equations in Section \ref{S:MFHamiltonian}. Here, we show
that the mean field Hamiltonian also exhibits a phase
transition and we further prove that the latter is continuous in one
specified region of parameter space and first order in
another. Finally, in Section \ref{S:discussion} we end with a
discussion of potential sociological and ecological 
implications of our results.

\section{The Hamiltonian} \label{S:Hamiltonian}

Let us define a spin system on a finite lattice $\Lambda \subset \mathbb Z^{2}$.
Here, the spin at each site $i \in \Lambda$ is
denoted by $s_i = (\eta_i, g_i)$ and, we reiterate,  $\eta_i$ denotes the \emph{agent
  spin} and $g_i$ represents the \emph{graffiti field}.  We allow the
agent spin to be in the set $\{0, \pm 1\}$; $\eta_i = -1$ if the agent
at site $i$ belongs to the blue gang, $\eta_i = 0$ if there is no
agent, and $\eta_i = +1$ if the agent is a red gang member.  The
graffiti field is in the set of real numbers: $g_i > 0$ indicates an
excess of red graffiti, $g_i < 0$ an excess of blue graffiti, and, in
either case, $|g_{i}|$ indicates the magnitude of the excess.  We now
introduce the formal Hamiltonian $\mathscr{H}(\mathbf{s})$

\begin{equation} \label{E:Hamiltonian}
 - \mathscr{H}(\mathbf{s}) = J \sum_{<i,j>} \eta_i g_j + K \sum_{i}
   \eta_i g_i + \alpha \sum_{i} \eta_i^2- \lambda \sum_{i} g_i^2,
 \end{equation}

\noindent
where $\mathbf{s}$ is a given configuration on the full $\Lambda$
lattice, $i$ and $j$ index its sites and $\sum_{\langle i,j\rangle}$
is the sum taken over every bond between nearest neighbor sites
belonging to $\Lambda$.  
We discuss the role of spins on the lattice boundary $\Lambda^c$ in
Proposition \ref{3point4} and following sections. The expression in
\ref{E:Hamiltonian} will be referred to as the GI--Hamiltonian
(graffiti interaction Hamiltonian) and its corresponding partition
function will be denoted by an unadorned ${\mathcal Z}$. Note that
since $\eta_i$ is either $0$ or $\pm 1$, $\eta_{i}^{2} = |\eta_{i}|$;
however, we choose to display the above form to leave open the
possibility of $\eta_{i} \in \mathbb{Z}$.  As discussed earlier, there
are no \textit{explicit} agent--agent interactions in this model;
indeed, the structure of the Hamiltonian assumes that gang members
interact with each other only via the graffiti tagging.  As a result,
occupation at site $i$ by a gang member is ``energetically'' favored
only if nearest-neighbor and on-site graffiti are predominantly of its
same color.  The two coupling constants, $J$ for nearest-neighbor
interactions and $K$ for on-site occupation, reflect this trend.  The
$\alpha \eta_i^2$ term represents the proclivity of a given site to be
occupied by agents regardless of color, implying that gang members
carry a strong tendency to occupy unclaimed turf if $\alpha \gg 1$,
while $\alpha \ll - 1$ represents a natural paucity of gangs
altogether.  Finally, we assume graffiti imbalance of either color to
be energetically unfavorable via the $-\lambda g_i^2$ term. This can
be interpreted as natural decay of graffiti due to the elements, or to
police or community intervention.  For purposes of stability,
$\lambda$ must be positive.  Although the interactions $J$, $K$ are
tacitly assumed to be positive, generalizations to negative values may
be possible, and a corresponding analysis may be undertaken given the
proper sociological interpretations.

\section{Phase transition in the GI--system}	
\label{S:nearestNeighPhaseTransition}
\subsection{Low  temperature phase}

\noindent
The basic strategy we follow to demonstrate an ordered, ``low
temperature" phase is a \textit{contour} argument, here 
illustrated: Suppose that
$\eta_i$, the agent spin at site $i$, differs from the agent spin
$\eta_j$ at a different site $j$.  The two agent spins can differ
either by color, representing two different gang affiliations, or by
occupation, where one site is occupied and the other is void.  
At the scale of nearest neighbors, each
edge in the lattice can be defined as either a \emph{coherent} bond,
where the adjoining lattice sites are occupied and their agent spins
are identical, or as an \emph{incoherent} bond if this condition does
not hold. Thus, explicitly, $(\eta_i, \eta_j) = (1,1)$ or 
$(-1,-1)$ are coherent, and all the other types are not.

Let us now consider any path on the lattice that joins sites $i$ and
$j$.  Since $i$ and $j$ have agent spins which are not identical, it
must be the case that on any path between $i$ and $j$, there is an
incoherent bond.  Furthermore, these incoherent bonds must form a
closed contour on the dual lattice that separates $i$ from $j$.  In
the following subsections, we derive a bound on the probability of any
such incoherent bonds and their aggregation into contours.  When these
probabilities are small enough -- which happens in certain regions of
the parameter space -- we can establish a low temperature
phase. For example, the presence of a red agent at
the origin will imply that, with significant probability, the majority of
the other sites will also be occupied by red agents, 
showing the existence of a red phase.  Similarly, a blue phase can be
shown to exist.

To achieve all of these ends, we will employ the methods of
\textit{reflection positivity} described in \cite{Biskup} and
\cite{SST} which contain a detailed account of useful techniques along
with relevant classic references.  In this paper, we will be working
on the $L\times L$ \textit{diagonal} 2D torus -- the SST -- which we
denote by $\mathbb T_{L}$. We will often refer to the Gibbsian
probability measure on $\mathbb T_{L}$ associated with the Hamiltonian
in Eq.(\ref{E:Hamiltonian}) which we denote by $\mathbb
P_{L}(\cdot)$.

\subsubsection{Reflection positivity}
By means of the reflection positivity of the Gibbs distribution we can
easily bound the expectation of an observable which depends only on
the spin at any two neighboring lattice points. This result will be
used to build the contour argument that will lead us to prove the
existence of a low temperature phase. We thus briefly introduce the
concept of reflection positivity, referring the interested reader to
\cite{Biskup} for a more detailed discussion of these topics.

Consider a plane of reflection $p$ which intersects the torus in a
path running through next nearest (diagonal) pairs of sites.  Let
$\vartheta_p$ be the reflection operator through $p$.  On the SST,
this plane $p$ divides the lattice into two halves, identified as
$\mathbb{T}^{+}_L$ and $\mathbb{T}^{-}_L$, such that
$\mathbb{T}^{+}_L\cap \mathbb{T}^{-}_L = p$.  Let $\mathscr U^{+}_{p}$
denote the set of functions which depend only on the spin variables in
$\mathbb{T}^{+}_L$ and similarly for $\mathscr U^{-}_{p}$.  The
\textit{reflection map}, $\vartheta_{p}$, which, in a natural fashion
identifies sites in $\mathbb{T}^{+}_L$ with those in
$\mathbb{T}^{-}_L$ via a reflection through $p$, can also be used to
define maps between $\mathscr U^{+}_{p}$ and $\mathscr U^{-}_{p}$:
Specifically, if $f\in \mathscr U^{+}_{p}$, we define $\vartheta_{p} f
\in \mathscr U^{-}_{p}$ to be the function $f$ evaluated on the
configuration reflected from $\mathbb{T}^{-}_L$.

A measure $\mu$ is
\emph{reflection positive} with respect to $\vartheta_p$ if for every $f,g \in \mathscr{U}^{+}_{p}$, or $\mathscr U^{-}_{p}$,
the following two properties hold

\begin{enumerate}
 \item $\mathbb{E}_{\mu} (f \vartheta_p f ) \geq 0$,
 \item $\mathbb{E}_{\mu} (f \vartheta_p g) = 
\mathbb{E}_{\mu}( g \vartheta_p f )$.
\end{enumerate}

\noindent
It is known (e.g., see \cite{Biskup}) that $\mathbb P_L$ is reflection
positive with respect to $\vartheta_p$ for every $p$ of the above
described type. We next use reflection positivity to find an upper
bound on the expectation of observables defined on bonds.  In doing so, we use the following lemmas:

\begin{lemma} \label{L:tilingBound}
Let $\langle i,j \rangle$ denote a bond of $\mathbb T_{L}$ and let
$\alpha_{i}$ and $\gamma_{j}$ denote site events at the respective
endpoints of the bond.  Let $\mathcal Z^{(\alpha,\gamma)}_{\mathbb
  T_{L}}$ denote the partition function (on $\mathbb T_{L}$) which has been
constrained so that at each site with the parity of $i$, the
translation of the event $\alpha_{i}$ occurs and similarly for
$\gamma$.  Then, for $L = 2^{k}$ for some integer $k$,
$$
\mathbb P_{L}(\alpha_{i}\cap\gamma_{j}) \leq
\left [
\frac{\mathcal Z^{(\alpha,\gamma)}_{\mathbb T_{L}}}{\mathcal Z_{\mathbb T_{L}}}
\right ]^{\frac{1}{2V}},
$$
where $V = L^{2}$ is the volume of the torus.
\end{lemma}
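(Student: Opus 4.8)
The plan is to prove Lemma \ref{L:tilingBound} by the standard ``chessboard estimate'' technique: iterate reflection positivity across a full hierarchy of reflection planes so that the single pair of events $\alpha_i,\gamma_j$ gets replicated periodically over the entire torus, and then compare the resulting constrained partition function to the full one. The binary condition $L=2^k$ is precisely what makes the hierarchy of dyadic reflection planes close up on the torus.

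First I would set up the basic reflection step. Write $\mathbb P_L(\alpha_i\cap\gamma_j) = \mathbb E_{\mathbb P_L}(\mathbf 1_{\alpha_i}\,\mathbf 1_{\gamma_j})$, and observe that $\mathbf 1_{\alpha_i}\in\mathscr U^+_p$ and $\mathbf 1_{\gamma_j}\in\mathscr U^-_p$ when $p$ is chosen to be a diagonal reflection plane of the SST having the bond $\langle i,j\rangle$ straddling it (so that $i\in\mathbb T^+_L$, $j\in\mathbb T^-_L$), with $\gamma_j = \vartheta_p(\gamma_{i'})$ for the mirror site $i'$ of $i$. Applying the Cauchy--Schwarz inequality that follows from properties (1) and (2) of reflection positivity, namely $\mathbb E_\mu(f\,\vartheta_p g)\le \mathbb E_\mu(f\,\vartheta_p f)^{1/2}\,\mathbb E_\mu(g\,\vartheta_p g)^{1/2}$, I get $\mathbb P_L(\alpha_i\cap\gamma_j)$ bounded by the geometric mean of two probabilities in which the pair of events has been reflected into each half-torus symmetrically. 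Iterating this over a complete set of $2k$ mutually orthogonal families of dyadic reflection planes (the $k$ ``horizontal'' and $k$ ``vertical'' scales on the $L\times L$ diagonal torus) disseminates the constraint to every site of the appropriate parity, at the cost of taking the $2k$-fold iterated geometric mean. Since each site is hit once and $V=L^2$, after collecting exponents the bound becomes $\mathbb P_L(\alpha_i\cap\gamma_j)\le \big[\mathcal Z^{(\alpha,\gamma)}_{\mathbb T_L}/\mathcal Z_{\mathbb T_L}\big]^{1/(2V)}$, where $\mathcal Z^{(\alpha,\gamma)}_{\mathbb T_L}$ is the partition function with the translated events $\alpha$ and $\gamma$ enforced on the two sublattices as in the statement. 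One should check that the Gibbs measure's reflection positivity with respect to \emph{all} the required planes is available: this is exactly what was quoted from \cite{Biskup} just before the lemma, and it relies on the fact that the GI-Hamiltonian \eqref{E:Hamiltonian} decomposes, across any diagonal plane, into a part on $\mathbb T^+_L$, a part on $\mathbb T^-_L$, and a nearest-neighbor coupling across $p$ of the reflection-positive form $\sum_k f_k\,\vartheta_p f_k$ — here the cross terms are $J\eta_i g_j$ and $J\eta_j g_i$ on bonds cut by $p$, each manifestly of this type.

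The main obstacle, and the step I would spend the most care on, is the bookkeeping of the iteration: making sure that the sequence of reflections can be chosen so that (a) every plane used is of the admissible diagonal type for which $\mathbb P_L$ is reflection positive, (b) the images of the original pair of events tile the two parities of the lattice exactly once with no overlaps or omissions, and (c) the exponents multiply to give precisely $1/(2V)$ rather than some other power of two. This is the classical chessboard-estimate argument on the ``spin-staggered torus,'' so I would follow the exposition in \cite{SST,Biskup} closely; the content specific to this paper is merely verifying that the SST $\mathbb T_L$ with $L=2^k$ admits the full dyadic reflection structure and that the Hamiltonian \eqref{E:Hamiltonian}, being a sum of on-site and nearest-neighbor terms, is compatible with every such reflection. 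Everything else — Cauchy--Schwarz, counting — is routine once that framework is in place.
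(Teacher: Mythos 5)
Your proposal is correct in outline and is essentially the same approach as the paper's: the paper offers no independent proof of Lemma \ref{L:tilingBound}, deferring entirely to \cite{SST} and to Section 5.3 of \cite{Biskup}, and what you have sketched is precisely the chessboard-estimate argument contained in those references (reflection--Cauchy--Schwarz across a closing dyadic hierarchy of diagonal planes, dissemination of the bond event to a chessboard pattern, exponent $1/(2V)$ from the $2V$ bond positions).

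One imprecision is worth correcting. You justify reflection positivity by saying the cross-coupling over bonds cut by $p$ consists of terms $J\eta_i g_j$ that are ``manifestly of the form $\sum_k f_k\,\vartheta_p f_k$.'' They are not: a term $\eta_i\, g_{\vartheta_p i}$ is of the form $f\,\vartheta_p g$ with $f\neq g$, and symmetrizing it, e.g.\ $\eta_i g_{\vartheta_p i}+g_i\eta_{\vartheta_p i}=\tfrac12(\eta+g)_i(\eta+g)_{\vartheta_p i}-\tfrac12(\eta-g)_i(\eta-g)_{\vartheta_p i}$, produces a negative coefficient, which would destroy the RP criterion if such terms were actually present. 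The real reason RP holds here is geometric: because the reflection planes of the SST run through diagonal lines of sites, every nearest-neighbor bond incident to the plane has an endpoint \emph{on} $p=\mathbb T^+_L\cap\mathbb T^-_L$, so there is no genuine coupling between $\mathbb T^+_L\setminus p$ and $\mathbb T^-_L\setminus p$ and the Hamiltonian decomposes as $-\mathscr H=A+\vartheta_p A$ with no cross term at all. This is exactly why the diagonal torus is used, and why the estimate holds irrespective of the signs of $J$ and $K$. Since the paper itself quotes RP for these planes as known, this does not affect the validity of your argument, but the stated mechanism should be fixed.
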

\begin{proof}
The result from this Lemma dates back to the original papers on the
subject. In particular, the use of bond events on the SST was
highlighted in \cite{SST}.  A modern and complete
derivation is contained in \cite{Biskup}, Section 5.3.
\end{proof}

\noindent
For a slightly more general scenario, let us 
consider the bond $\langle i,j\rangle$
and various events $\alpha_{i}^{1}, \gamma_{j}^{1}$, \dots ,
$\alpha_{i}^{n}, \gamma_{j}^{n}$ and let us denote by $b_{1} = \alpha_{i}^{1}
\cap \gamma_{j}^{1}$ \dots $b_{n} = \alpha_{i}^{n} \cap
\gamma_{j}^{n}$ the corresponding bond events as described.  Letting
$b = \cup_{j=1}^{n}b_j$ we find
$$
\mathbb P_{L}(b) \leq
\sum_{j=1}^{n}
\left [
\frac{\mathcal Z^{(\alpha_j,\gamma_{j})}_{\mathbb T_{L}}}{\mathcal Z_{\mathbb T_{L}}}
\right ]^{\frac{1}{2V}}
:=
\sum_{j=1}^{n}
\left [
\frac{\mathcal Z^{(b_{j})}_{\mathbb T_{L}}}{\mathcal Z_{\mathbb T_{L}}}
\right ]^{\frac{1}{2V}}.
$$

\noindent
Finally, we have
\begin{lemma}
\label{XDS}
Let $r_{1}, \dots r_{m}$ denote translations of the bond
$\langle i,j\rangle$
and $b_{r_{j}}$ the translation of the bond event(s) $b$ described above.  Then
$$
\mathbb P_{L}(\cap_{j = 1}^{m}b_{r_{j}})  
\leq
\left[
\sum_{j=1}^{n}
\left [
\frac{\mathcal Z^{(b_{j})}_{\mathbb T_{L}}}{\mathcal Z_{\mathbb T_{L}}}
\right ]^{\frac{1}{2V}}
\right ]^{m}.
$$
\begin{proof}
Again, we refer the reader to \cite{Biskup}, Section 5.3.
\end{proof}

\end{lemma}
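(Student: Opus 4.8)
The plan is to derive this from the full multi-bond (``chessboard'') form of the reflection-positivity estimate — of which Lemma~\ref{L:tilingBound} is the one-bond case — together with an elementary subadditivity bound. Write $\mathfrak p(b_l):=\big[\mathcal Z^{(b_l)}_{\mathbb T_L}/\mathcal Z_{\mathbb T_L}\big]^{1/(2V)}$ for the chessboard value of the single bond event $b_l=\alpha_i^l\cap\gamma_j^l$ supplied by Lemma~\ref{L:tilingBound}, and $\mathfrak p(b):=\big[\mathcal Z^{(b)}_{\mathbb T_L}/\mathcal Z_{\mathbb T_L}\big]^{1/(2V)}$, where $\mathcal Z^{(b)}_{\mathbb T_L}$ is $\mathcal Z_{\mathbb T_L}$ constrained so that (the appropriate translate of) the bond event $b=\cup_{l=1}^n b_l$ is realized on every one of the $2V$ bonds in the reflection orbit of $\langle i,j\rangle$.

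\emph{Step 1 (product bound over distinct bonds).} On the diagonal torus $\mathbb T_L$ with $L=2^k$, $\mathbb P_L$ is reflection positive with respect to $\vartheta_p$ for every plane $p$ through diagonal pairs of sites, and — this is the feature that makes the SST the right choice — no such $p$ severs a nearest-neighbor bond, so any nearest-neighbor bond event lies in $\mathscr U^+_p$ or $\mathscr U^-_p$. Iterating the Cauchy--Schwarz/reflection step of properties (1)--(2) across a dyadic family of such planes, exactly as in \cite{Biskup}, Section~5.3, upgrades Lemma~\ref{L:tilingBound} to: for any bond event $A$ and any distinct bonds $r_1,\dots,r_m$,
\[
\mathbb P_L\Big(\textstyle\bigcap_{j=1}^m \theta_{r_j}A\Big)\ \le\ \mathfrak p(A)^{\,m}.
\]
Since, by hypothesis, $r_1,\dots,r_m$ are distinct translates of $\langle i,j\rangle$ and $b_{r_j}$ is the corresponding translate of the single event $b$, this gives $\mathbb P_L\big(\bigcap_{j=1}^m b_{r_j}\big)\le \mathfrak p(b)^m$.

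\emph{Step 2 (subadditivity).} It remains to show $\mathfrak p(b)\le\sum_{l=1}^n\mathfrak p(b_l)$. Bondwise one has $\mathbf 1_{b}\le\sum_{l=1}^n\mathbf 1_{b_l}$; multiplying these inequalities over the $2V$ bonds of the orbit and expanding produces a multinomial sum, each term of which is an intersection, over those $2V$ bonds, of single events $b_l$ with prescribed multiplicities. Bounding every such term (an intersection over distinct bonds) by the estimate of Step~1 and resumming the multinomial yields
\[
\frac{\mathcal Z^{(b)}_{\mathbb T_L}}{\mathcal Z_{\mathbb T_L}}\ =\ \mathfrak p(b)^{2V}\ \le\ \Big(\sum_{l=1}^n\mathfrak p(b_l)\Big)^{2V},
\]
hence $\mathfrak p(b)\le\sum_l\mathfrak p(b_l)=\sum_l\big[\mathcal Z^{(b_l)}_{\mathbb T_L}/\mathcal Z_{\mathbb T_L}\big]^{1/(2V)}$. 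Combining with Step~1 gives $\mathbb P_L\big(\bigcap_{j=1}^m b_{r_j}\big)\le\big[\sum_{l=1}^n(\mathcal Z^{(b_l)}_{\mathbb T_L}/\mathcal Z_{\mathbb T_L})^{1/(2V)}\big]^m$, which is the claim.

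The essential content — and the step I expect to be the main obstacle to write out in full — is Step~1: establishing the product-form chessboard estimate on the SST. This requires the complete reflection-positivity apparatus, in particular checking that the diagonal reflection planes of $\mathbb T_L$ organize the bonds into a single orbit on which a bond event can be forced simultaneously, and carrying out the dyadic iteration of the reflection bound correctly when $L=2^k$. These are precisely the matters treated in \cite{Biskup}, Section~5.3, so it is legitimate to invoke them; granting Step~1, the subadditivity in Step~2 is a routine, if bookkeeping-heavy, computation.
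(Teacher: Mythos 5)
Your argument is correct and is essentially the paper's own proof, which consists entirely of deferring to the chessboard-estimate machinery of \cite{Biskup}, Section 5.3: your Step 1 is that reference's chessboard estimate on the SST and your Step 2 is its subadditivity lemma for the chessboard norm. The only point worth tightening is that the multinomial terms in Step 2 place \emph{different} events $b_l$ at different bonds, so you need the inhomogeneous form of the chessboard estimate, $\mathbb P_L(\cap_t \theta_t A_{l(t)}) \leq \prod_t \mathfrak p(A_{l(t)})$, rather than the homogeneous bound as you literally stated it in Step 1 --- but that is exactly what the cited iteration of Cauchy--Schwarz delivers.
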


\subsubsection{A bound on the incoherent bond probabilities} 
\label{SSS:boundingTheProb}

\noindent
In order to prove a phase transition by a contour argument, we must
place an upper bound on the probability for the occurrence of any type
of incoherent bond where agent spins of neighboring sites are
different.  There are four types of incoherent bonds, namely $(\eta_i,
\eta_j) = (-1,1), (-1,0)$, and $(1,0)$, and $(0,0)$, regardless of
order. Let us introduce the following notation: consider undirected
bonds between two particular neighboring lattice sites, $\langle
i,j\rangle$ and let $(\cdot,\cdot )$ denote the event of any of the
nine coherent or incoherent bonds so that

$$ 
(\cdot,\cdot) \in \{ (+,+), (-,-), (+,-), (-,+), (+,0), (0,+),
(-,0), (0,-), (0,0)\}.
$$

\noindent
Similarly, let $Z^{(\cdot, \cdot)}_{\mathbb T_{L}}$ denote the
partition function restricted to 
configurations where all agent spins are frozen 
in accord with the above described (chessboard) pattern
and the rest of the 
statistical mechanics
is provided by
the graffiti field against this background \cite{biskup, chessboard}. 
The following is readily obtained:
\begin{proposition}
The above described (agent--constrained) partition functions are given by

\begin{eqnarray}
\nonumber
\mathcal Z_{\mathbb T_{L}}^{(0,0)} &=&
\left[\frac{\sqrt\pi}{\sqrt\lambda}\right]^{V}, \\ 
\nonumber
\mathcal Z_{\mathbb
  T_{L}}^{(-,-)} = \mathcal Z_{\mathbb T_{L}}^{(+,+)} &=&
\left[\frac{\text{e}^{\alpha}\sqrt\pi}{\sqrt\lambda}
  \text{e}^{\frac{1}{4\lambda}[4J + K]^{2}} \right]^{V}, \\ 
\nonumber
\mathcal
Z_{\mathbb T_{L}}^{(+,-)} = \mathcal Z_{\mathbb T_{L}}^{(-,+)} &=&
\left[\frac{\text{e}^{\alpha}\sqrt\pi}{\sqrt\lambda}
  \text{e}^{\frac{1}{4\lambda}[-4J + K]^{2}} \right]^{V}, \\ 
\nonumber
\mathcal
Z_{\mathbb T_{L}}^{(0,+)} = \dots = \mathcal Z_{\mathbb T_{L}}^{(-,0)}
&=& \left[\frac{\text{e}^{\frac{1}{2}\alpha}\sqrt\pi}{\sqrt\lambda}
  \text{e}^{\frac{1}{8\lambda}K^{2}} \right]^{V}.
\end{eqnarray}
\end{proposition}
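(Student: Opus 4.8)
\medskip
\noindent\textbf{Proof strategy.}
The plan is to freeze the agent spins into the prescribed chessboard configuration and to note that what survives is a product of decoupled one--dimensional Gaussian integrals over the graffiti field. By the reflection structure underlying the chessboard estimate on the diagonal torus $\mathbb{T}_L$, the pattern attached to a nearest--neighbor bond event $(a,b)$, with $a,b\in\{0,\pm1\}$, is the bipartite configuration in which $\eta_i\equiv a$ on one sublattice and $\eta_i\equiv b$ on the other; every site of $\mathbb{T}_L$ then has all four of its nearest neighbors in the opposite sublattice. Once the $\eta$'s are fixed in this way, the term $\alpha\sum_i\eta_i^2=\alpha\sum_i|\eta_i|$ becomes the constant $\tfrac12\alpha V(|a|+|b|)$, the terms $J\sum_{\langle i,j\rangle}\eta_i g_j+K\sum_i\eta_i g_i$ are \emph{linear} in the $g_i$, and $-\lambda\sum_i g_i^2$ is a diagonal, negative--definite quadratic form in the $g_i$ (this, and only this, is where $\lambda>0$ is used). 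Hence
\begin{equation*}
\mathcal{Z}^{(a,b)}_{\mathbb{T}_L}
=\text{e}^{\frac12\alpha V(|a|+|b|)}\prod_{i\in\mathbb{T}_L}\int_{\mathbb{R}}\exp\!\big(-\lambda g_i^2+h_i\,g_i\big)\,dg_i,
\qquad h_i:=K\eta_i+J\!\!\sum_{j:\,j\sim i}\!\!\eta_j .
\end{equation*}

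Next I would compute the elementary integral $\int_{\mathbb{R}}\exp(-\lambda g^2+hg)\,dg=\sqrt{\pi/\lambda}\,\exp(h^2/4\lambda)$ by completing the square, and read the two possible values of the local field off the bipartite pattern: a site on the sublattice carrying spin $a$ sees four neighbors of spin $b$, so $h_i=Ka+4Jb$, while on the sublattice carrying spin $b$ one has $h_i=Kb+4Ja$. Since each sublattice contains $V/2$ sites, this yields the closed form
\begin{equation*}
\mathcal{Z}^{(a,b)}_{\mathbb{T}_L}
=\left[\text{e}^{\frac12\alpha(|a|+|b|)}\,\frac{\sqrt{\pi}}{\sqrt{\lambda}}\,\exp\!\Big(\tfrac{1}{8\lambda}\big[(Ka+4Jb)^2+(Kb+4Ja)^2\big]\Big)\right]^{V}.
\end{equation*}

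Finally I would collapse the nine bond patterns to the four inequivalent ones via the symmetries $(a,b)\mapsto(-a,-b)$ (the global flip $\eta\mapsto-\eta$, which preserves each $|\eta_i|$ and each $h_i^2$) and $(a,b)\mapsto(b,a)$ (exchange of the two sublattices, realized by a lattice reflection), and then substitute the remaining values: $a=b=0$ gives the void background; $a=b=\pm1$ gives the monochromatic backgrounds, with local field $h\equiv\pm(4J+K)$; $a=-b=\pm1$ gives the two anti--aligned backgrounds, using $(Ka+4Jb)^2=(Kb+4Ja)^2=(K-4J)^2=(-4J+K)^2$; and $a=0,\,b=\pm1$ (equivalently $a=\pm1,\,b=0$) gives the occupied/void backgrounds. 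The argument is essentially mechanical; the one point that genuinely needs care---rather than a real obstacle---is the accounting of the local fields $h_i$, in particular the fact that the graffiti at an \emph{unoccupied} site still carries the linear field $J\sum_{j\sim i}\eta_j$ produced by its occupied neighbors, so that an occupied/void background is not merely a product of free void--site integrals.
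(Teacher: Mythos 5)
Your derivation is correct and it is exactly the route the paper takes: the paper's entire proof is the one--line observation that with the agents frozen the $g_i$ decouple into independent Gaussians, and your write--up simply makes the bookkeeping explicit (bipartite dissemination of the bond event, local field $h_i=K\eta_i+J\sum_{j\sim i}\eta_j$, and $\int\text{e}^{-\lambda g^2+hg}dg=\sqrt{\pi/\lambda}\,\text{e}^{h^2/4\lambda}$). One substantive point, however: your general formula, which is right, does \emph{not} reproduce the proposition's fourth line. Substituting $(a,b)=(0,\pm 1)$ into your expression gives
\begin{equation*}
\mathcal Z^{(0,\pm)}_{\mathbb T_L}=\left[\text{e}^{\frac12\alpha}\,\frac{\sqrt\pi}{\sqrt\lambda}\,\text{e}^{\frac{1}{8\lambda}\left[16J^2+K^2\right]}\right]^{V},
\end{equation*}
i.e., an extra factor $\text{e}^{2J^2/\lambda}$ per site relative to the printed $\bigl[\text{e}^{\frac12\alpha}\sqrt{\pi/\lambda}\,\text{e}^{K^2/8\lambda}\bigr]^{V}$. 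That extra factor is precisely the contribution you flag at the end of your argument: in the occupied/void chessboard the graffiti at each void site still sees the field $4J$ from its four occupied neighbors, so its Gaussian integral is not free. You should state this discrepancy explicitly rather than leave it implicit; as written, your proof establishes a corrected version of the fourth identity, not the one displayed. The correction propagates to the subsequent bound on $\mathbb P_L(\pm,0)$, whose exponent becomes $-(J^2+JK+\tfrac{1}{16}K^2)/\lambda$ in place of $-(2J^2+JK+\tfrac{1}{16}K^2)/\lambda$; since this still vanishes as $J\lambda^{-1/2}\to\infty$, the contour argument and the low--temperature theorem are unaffected.
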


\noindent
\begin{proof}
Since the agent variables are frozen, the $g_i$ Gaussian variables are
independent and the above amount to straightforward Gaussian
integrations.
\end{proof}

\noindent
Using Lemma \ref{L:tilingBound} and the fact that the full partition
function satisfies 
$\mathcal Z_{\mathbb T_{L}} \geq \mathcal Z_{\mathbb T_{L}}^{(+,+)} $,
we can write

\begin{eqnarray}
\nonumber 
\mathbb P_{L}(0,0) &\leq&
\text{e}^{-\frac{1}{2}\alpha}\text{e}^{-\frac{1}{8\lambda}[4J +
    K]^{2}}, \\ 
\mathbb P_{L}(+,-) = \mathbb P_{L}(-,+) &\leq&
\text{e}^{-\frac{2JK}{\lambda}}, \\ 
\nonumber \mathbb P_{L}(+,0) =
\dots = \mathbb P_{L}(0,-) &\leq&
\text{e}^{-\frac{1}{4}\alpha}\text{e}^{-\frac{2J^{2} + JK +
    \frac{1}{16}K^{2}}{\lambda}}.
\end{eqnarray}

\noindent
We denote by $\varepsilon = \varepsilon (J,K,\lambda,\alpha)$ the sum
of the estimates for the probabilities provided by the
right hand sides of the preceding display.  For fixed
$\alpha$ and $K > 0$, note that as
$J\lambda^{-1/2}\to\infty$ (or, better yet, $J\lambda^{-1/2}$ and
$K\lambda^{-1/2}$ both tending to infinity) the quantity $\varepsilon$
tends to zero.  This implies the suppression of all incoherent bonds
so that the lattice must be almost fully tiled with coherent ones.
In particular, the lattice is nearly filled with agents,
which, at least locally, are mostly of the same type.  As will be demonstrated
below, this implies the existence of distinctive red and blue phases,
i.e., in the language of statistical mechanics, of a ``low temperature''
regime. We formalize this result in the next subsection.

\subsubsection{The contour argument}

We have now established all the tools we need to complete the
contour argument.  Accordingly, we now show that two 
well--separated lattice sites
must, with probability tending to one,
have identical agent spins in the limit $\varepsilon \ll 1$.  This in
turn will imply the existence of a low temperature phase.

\begin{theorem}
Consider the GI--system on $\mathbb Z^{2}$ and let
$\varepsilon(J,K,\lambda,\alpha)$ denote the quantity described in the
last paragraph of the previous subsection.  Then, if the parameters
are such that $\varepsilon$ is sufficiently small, there are at least
two distinct limiting Gibbs states characterized, respectively, by the
abundance of red agents and the abundance of blue agents. Moreover,
this property holds in any limiting shift invariant Gibbs state.
 \end{theorem}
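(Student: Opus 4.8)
The plan is to run a Peierls‐type contour argument on the diagonal torus $\mathbb{T}_L$ with $L = 2^{k}$, feeding in the chessboard estimates of Lemmas~\ref{L:tilingBound} and~\ref{XDS} as the sole probabilistic input, and then to pass to the infinite‑volume limit. Recall from the previous subsection that an edge of $\mathbb{T}_L$ is \emph{incoherent} whenever its two agent spins are not both $+1$ or both $-1$; hence if $\eta_i \neq \eta_j$ then every lattice path from $i$ to $j$ crosses an incoherent edge, and dually the incoherent edges carry a family of closed contours on the dual lattice, any one of which separates the plane into an inside and an outside. The quantity $\varepsilon = \varepsilon(J,K,\lambda,\alpha)$ is, by construction, the sum over the incoherent bond types $t$ of the single‑bond chessboard estimates $\mathbb{P}_L(t)\le p_t$, so that $\sum_t p_t = \varepsilon$; the bound $\mathcal{Z}_{\mathbb{T}_L}\ge \mathcal{Z}^{(+,+)}_{\mathbb{T}_L}$ already used to obtain the $p_t$ from the explicit Gaussian partition functions will be used again only through these numbers.

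The key estimate is a bound of the form $\mathbb{P}_L(\gamma)\le \varepsilon^{|\gamma|/C}$ for the probability that all edges dual to a fixed contour $\gamma$ are incoherent, where $|\gamma|$ is the length of $\gamma$ and $C$ is a geometric constant depending only on the reflection structure of $\mathbb{T}_L$. To obtain it I would partition the edges of $\gamma$ into the at most $C$ classes that are translates of one another under the period sublattice compatible with the reflections; the largest class contains $m\ge |\gamma|/C$ edges, all translates of a single bond, and applying Lemma~\ref{XDS} to the bond event ``some incoherent type occurs'', whose single‑bond chessboard estimate is exactly $\varepsilon$, gives $\mathbb{P}_L(\gamma)\le \mathbb{P}_L(\text{those }m\text{ edges incoherent})\le \varepsilon^{m}\le \varepsilon^{|\gamma|/C}$. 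Since the number of contours of length $n$ encircling a fixed site is at most a polynomial in $n$ times $3^{n}$, the Peierls sum $S:=\sum_{\gamma\ni 0}\mathbb{P}_L(\gamma)\le \sum_{n\ge 4}\mathrm{poly}(n)\,3^{n}\varepsilon^{n/C}$ converges, is bounded uniformly in $L$, and tends to $0$ as $\varepsilon\to 0$; winding (non‑contractible) contours require $\sim L$ incoherent edges and contribute $O(\varepsilon^{L/C})\to 0$, so they can be discarded.

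From here the argument is standard. Deleting the incoherent edges splits $\mathbb{T}_L$ into coherent components, each of which is monochromatic (all sites occupied, one colour); the event that the origin is \emph{not} enclosed by any contour has probability at least $1-S$, and on this event the origin lies in the unique large coherent component, whose colour $\sigma\in\{+,-\}$ is, by the symmetry $(\eta,g)\mapsto(-\eta,-g)$ of $\mathscr{H}$, equidistributed. Consequently, for any two sites $0$ and $x$ with $|x|$ small relative to $L$, the event ``$0$ and $x$ both unenclosed'' has probability at least $1-2S$ and forces $\eta_0=\eta_x=\sigma$, whence $\mathbb{P}_L(\eta_0=\eta_x=+1)\ge \tfrac12(1-2S)$, uniformly in $x$ and $L$; for $\varepsilon$ small this exceeds $\tfrac14\ge \mathbb{P}_L(\eta_0=+1)\,\mathbb{P}_L(\eta_x=+1)$, which is the advertised long‑range order, and similarly $\mathbb{P}_L(\eta_0=+1)\ge\tfrac12(1-S)$. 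Taking $L=2^{k}\to\infty$ along a subsequence yields a shift‑invariant Gibbs state $\mu$ for which $\mathbb{E}_\mu$ of the indicator of $\{\eta_0=\eta_x\neq 0\}$ is bounded below uniformly in $x$; its ergodic decomposition must therefore be supported on at least two distinct extremal shift‑invariant Gibbs states, which the colour symmetry lets us label red‑abundant, $\mathbb{E}_{\mu^{+}}(\eta_0)>0$, and blue‑abundant, $\mu^{-}$ being its image under the colour flip with $\mathbb{E}_{\mu^{-}}(\eta_0)<0$. Since the chessboard estimates, hence the above lower bound, are available in \emph{any} shift‑invariant Gibbs state obtained as such a limit, the persistence of this long‑range order — i.e.\ the abundance of one colour — in any limiting shift‑invariant state follows, giving the final assertion.

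The step I expect to be the main obstacle is precisely the passage from the chessboard estimate, which natively controls events \emph{tiled periodically} across the torus, to the probability of a single localized contour: this is the role of the partition into $O(1)$ translation classes together with Lemma~\ref{XDS}, and one must check that the resulting per‑contour weight $\varepsilon^{|\gamma|/C}$ still beats the combinatorial entropy $3^{|\gamma|}$ of contours once $\varepsilon$ is small. Secondary bookkeeping — contractible versus winding contours, uniqueness of the large coherent component on the torus, and making the extraction of the two extremal infinite‑volume states fully rigorous (including the DLR property of the weak limit) — is by now routine in the reflection‑positivity literature \cite{Biskup,SST} and is not expected to present difficulties.
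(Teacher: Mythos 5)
Your proposal is correct and follows essentially the same route as the paper: a Peierls contour argument over the dual contours of incoherent bonds, with the chessboard/reflection-positivity estimates of Lemmas \ref{L:tilingBound} and \ref{XDS} supplying the per-bond weight $\varepsilon$, the contour entropy beaten by $\varepsilon$ small, and the $(\eta,g)\mapsto(-\eta,-g)$ symmetry producing the two states. The only differences are minor: the paper applies Lemma \ref{XDS} to get the full weight $\varepsilon^{\ell}$ for a contour of length $\ell$ (so your cautious $\varepsilon^{|\gamma|/C}$ via translation classes is an unnecessary concession, though still sufficient), and the final claim about \emph{any} shift-invariant Gibbs state is not proved but delegated to Theorem 2.5 of \cite{BbKk} and Corollary 5.8 of \cite{C2} --- the one spot where your sketch (``available in any such limit'') covers only torus limits rather than arbitrary shift-invariant states and really needs that citation.
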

\begin{proof}
Let us start on $\mathbb T_{L}$ with $L = 2^{k}$.  For $i,j
\in \mathbb T_{L}$ where $i$ and $j$ are well separated, let us 
consider the event 
$v_{B} :=\{\eta_{i}\neq\eta_{j}\}\cup \{\eta_{i} =
0\}$.  We will show, under the stated conditions, that uniformly in
$L$ this probability vanishes as $\varepsilon \to 0$.
As discussed previously, in order for this event to occur, the sites
$i$ and $j$ must be separated by a closed contour consisting of bonds
dual to incoherent bonds.  For $\ell = 4, 6, \dots$ let $\mathfrak
N_\ell = \mathfrak N_\ell(i-j, L)$ denote the \textit{number} of such
contours of length $\ell$ on $\mathbb T_{L}$.  Then we claim that
uniformly in $L$ and $i-j$,

\begin{equation}
\nonumber
\mathfrak N_\ell \leq 2\ell^{2}\lambda_{2}^{\ell}
\end{equation}
\noindent
where $\lambda_{2}$ (with $\lambda_{2} \approx 2.638... < 3$) is the
connectivity constant for $\mathbb Z^{2}$ \cite{connectivity}.  
A word of explanation may
be in order.  The $\lambda_{2}^{\ell}$ generously accounts for walks
of length $\ell$ in the vicinity of site $i$ and the factor of two for
walks in the vicinity of site $j$.  Finally, the factor of $\ell^{2}$
accounts for the origin of the walk.  Note this is an over-counting,
e.g., contours which wind the torus but do not necessarily
``enclose'' $i$ or $j$ are counted twice.
Using Lemma \ref{XDS} we may now write
\begin{equation}
\nonumber
\mathbb P_{L}(v_{B})
\leq   \sum_{\ell} \mathfrak N_\ell \varepsilon^{\ell}
\leq  2\sum_{\ell:\mathfrak N_\ell\neq 0}
\ell^{2}[\lambda_{2}\varepsilon]^{\ell}.
\end{equation}
The above obviously tends to zero as $\varepsilon \to 0$
demonstrating that in finite volume, the lattice is either populated
with mostly red agents \textit{or} mostly blue agents depending -- with high probability -- on what is seen at the origin.  The
implication of this result is that, for $\varepsilon$ sufficiently
small, there are at least two infinite volume Gibbs states -- which
can be realized as the limits of the appropriately conditioned
$\mathbb T_{L}$'s.  These states have one of the two mutually
exclusive characteristics: a preponderance of red agents or a
preponderance of blue agents.  The fact that the above must also hold
in any shift--invariant Gibbs state is the subject of Theorem 2.5 and
its Corollary in \cite{BbKk} with a slight extension provided by
Corollary 5.8 in \cite{C2}.
\end{proof}

\subsection{High temperature phase}
As is sometimes (e.g., historically) the case in statistical
mechanics, it can be an intricate job to establish a high temperature
phase -- a region of parameters where the limiting Gibbs measure is
unique and correlations decay rapidly.  
Typically, one calls
upon the Dobrushin uniqueness criterion \cite{D}. However for us, this
route is interdicted by the unbounded nature of the $g_i$ graffiti field.
The strategy here will be percolation based: First we establish the
so--called FKG property for all the associated Gibbs measures.  Then what
follows will be a relatively standard argument through which we show that 
the necessary and sufficient condition for uniqueness is that the
average of $\eta_{i}$ -- akin to a magnetization -- vanishes in the
state designed to optimize this quantity.  Then, finally, we will
develop a random cluster--type expansion demonstrating that under the
expected high--temperature conditions for the couplings, e.g.,
$\lambda \gg 1$, the stated condition on this magnetization is
satisfied.  In addition, high--temperature behavior should also be achieved under
the condition that agents are sparse.  This requires an alternative
percolation criterion used in conjunction with the above mentioned
expansion.  In both scenarios, the rapid decay of correlations arises as an 
automatic byproduct.

\subsubsection{FKG properties}
\label{FKGprop}
In this paragraph we will demonstrate that the \textit{FKG Lattice
  Condition} (see e.g., \cite{Li} Page 78) is satisfied by
any finite volume Gibbs measure associated with the
GI--Hamiltonian. 
Let us start by noting that we can define a natural
partial ordering on the pair of states $s_{i}$ and $s_{i}^{\prime}$
via the notation
\begin{equation}
\nonumber
s_{i}
\succeq
s_{i}^{\prime}, 
\hspace{.6 cm}
\text{if}
\hspace{.3 cm} \eta_{i} \geq \eta_{i}^{\prime}
\hspace{.4 cm}
\text{and}
\hspace{.4 cm}
g_{i} \geq g_{i}^{\prime}.
\end{equation}
Further we introduce the notation 
$\mathbf{s} \succeq \mathbf{s}^{\prime}$ to signify that the
above holds for all the $s_{i}$, $s_{i}^{\prime}$ at each
$i\in\Lambda$.  For individual spins $s_{i}$ and $s_{i}^{\prime}$, we
also denote $s_{i}\vee s_{i}^{\prime}:= (\text{max}\{ \eta_{i},
\eta_{i}^{\prime} \}, \text{max}\{ g_{i}, g_{i}^{\prime} \} )$ and
similarly for the ``minimum'' $s_{i}\wedge s_{i}^{\prime}$.  Finally,
for spin configurations $\mathbf{s}$ and $\mathbf{s}^{\prime}$, the
configurations $\mathbf{s} \vee \mathbf{s}^{\prime}$ and $\mathbf{s}
\wedge \mathbf{s}^{\prime}$ are defined as 
the sitewise maximum and minimum, respectively.
The FKG lattice condition -- conveniently stated for finite volume
measures -- is that for all $\mathbf{s}$, $\mathbf{s}^{\prime}$, the
following inequality holds:

\begin{equation}
\label{CZW}
\mathbb F_{\Lambda}(\mathbf{s}\vee \mathbf{s}^{\prime})\mathbb
F_{\Lambda}(\mathbf{s}\wedge\mathbf{s}^{\prime}) \geq\mathbb
F_{\Lambda}(\mathbf{s})\mathbb F_{\Lambda}(\mathbf{s}^{\prime}).
\end{equation}

\noindent
The well known consequence of the above is that any pair of random
variables that are both increasing with respect to the partial order
described above are positively correlated.

\begin{proposition}
\label{3point4}
The finite volume Gibbs measures associated with the GI--Hamiltonian
satisfy the FKG lattice condition.
\end{proposition}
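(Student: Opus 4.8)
The plan is to verify the FKG lattice condition \eqref{CZW} directly from the explicit form of the Gibbs density. Since $\mathbb F_\Lambda(\mathbf s) = \mathcal Z^{-1}\exp(-\mathscr H(\mathbf s))$, inequality \eqref{CZW} is equivalent to the \emph{supermodularity} of $-\mathscr H$, i.e.
\begin{equation}
\nonumber
-\mathscr H(\mathbf s \vee \mathbf s') - \mathscr H(\mathbf s \wedge \mathbf s') \;\geq\; -\mathscr H(\mathbf s) - \mathscr H(\mathbf s'),
\end{equation}
because the normalization $\mathcal Z$ cancels and $\exp$ is increasing. It is a standard fact (e.g.\ from the lattice structure of the single-site state space $\{0,\pm1\}\times\mathbb R$, which is a totally ordered set crossed with $\mathbb R$, hence a distributive lattice) that a sum of terms each of which is supermodular in the relevant pair of coordinates is supermodular overall. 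So it suffices to check supermodularity term by term in $-\mathscr H$ as displayed in \eqref{E:Hamiltonian}.

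The key observation, to be carried out first, is that each summand involves at most two sites and is separately monotone/supermodular in each variable it touches. For the on-site terms $\alpha\eta_i^2$ and $-\lambda g_i^2$: each depends on a single site variable, and any function of a single coordinate on a chain is automatically supermodular (the meet/join in one coordinate just returns the original unordered pair). For the cross term $K\eta_i g_i$: this is a function of the two coordinates \emph{at the same site}; one checks that $(\eta,g)\mapsto \eta g$ satisfies $(\eta\vee\eta')(g\vee g') + (\eta\wedge\eta')(g\wedge g') \geq \eta g + \eta' g'$ for real $\eta,\eta',g,g'$ — this is the elementary rearrangement inequality $(a-a')(b-b')\geq 0$ after identifying $a=\eta\vee\eta'$, etc., and multiplying by $K>0$. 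For the nearest-neighbor term $J\eta_i g_j$: here the two coordinates sit at \emph{different} sites, and since the partial order is coordinatewise, the same bilinear-supermodularity computation applies, again using $J>0$. Summing over all bonds and sites preserves the inequality.

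I would then assemble these pieces: write $-\mathscr H = \sum_{\langle i,j\rangle} J\eta_i g_j + \sum_i K\eta_i g_i + \sum_i \alpha\eta_i^2 - \sum_i \lambda g_i^2$, note that the meet and join operate coordinatewise so that evaluating any single summand at $\mathbf s\vee\mathbf s'$ and $\mathbf s\wedge\mathbf s'$ only involves the max and min of the finitely many coordinates it depends on, invoke the per-term inequalities above, and add. This yields supermodularity of $-\mathscr H$, hence \eqref{CZW}, hence the FKG lattice condition. The argument is uniform in the (fixed) boundary condition on $\Lambda^c$, since boundary spins enter $-\mathscr H$ only through the same bilinear bond terms, which are handled identically.

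The only place requiring care — the ``main obstacle,'' such as it is — is the bilinear term: one must be sure that $\eta g$ is genuinely supermodular on the product order, not merely when the two arguments move together. The point is that supermodularity of a bilinear form $c\,xy$ on $\mathbb R^2$ with the coordinatewise order is equivalent to $c\geq 0$, which is exactly where the standing assumption $J,K>0$ is used; were $J$ or $K$ negative the FKG property (in this order) would fail, and a sublattice/sign-flip reformulation would be needed. Since the paper assumes $J,K>0$ and $\lambda>0$, every term cooperates and the proof goes through cleanly. I do not expect the finite-volume restriction or the presence of the void state $\eta_i=0$ to cause any difficulty, since $\{-1,0,1\}$ is a chain and the whole single-site space is a distributive lattice.
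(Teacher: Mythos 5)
Your proof is correct and rests on the same essential computation as the paper's: the only nontrivial ingredient is that the bilinear couplings $J\eta_i g_j$ and $K\eta_i g_i$ with positive coefficients are supermodular in the coordinatewise order, which in your write-up is the rearrangement inequality $(a-a')(b-b')\geq 0$ and in the paper appears as the positivity of the cross term $J_{a,b}\,\delta g_a\,\delta\eta_b$. The only organizational difference is that you verify global supermodularity of $-\mathscr H$ term by term, whereas the paper invokes the standard reduction to configurations differing in two raised variables and then does the casework; both are valid and both handle the single-site quadratic terms, the boundary spins, and the on-site $K$ coupling in the same way.
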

\begin{proof}
We consider an arbitrary graph and, as will be made evident, the proof
automatically accounts for any fixed boundary conditions.  Now,
as is well known, it is sufficient to establish that the lattice
condition Eq.(\ref{CZW}) holds when differences between configurations
are exhibited only on a pair of spin--variables.  The fixed boundary
spins thus may be regarded as part of the background which is common
to all four possible agent--graffiti spin configurations in question.
Let us thus assume that the differences between two configurations
occur at sites $a$ and $b$ in the graph where certain specified
variables have been ``raised'' above a base configuration level
$\mathbf{s}$. We denote the single raise configurations by
$\mathbf{s}_{a}$ and $\mathbf{s}_{b}$ and the double raise by
$\mathbf{s}_{ab}$.  Thus, it is sufficient to show $\mathbb
F(\mathbf{s}_{ab})\mathbb F(\mathbf{s}) \geq \mathbb
F(\mathbf{s}_{a})\mathbb F(\mathbf{s}_{b})$.  All told, there are
three possibilities to consider: graffiti--graffiti, gang--graffiti and
gang--gang raises on the $a$ and $b$ sites.  For the mixed
gang-graffiti case we must also consider the $a = b$ possibility where
the gang and graffiti spins have been ``raised'' at the same site.  We
need not consider the normalization constant in any of these cases,
since it appears in identical roles on both sides of the purported
inequality; consideration of the Boltzmann factors is sufficient.  Let
us introduce, in the setting of our general graph, the interaction
\begin{equation}
\nonumber
 -\mathscr H(\mathbf{s}) = \sum_{\langle i,j \rangle}J_{i,j}
\eta_{i}g_{j} - \sum_{i}[\alpha_{i}\eta_{i}^{2} +
  \lambda_{i}g_{i}^{2}],
\end{equation}
where the first sum now extends over all edges considered to be
part of the graph and our only stipulation is that $J_{i,j} > 0$.
Also, we may formally include $i = j$ in this sum.  Let us denote the
``raised" graffiti variables via the positive increments $\delta g_a$
and $\delta g_b$ so that, in the graffiti--graffiti case, at sites $a$
and $b$ $g_{a} \to g_{a} + \delta g_{a}$ and $g_{b} \to g_{b} + \delta
g_{b}$. It is straightforward to see that
$\mathscr{H}(\mathbf{s}_{ab}) + \mathscr{H}(\mathbf{s}) =
\mathscr{H}(\mathbf{s}_{a}) + \mathscr{H}(\mathbf{s}_{b})$ and the
desired inequality holds as an identity.  Similarly for the gang--gang
case.  We can now consider the mixed case where, without loss of
generality, $g_{a} \to g_{a} + \delta g_{a}$ and $\eta_{b} \to
\eta_{b} + \delta\eta_{b}$, and for us, $\delta\eta_{b} \equiv
1$. Here
$$ -(\mathscr{H}(\mathbf{s}_{a}) - \mathscr{H}(\mathbf{s})) =
\sum_{i\neq b}J_{i,a}\eta_{i}\delta g_{a} +J_{a,b}\eta_{b}\delta g_{a} - \lambda \delta {g_{a}}^2,
$$
while
$$
-(\mathscr{H}(\mathbf{s}_{b}) - \mathscr{H}(\mathbf{s})) = \sum_{j\neq b}J_{b,j}g_{j}\delta \eta_{b}
+J_{a,b}g_{a}\delta\eta_{b} + \alpha \delta {\eta_{b}}^2.
$$
However
\begin{eqnarray}
\nonumber
-(\mathscr{H}(\mathbf{s}_{ab}) - \mathscr{H}(\mathbf{s})) 
&=& [\sum_{i\neq b}J_{i,a}\eta_{i}\delta g_{a} +J_{a,b}\eta_{b}\delta g_{a} - \lambda \delta {g_{a}}^2] +
 [\sum_{j\neq b}J_{b,j}g_{j}\delta \eta_{b}
+J_{a,b}g_{a}\delta\eta_{b} + \alpha {\eta_{b}}^2]
+J_{a,b}\delta g_{a}\delta\eta_{b} \\
\nonumber
&\geq & 2 \mathscr{H}(\mathbf{s}) - 
 \mathscr{H}(\mathbf{s}_a) - \mathscr{H}(\mathbf{s}_b).  
\end{eqnarray}
\noindent
Combining the above results we find that indeed
\begin{eqnarray}
\nonumber \mathscr{H}(\mathbf{s}_{ab}) + \mathscr{H}(\mathbf{s})
\leq \mathscr{H}(\mathbf{s}_a) + \mathscr{H}(\mathbf{s}_b).
\end{eqnarray}
\noindent
The same inequality can be easily shown in the mixed gang-graffiti
case for $a=b$, by assuming 
$g_a \to g_a + \delta g_a$ and $\eta_a \to \eta_a +
\delta \eta_a$ and by following the same steps as above. 
This completes the proof.
\end{proof}

\noindent
As an immediate consequence, we can identify boundary conditions on
$\Lambda$ which most favor the dominance of the red gang.  Indeed, it
is now seen -- as was anyway clear heuristically -- that we must make
the boundary spins ``as red as possible'' in order for a predominance
of $\Lambda$ sites to be occupied by red agents.  This amounts,
somewhat informally, to setting $g_{i} \equiv +\infty$ and $\eta_{i}
\equiv 1$ (which is anyway automatic if $K\neq 0$) all along the
boundary.  This ``specification'' which seems a bit arduous to work
with is not nearly as drastic as it sounds.  Let us start with some
notation: For $\Lambda$ a finite subset of $\mathbb Z^{2}$, let us
define $\partial \Lambda$ as those sites in $\Lambda^{c}$ with a
neighbor in $\Lambda$ and $\textsc{d} \Lambda$ as those sites in
$\Lambda$ with a neighbor in $\Lambda^{c}$.  Clearly the only
immediate consequence of the ``drastic'' boundary condition is to
force $\eta_{i} \equiv 1$ for $i\in\textsc{d}\Lambda$ and to bias, by
at most $(3J +K)g_{i}$ the \textit{a priori} Gaussian distribution of
the $g$'s.  We shall do that -- and a bit more -- on $\partial
\Lambda$ arguing that this, at most, is the result of the ``drastic''
boundary condition on $\partial(\Lambda \cup \partial \Lambda)$.
Precisely, we define the \textit{red} boundary condition on $\Lambda$
as $\eta_{i} \equiv 1$ and $g_{i}$ independently distributed as normal
random variables with variance $1/{2\lambda}$ and mean $(4J +
K)/2\lambda$ for each $i\in\partial \Lambda$.  By the established
monotonicity properties these are exactly the boundary
conditions imposed on the slightly larger lattice that will optimize
the average of $\eta_{i}$ and $g_{i}$ for any $i\in \Lambda$.

\subsubsection{A uniqueness criterion}

\noindent It is not hard to show, by monotonicity, that a limiting
\textit{red measure} exists along any thermodynamic sequence of
volumes and that the limit is independent of the sequence and
therefore translation invariant.  We shall denote this measure by
$\mu_{\textsc{R}}(\cdot)$ and by $\mathbb E_{\text{R}}(\cdot)$ the
corresponding expectations. Similarly for the blue measure
we introduce $\mu_{\textsc{B}}(\cdot)$ and $\mathbb E_{\text{B}}(\cdot)$.
We can thus state

\begin{proposition}
\label{REH}
The necessary and sufficient condition for uniqueness among the
limiting Gibbs states for the GI--system is that $\mathbb
E_{\text{R}}(\eta_{0}) = 0$, where
$\eta_0$ is the spin at the lattice origin.
\end{proposition}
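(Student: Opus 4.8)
The plan is to reduce the statement to the single equality $\mu_{\textsc{R}} = \mu_{\textsc{B}}$ and then to characterize that equality through the magnetization $\mathbb{E}_{\text{R}}(\eta_0)$. The first reduction is the standard consequence of the FKG machinery: by Proposition \ref{3point4} the finite--volume Gibbs measures are monotone in the boundary condition, so the red specification introduced above dominates every other boundary condition; letting $\Lambda \uparrow \mathbb{Z}^2$ along a thermodynamic sequence, $\mu_{\textsc{R}}$ stochastically dominates every infinite--volume Gibbs state and, symmetrically, $\mu_{\textsc{B}}$ is stochastically dominated by every such state. Hence the limiting Gibbs state is unique if and only if $\mu_{\textsc{R}} = \mu_{\textsc{B}}$, and it suffices to prove that this holds precisely when $\mathbb{E}_{\text{R}}(\eta_0) = 0$.

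Next I would record the global spin--flip $\Phi:(\eta_i,g_i)\mapsto(-\eta_i,-g_i)$, under which each term of the Hamiltonian (\ref{E:Hamiltonian}) is even, so $\mathscr{H}$ is invariant, and which carries the red boundary condition onto the blue one. Consequently $\mu_{\textsc{B}} = \Phi_\ast\mu_{\textsc{R}}$; in particular $\mathbb{E}_{\text{B}}(\eta_0) = -\mathbb{E}_{\text{R}}(\eta_0)$, and by translation invariance of both measures $\mathbb{E}_{\text{R}}(\eta_i) = -\mathbb{E}_{\text{B}}(\eta_i)$ for every $i$. Necessity is then immediate: if the Gibbs state is unique, $\mu_{\textsc{R}} = \mu_{\textsc{B}}$, whence $\mathbb{E}_{\text{R}}(\eta_0) = \mathbb{E}_{\text{B}}(\eta_0) = -\mathbb{E}_{\text{R}}(\eta_0)$, forcing $\mathbb{E}_{\text{R}}(\eta_0) = 0$.

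For sufficiency, suppose $\mathbb{E}_{\text{R}}(\eta_0) = 0$; then $\mathbb{E}_{\text{B}}(\eta_0) = 0$ as well, and by translation invariance $\mathbb{E}_{\text{R}}(\eta_i) = \mathbb{E}_{\text{B}}(\eta_i) = 0$ for all $i\in\mathbb{Z}^2$. Since $\mu_{\textsc{R}}$ stochastically dominates $\mu_{\textsc{B}}$, Strassen's theorem (see e.g. \cite{Li}) yields a coupling of $\mathbf{s}\sim\mu_{\textsc{R}}$ with $\mathbf{s}'\sim\mu_{\textsc{B}}$ such that $\mathbf{s}\succeq\mathbf{s}'$ almost surely, so in particular $\eta_i\geq\eta_i'$ for every $i$. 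For each fixed $i$ the bounded random variable $\eta_i-\eta_i'$ is nonnegative with expectation $\mathbb{E}_{\text{R}}(\eta_i)-\mathbb{E}_{\text{B}}(\eta_i)=0$, hence vanishes almost surely; intersecting over the countably many $i$ shows that $\mathbf{s}$ and $\mathbf{s}'$ share the same agent configuration almost surely, i.e. $\mu_{\textsc{R}}$ and $\mu_{\textsc{B}}$ induce the same law on $\{\eta_i\}_{i\in\mathbb{Z}^2}$. Finally, for \emph{any} Gibbs measure the conditional law of the graffiti field given the full agent configuration is the same explicit product of Gaussians, namely $g_i$ independent and normal with variance $1/(2\lambda)$ and mean $(J\sum_{j\sim i}\eta_j + K\eta_i)/(2\lambda)$, since this is dictated by the single--site DLR equations and depends only on the neighbouring $\eta$'s. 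Therefore equality of the $\eta$--marginals upgrades to $\mu_{\textsc{R}}=\mu_{\textsc{B}}$, and uniqueness follows.

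I expect the main obstacle to be the first step: making rigorous the claim that the rather unorthodox ``red'' specification genuinely dominates every infinite--volume Gibbs state, so that uniqueness is truly equivalent to $\mu_{\textsc{R}}=\mu_{\textsc{B}}$. Here the unbounded nature of the $g$--field must be handled with some care when passing to the thermodynamic limit (monotone convergence of the finite--volume red measures, tightness of the $g$--marginals coming from the $-\lambda g_i^2$ term, and identification of the limit as a Gibbs state that dominates all others). Once that foundation is in place, the symmetry argument, the Strassen coupling, and the reduction from $\eta$--marginals to the full measure via the DLR--determined Gaussian conditionals are all routine.
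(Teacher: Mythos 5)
Your proof is correct and follows essentially the same route as the paper: FKG monotonicity sandwiches every Gibbs state between $\mu_{\textsc{R}}$ and $\mu_{\textsc{B}}$, the spin--flip symmetry gives $\mathbb{E}_{\text{B}}(\eta_{0})=-\mathbb{E}_{\text{R}}(\eta_{0})$, a Strassen-type coupling argument upgrades equality of expectations plus stochastic dominance to equality of the agent marginals, and the Gaussian conditional law of the $g$--field finishes the identification $\mu_{\textsc{R}}=\mu_{\textsc{B}}$. The only cosmetic difference is that you prove the relevant corollary of Strassen's theorem by hand (nonnegative coupled difference with zero mean), where the paper simply cites it.
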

\begin{proof}
For two measures $\mu_{1}$ and $\mu_{2}$ e.g., on $\{-1, 0,
1\}^{\mathbb Z^{2}}$, we use the notation $\mu_{1} \geq \mu_{2}$ to
indicate that for any random variable $X$ which is increasing 
in all coordinates, the expected values $\mathbb E_{1}(X)$, 
calculated via the $\mu_1$ measure
are always greater than the those obtained via $\mu_2$:
$$
\mathbb E_{1}(X)  \geq  \mathbb E_{2}(X).
$$
\noindent
This is known as \textit{stochastic dominance}.  Consider
$\mu_{\text{R}}(\cdot)$ which, by slight abuse of notation, we
temporarily take to be the restriction of $\mu_{R}$ to agent events.
Suppose that $\mathbb E_{\text{R}}(\eta_{0}) = 0$. Then, by
translation invariance, we have $\mathbb E_{\text{R}}(\eta_{i}) = 0$
for all $i$.  Similar considerations apply to the corresponding
$\mu_{\text{B}}(\cdot)$. It is immediately clear -- by symmetry or
stochastic dominance -- that $\mathbb P_{\text{R}}(\eta_{0} = 0) =
\mathbb P_{\text{B}}(\eta_{0} = 0)$ and thus the single site
distributions are identical.  By the corollary to the Strassen theorem
\cite{St,LP} since $\mu_{\text{R}} \geq
\mu_{\text{B}}$ \textit{and} these measures have identical single site
distributions they must be identical probability measures.  Similar
considerations apply to the full measures since the distribution of
the $g_{i}$ is determined by their conditional distributions given the
local configuration of the $\eta_i$'s.  Uniqueness is established since,
if $\mu_{\odot}(\cdot)$ denotes any other infinite volume measure
associated to the GI--Hamiltonian, we have $\mu_{\text{R}} \geq
\mu_{\odot} \geq \mu_{\text{B}}$ which implies equality in light of
$\mu_{\text{R}} = \mu_{\text{B}}$.
\end{proof}  

\subsubsection{Proof of a high--temperature phase}  
We shall develop a graphical representation for the GI--system akin to
the FK representation for the Potts model \cite{FK} that, for all
intents and purposes, is the same as the one used in \cite{BCG},
where only the case of bounded fields is explicitly analyzed.
Let us then consider the GI--Hamiltonian in finite volume with 
all notation pertaining to boundary conditions temporarily suppressed.  For
fixed $\mathbf{s}$, we may decompose the graffiti fields and agents
according to affiliation:
\begin{eqnarray}
\nonumber
g_{i}  &=&  q_{i}\vartheta_{i}; \hspace{.25 cm} \vartheta_{i} = \pm 1, \hspace{.15 cm} q_{i} = |g_{i}|, \\
\nonumber
\eta_{i} &=& r_{i}\sigma_{i}; \hspace{.25 cm} \sigma_{i} = \pm 1, \hspace{.15 cm} r_{i} = |\eta_{i}|,
\end{eqnarray}
where the $\sigma$'s and $\vartheta$'s have the definitive character 
of \textit{Ising} variables. We can now write
$$
\text{e}^{J_{i,j}g_{i}\eta_{j}} = 
\text{e}^{-J_{i,j}q_{i}r_{j}}(R_{i,j}\delta_{\vartheta_{i},\sigma_{j}} + 1),
$$ where $R_{i,j} = R(J_{i,j},q_{i},r_{j}) :=
\text{e}^{2J_{i,j}q_{i}r_{j}} -1$.  In our case, we have $J_{i,j} = J$
if $i$ and $j$ are neighboring pairs and $J_{i,i} = K$; which we will
not yet distinguish notationally and consider a general $J_{i,j}$
label. Thus

$$
\text{e}^{-\mathscr H(\mathbf{s})} =
\prod_{(i,j)}\text{e}^{-J_{i,j}q_{i}r_{j}}(R_{i,j}\delta_{\vartheta_{i},\sigma_{j}}+1).
$$

\noindent
Opening the product, we select one term for each ``edge'': If the
$R_{i,j}$ term is selected, we declare the edge to be
\textit{occupied}, otherwise it is \textit{vacant}.  It is noted here
that the edges should be interpreted as \textit{directed}:
all edges appear twice and we must regard $\langle
i,j\rangle$ as distinctive from $\langle j,i\rangle$; moreover, for $K
\neq 0$, the above is understood to include $i = j$.
The configurations of occupied edges will, generically, be denoted by
$\omega$.  Summing over the Ising variables, we acquire the weights
$$
W(\omega)  =  2^{C(\omega)}\sum_{\mathbf{q},\mathbf{r}}\prod_{(i,j)\in\omega}
R_{i,j}(J_{i,j}q_{i},r_{j}),
$$
\noindent
where, as before the summation notation also indicates integration
over the continuous variables.  In the above, $C(\omega)$ denotes the
number of connected components of $\omega$; here connectivity
deduced according to the \textit{directed} nature of the edges or via
a double covering of the lattice.
Normalizing these weights by the partition function we obtain a
probability measure on the bond configurations $\omega$.  As will be
made explicit below, this probability measure on bond configurations
is well defined in finite volume.  Let us denote the
probability measure on the bond configurations $\omega$
by $\mathbb P_{\Lambda}^{\odot}(\cdot)$,  where the $\odot$
now denotes boundary conditions accounted for in a routine fashion.
Then, for each $\omega$ consisting of appropriate edges, $\mathbb
P_{\Lambda}^{\odot}(\omega) \in (0,1)$.  We shall not discuss the
problem of infinite volume limits which would take us too far astray
but be content with statements that are uniform in volume.  With
regards to the latter, and of crucial importance for our purposes is
the connection back to the spin--measure inherent in this
representation. For the Potts models, this was first elucidated in \cite{ACCN} with the
complete picture emerging in \cite{ES}. In particular, for any site,
the contribution to the magnetization vanishes if the site
belongs to a cluster that is isolated from the boundary.
The principal objective for this representation is the following claim:

\begin{proposition}
\label{PIY}
Let $\Lambda \subset \mathbb Z^{2}$ be a finite connected set and
consider the above described representation in $\Lambda$ with boundary
condition $\odot$ on $\partial \Lambda$.  Let $\langle a,b\rangle$
denote an edge with $a\neq b$ and both $a$ and $b$ not belonging to
$\partial \Lambda$.  Let $\mathbf{e}_{ab}$ denote the event that this
edge is occupied and let $\omega$ denote a configuration on the
compliment of $\langle a,b\rangle$.  Then, for fixed $\alpha$ and $K$,
there is an $\varepsilon(J,\lambda)$ with $\varepsilon \to 0$ as
$J^{2}/\lambda\to 0$ such that uniformly in $\Lambda$, $\omega$ and
$\odot$ -- as well as $K$ and $\alpha$,
$$
\mathbb P_{\Lambda}^{\odot}(\mathbf{e}_{ab}) < \frac{\varepsilon}{1 + \varepsilon}.
$$
\end{proposition}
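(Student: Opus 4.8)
The plan is to estimate the conditional probability that the directed edge $\langle a,b\rangle$ is occupied, given the configuration $\omega$ on all other edges, and show it is bounded by $\varepsilon/(1+\varepsilon)$ uniformly. Because the weights $W(\omega)$ factor multiplicatively over edges in the $R_{i,j}$ part but are coupled through the $2^{C(\omega)}$ cluster factor, the conditional odds of occupying $\langle a,b\rangle$ versus leaving it vacant is exactly the ratio $\bigl\langle R_{a,b}\bigr\rangle / 1$, where the expectation is taken in the measure obtained from $\omega$ (with $\langle a,b\rangle$ removed) integrated against the remaining Ising and magnitude variables, possibly multiplied by a factor of $2$ or $1$ depending on whether occupying the edge merges two distinct clusters or closes a loop within one cluster. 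Since merging can only \emph{decrease} $C(\omega)$, the cluster factor contributes at most an extra factor of $2$ to the ``occupied'' branch — actually it works in our favor or against us by at most a bounded factor — so the key quantity to control is $\mathbb E\bigl[R_{a,b}\bigr] = \mathbb E\bigl[\mathrm{e}^{2J q_a r_b} - 1\bigr]$ in the worst-case conditioning.

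First I would reduce to bounding $\sup \mathbb E\bigl[\mathrm{e}^{2J q_a r_b}-1\bigr]$ where the supremum is over all admissible $\omega$ and boundary conditions $\odot$, and then observe that $r_b = |\eta_b| \in \{0,1\}$, so the exponent is either $0$ (contributing nothing to $R_{a,b}$) or $2J q_a$. Thus it suffices to bound $\mathbb E\bigl[\mathrm{e}^{2J q_a}-1\bigr]$, where $q_a = |g_a|$ and the distribution of $g_a$ is governed by the finite-volume Gibbs measure conditioned on everything else. The crucial point is that, conditioned on the agent configuration and all other graffiti values, $g_a$ is a Gaussian variable: from the Hamiltonian \eqref{E:Hamiltonian}, the $g_a$-dependent part is $-\lambda g_a^2 + (\text{linear in }g_a)$ with the linear coefficient bounded in absolute value by $(4J+K)$ (four nearest-neighbor $\eta$'s plus the on-site term, each $\eta \in \{0,\pm1\}$). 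Hence $g_a$ is Gaussian with variance $1/(2\lambda)$ and mean bounded by $(4J+K)/(2\lambda)$ in absolute value, uniformly in the conditioning and in $\odot$ — this uniform Gaussian domination is exactly what the FKG/boundary analysis of the previous subsection set up, and it is what makes the estimate hold uniformly in $\Lambda$, $\omega$, $\odot$. A Gaussian moment generating function computation then gives $\mathbb E\bigl[\mathrm{e}^{2Jq_a}\bigr] \le 2\,\mathbb E\bigl[\mathrm{e}^{2J|g_a|}\bigr] \le C\exp\bigl(J^2/\lambda + J(4J+K)/\lambda\bigr)$ for an absolute constant $C$; subtracting $1$ and noting that for fixed $\alpha, K$ this expression tends to $0$ as $J^2/\lambda \to 0$ (the $JK/\lambda$ and $J^2/\lambda$ terms in the exponent both vanish, and the prefactor minus one vanishes), we obtain the desired $\varepsilon = \varepsilon(J,\lambda) \to 0$.

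I expect the main obstacle to be the bookkeeping around the cluster factor $2^{C(\omega)}$: one must argue carefully that the ratio of the ``occupied'' weight to the ``vacant'' weight, after summing over the Ising spins $\sigma,\vartheta$ and integrating the magnitudes $\mathbf q,\mathbf r$, is still controlled by $\mathbb E[R_{a,b}]$ up to a harmless bounded factor — i.e., that conditioning the edge to be occupied does not force the magnitude $q_a$ to have an atypically large conditional mean beyond the stated Gaussian bound. The resolution is that occupying $\langle a,b\rangle$ only adds the constraint $\vartheta_a = \sigma_b$ and the weight $R_{a,b}$; it does not alter the quadratic form in $g_a$, so the conditional law of $q_a$ given occupation is still dominated by the folded Gaussian above (the constraint on the sign $\vartheta_a$ is independent of $|g_a|=q_a$ in the relevant factorization). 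Once that is pinned down, the remaining steps are the routine Gaussian integral and the observation that $x/(1+x)$ is increasing, so a bound of the form $\mathbb P^{\odot}_\Lambda(\mathbf e_{ab})/(1-\mathbb P^{\odot}_\Lambda(\mathbf e_{ab})) \le \varepsilon$ rearranges to exactly the claimed inequality.
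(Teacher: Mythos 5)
Your overall strategy is the same as the paper's: express the conditional odds of $\mathbf{e}_{ab}$ as $W(\omega\vee\mathbf{e}_{ab})/W(\omega)$, interpret this as a weighted expectation of $R_{ab}$, and show that a Gaussian-scale computation forces it to be $O(J/\lambda^{1/2})$. However, there are two concrete gaps. First, the conditional law of $q_a$ given $\omega$ in the \emph{graphical representation} is not the folded Gaussian of the spin measure with mean $(4J+K)/2\lambda$: each edge of $\omega$ already occupied at $a$ contributes a factor $R(J,q_a,r_c)=\mathrm{e}^{2Jq_ar_c}-1$ to the weight on $q_a$, so in the worst case (all bonds emanating from $a$ present) the relevant density is proportional to $\mathrm{e}^{-\lambda q^2}\sinh^{3}(Jq)\sinh(Kq)$ — this is exactly the integral the paper computes. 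Your ``resolution'' only addresses the effect of occupying $\langle a,b\rangle$ itself, not the conditioning on the other occupied edges at $a$; and the domination you invoke in fact runs the wrong way, since the likelihood ratio of the $\sinh$-weighted density to the tilted folded Gaussian, $\prod(1-\mathrm{e}^{-2Jq})$, is increasing in $q$, so the true conditional law stochastically dominates your comparison measure rather than being dominated by it. (Quantitatively this is harmless — $q_a$ is still of scale $\lambda^{-1/2}$ up to $K$-dependent shifts — but the step as written is not justified.)

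Second, your closing estimate does not close numerically: if $\mathbb E[\mathrm{e}^{2Jq_a}]\le C\exp(J^{2}/\lambda+J(4J+K)/\lambda)$ with an \emph{absolute} constant $C$ (and the folded-Gaussian bound naturally produces $C\ge 2$), then $C\mathrm{e}^{x}-1\to C-1\ge 1$ as $x\to 0$, so ``subtracting one'' does not produce a quantity tending to zero. What you actually need is a bound of the form $\mathbb E[\mathrm{e}^{2Jq_a}-1]\le 2J\,\mathbb E[q_a\mathrm{e}^{2Jq_a}]=O(J/\lambda^{1/2})$ (modulo $K/\lambda^{1/2}$-dependent constants), which is precisely what the paper's ratio of $\sinh^{4}$ to $\sinh^{3}$ integrals delivers: the extra factor of $\sinh(Jq)\approx Jq$ evaluated at $q\sim\lambda^{-1/2}$ is the source of the smallness, not the exponential prefactor. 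Both gaps are repairable, but as written the argument neither identifies the correct conditional measure nor extracts the decisive factor of $J/\lambda^{1/2}$.
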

\begin{proof}  
Let $W_\Lambda^{\odot}(\cdot)$ denote the configurational weights with
associated boundary conditions as described above.  Then it is seen
that
$$ 
\frac{\mathbb P_{\Lambda}^{\odot}(\omega \vee \mathbf{e}_{ab})}{1 -
  \mathbb P_{\Lambda}^{\odot}(\omega \vee \mathbf{e}_{ab})} =
\frac{W_\Lambda^{\odot}(\omega \vee
  \mathbf{e}_{ab})}{W_\Lambda^{\odot}(\omega)}.
$$ 
Our goal is to estimate the right hand side of the above
which thereby generates the quantity
$\varepsilon$ featured in the statement of this proposition.  Noting
the positivity and product structure of the numerator and
denominator, we may regard the object on the right as the expectation
with respect to a weighted measure of the quantity $R_{ab}$ and we
shall denote this by $\mathbb E_{\omega}(R_{a,b})$.  The latter will
be estimated via conditional expectation: Let $Q_{\hat{q}_{a}}$ denote
a specification of the $q$--fields and agent occupation variables
except for $q_{a}$ and let
\begin{eqnarray}
\nonumber
\varepsilon := \sup_{\omega, Q_{\hat{q}_{a}}}\mathbb E_{\omega}(R_{ab}\mid Q_{\hat{q}_{a}}).
\end{eqnarray}
Obviously, $\varepsilon \geq \sup_{\omega}\mathbb E_{\omega}(R_{ab})$.
As for the complimentary fields, there is not a great deal of
dependence: In particular, all that is needed is that $r_{c} = 1$ for
all $c$ such that $\langle a, c\rangle \in \omega$.  Concerning the
optimizing $\omega$, non--local considerations dictate simply, that
$\omega$ be such that $\langle a,b\rangle$ does not \textit{reduce}
the number of components.  Locally, as can be explicitly checked, or
derived from monotonicity principals, the optimal scenario is when all
bonds emanating from $a$ are present in the configuration.  Thus we
have
\begin{eqnarray}
\nonumber
\varepsilon = 
\frac{\int\text{e}^{-(4J+ K)q}R^{4}(J)R(K)\text{e}^{-\lambda q^{2}}dq}
{\int\text{e}^{-(3J+K)q}R^{3}(J)R(K)\text{e}^{-\lambda q^{2}}dq}
=
2
\frac{\int\text{e}^{-\lambda q^{2}} \sinh^{4}(Jq)\sinh (Kq) dq}
{\int\text{e}^{-\lambda q^{2}} \sinh^{3}(Jq) \sinh (Kq)dq},
\end{eqnarray}
where in the first line $R(J) := R(J, q, 1)$.  We claim that the
final ratio is bounded by $J/\lambda^{1/2}$ multiplied by a constant
that may be proportional to the ratio $K/\lambda^{1/2}$.  Indeed let
us substitute $\omega = \lambda^{1/2}q$ and $\kappa :=
K/\lambda^{1/2}$.  The above quantity can thus be rewritten as

\begin{eqnarray}
\nonumber
\varepsilon  =  2\frac
{{\int\text{e}^{-\omega^{2}}\sinh \kappa\omega(\sinh \omega J/\lambda)^{4}d\omega}}
{\int\text{e}^{-\omega^{2}}\sinh \kappa\omega(\sinh \omega J/\lambda)^{3}d\omega}.
\end{eqnarray}

\noindent
Our claim is obvious if $\kappa\to 0$ but we may wish to consider
cases where $\kappa$ stays bounded away from zero.  In general, the
integrands are not dominated by large $\omega$ and we may expand the
factors $\sinh \omega J/\lambda$ with the result
\begin{eqnarray}
\nonumber
\varepsilon \to \frac {2 J} {\lambda^{1/2}}
\frac{\int\text{e}^{-\omega^{2}}\sinh \kappa\omega\cdot
  \omega^{4}d\omega}{\int\text{e}^{-\omega^{2}}\sinh \kappa\omega\cdot
  \omega^{3}d\omega}.
\end{eqnarray}
We finally claim is that the right side is bounded by a linear
function of $\kappa$:
\begin{eqnarray}
\nonumber
\frac{1}{1 + \kappa} \frac{\int\text{e}^{-\omega^{2}}\sinh
  \kappa\omega\cdot \omega^{4}d\omega}{\int\text{e}^{-\omega^{2}}\sinh
  \kappa\omega\cdot \omega^{3}d\omega} < B,
\end{eqnarray}
for some $B < \infty$. This is indeed true as $\kappa \to 0$. We only
need to show that the inequality holds in the case 
$\kappa\to \infty$.  But here the factor
$\text{e}^{-\omega^{2}}\sinh \kappa\omega$ is, essentially, a Gaussian
in the variable $\omega - \kappa$ and the desired result follows.
\end{proof}
\begin{theorem}
\label{PKS}
Consider the GI--system and let $\varepsilon$ denote the quantity
described in Proposition \ref{PIY}.  Then for $\varepsilon <
\varepsilon_{0}$, given by
$$
\varepsilon_{0} + \frac{1}{2}\varepsilon_{0}^{2} = \frac{1}{2}
$$
there is a unique limiting Gibbs state featuring rapid decay of correlations.
\end{theorem}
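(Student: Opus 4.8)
The plan is to feed Proposition~\ref{PIY} into the random--cluster representation developed just above, use the localization of the magnetization to a percolation connectivity, compare with ordinary planar bond percolation, and then invoke Proposition~\ref{REH}. First I would record the structural fact quoted from \cite{ACCN,ES}: conditionally on the bond configuration $\omega$ and on the magnitudes $q_i=|g_i|$, $r_i=|\eta_i|$, the residual Ising variables $\sigma,\vartheta$ factor over the connected components of $\omega$, and on any component disjoint from the (red) boundary they may be globally reversed without changing the weight. Hence the contribution of the origin to the magnetization vanishes unless its cluster reaches $\partial\Lambda$, so that, with $\mathbb P^{\text R}_{\Lambda}$ the red--boundary bond measure,
\[
\big|\mathbb E^{\Lambda}_{\text R}(\eta_{0})\big|
=\big|\mathbb E^{\Lambda}_{\text R}\!\big(\eta_{0}\,\mathbf 1[\,0\leftrightarrow\partial\Lambda\,]\big)\big|
\le \mathbb P^{\text R}_{\Lambda}\big(0\leftrightarrow\partial\Lambda\big),
\]
the connection being understood in $\omega$ (equivalently, through the double cover of the lattice), and $|\eta_0|\le1$ having been used.

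Next comes the domination step. Proposition~\ref{PIY} states that, for every interior directed edge, the conditional probability of being occupied given the status of all other edges is at most $p:=\varepsilon/(1+\varepsilon)$. By a standard comparison lemma for measures whose conditional single--bond marginals are uniformly bounded, the occupied--edge field in the deep interior is stochastically dominated by independent Bernoulli$(p)$. Identifying, over each nearest--neighbour bond $\{i,j\}$ of $\mathbb Z^{2}$, its two directed copies $\langle i,j\rangle$ and $\langle j,i\rangle$ --- and declaring $\{i,j\}$ \emph{open} for connectivity purposes when at least one copy is occupied --- the resulting bond field on $\mathbb Z^{2}$ is dominated by i.i.d.\ bond percolation with parameter
\[
p^{\ast}=1-(1-p)^{2}.
\]
A one--line computation shows that the hypothesis $\varepsilon<\varepsilon_{0}$, i.e.\ $\varepsilon+\tfrac12\varepsilon^{2}<\tfrac12$, is \emph{exactly} the condition $(1-p)^{2}>\tfrac12$, equivalently $p^{\ast}<\tfrac12=p_{c}(\mathbb Z^{2},\text{bond})$. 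Thus $\omega$ is dominated by \emph{subcritical} planar bond percolation.

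Finally I would collect the consequences. For $\Lambda$ large enough to contain a box $B_{d}$ about the origin with a small margin, any $\omega$--path from $0$ to $\partial\Lambda$ contains a sub--path from $0$ to $\partial B_{d}$ using only interior edges, so
\[
\mathbb P^{\text R}_{\Lambda}(0\leftrightarrow\partial\Lambda)\le
\mathbb P^{\text R}_{\Lambda}(0\leftrightarrow\partial B_{d}\ \text{via interior edges})\le
\theta_{p^{\ast}}(d),
\]
where $\theta_{p^{\ast}}(d)$ is the probability that the origin reaches distance $d$ in Bernoulli$(p^{\ast})$ percolation on $\mathbb Z^{2}$. Since $p^{\ast}<p_{c}(\mathbb Z^{2})$, the classical theory of subcritical planar percolation gives $\theta_{p^{\ast}}(d)\le C\mathrm e^{-cd}$ for some $c>0$. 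Letting $\Lambda\uparrow\mathbb Z^{2}$ and then $d\to\infty$ yields $\mathbb E_{\text R}(\eta_{0})=0$, and Proposition~\ref{REH} then delivers uniqueness of the limiting Gibbs state. The same exponential bound, pushed through the standard estimate of truncated spin correlations by two--point connectivities in this coupling (together with the fact that the conditional law of the $g$--field given the local $\eta$'s is a nondegenerate Gaussian), gives $|\mathrm{Cov}(\eta_{0},\eta_{x})|\le C\mathrm e^{-c|x|}$ and likewise for the graffiti and mixed correlations --- the advertised rapid decay.

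\textbf{Anticipated main obstacle.} The algebraic identification of the threshold with $p_{c}=\tfrac12$ is immediate once the representation is in place; the delicate points are rather (i) turning the ``one fixed interior edge against an arbitrary background'' bound of Proposition~\ref{PIY} into an honest stochastic--domination statement that is uniform in $\Lambda$ and valid up to the boundary layer, and (ii) extracting genuine \emph{exponential} decay (not merely absence of percolation) in two dimensions and transferring it cleanly to the full (spin \emph{and} field) correlation functions. I expect (ii) to require the most care in a complete write--up.
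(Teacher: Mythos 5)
Your proposal follows essentially the same route as the paper: localize the magnetization to the event that the origin's cluster reaches the boundary (via the ACCN/ES facts), dominate the occupied-edge field by independent Bernoulli percolation using the conditional bound of Proposition~\ref{PIY}, merge the two directed copies of each bond so that $\varepsilon<\varepsilon_{0}$ becomes exactly $1-(1-p)^{2}<\tfrac12=p_{c}(\mathbb Z^{2})$, and then conclude via Proposition~\ref{REH} together with the standard bound of two-point functions by cluster connectivities. Your algebraic identification of $\varepsilon_{0}+\tfrac12\varepsilon_{0}^{2}=\tfrac12$ with the subcriticality condition is precisely the computation implicit in the paper's remark that $\varepsilon_{0}$ bounds $\mathbb P(\mathbf e_{ab}\cup\mathbf e_{ba})$ by $\tfrac12$.
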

\begin{proof}
Using the result of Proposition \ref{PIY}, we shall compare the
described graphical representation with independent bond percolation
on $\mathbb Z^{2}$.  We start with a well known -- and readily
derivable -- result: Let $Y_{1}, \dots Y_{N}$ denote an array of
Bernoulli random variables with collective behavior described by the
measure $\mu_{\mathbf{Y}}$ and let $\mathbf{Y}_{\hat{Y}_{j}}$ 
denote a configuration on the compliment of $Y_{j}$. Let us now introduce
\begin{eqnarray}
\nonumber
 p_{j} = \max_{\hspace{3 pt}\mathbf{Y}_{\hat{Y}_{j}}} \mathbb
P_{\mathbf{Y}}(Y_{j} = 1\mid \mathbf{Y}_{\hat{Y}_{j}})
\end{eqnarray}
\noindent
to denote the maximal conditional probability of
observing $\{Y_{j} = 1\}$.  Finally, let $X_{1}, \dots X_{N}$ denote a
collection of independent Bernoulli random variables with parameters
$p_{1}, \dots , p_{N}$.  Then, denoting the independent measure by
$\mu_{\mathbf{X}}$, we have
\begin{equation}
\nonumber
\mu_{\mathbf{X}} \geq \mu_{\mathbf{Y}}.
\end{equation}
\noindent
Thus we may bound the probabilities of increasing events in the
graphical representation by the corresponding probabilities from
independent percolation on $\mathbb Z^{2}$ with bond occupation
probabilities determined by the $\varepsilon$ from Proposition
\ref{PIY}.  However, we must note that the relevant percolation problem
has multiple types of edges.
The $\varepsilon_{0}$ in the statement of
this proposition bounds the probability of the event
$\mathbf{e}_{ab}\cup\mathbf{e}_{ba}$ by $\frac{1}{2}$.
If the
clusters of the featured representation fail to percolate, then, as
$\Lambda \nearrow \mathbb Z^{2}$, the origin is disconnected from the
boundary with a probability tending to one.  As discussed just prior
to Proposition \ref{PIY}, this implies $\mathbb E_{\text{R}}(\eta_{0})
= 0$ and by Proposition \ref{REH}, uniqueness is established.
Under the condition $\varepsilon < \varepsilon_{0}$, exponential decay
of correlations can also be established.  We will be content with the
decay of the two point function.  The problem of general correlations
under these conditions has been treated elsewhere
\cite{MC1,MC2}. In particular, for $i,j\in\mathbb Z^{2}$,
$\mathbb E(\eta_{i}\eta_{j})$ in the unique infinite volume measure is
bounded, in finite volume approximations by the probability that $i$
and $j$ reside in the same cluster.  For $\varepsilon <
\varepsilon_{0}$, this decays exponentially in $|i-j|$ uniformly in
$\Lambda$ for $|\Lambda|$ sufficiently large.
\end{proof}

\noindent
We now turn our attention to an alternative criterion for high
temperature behavior which may also be of relevance in a sociological
context: Sparsity of agents.  Mathematically, this pertains to the
situation where $\alpha$ is large and negative ($-\alpha \gg 1$) which
\textit{a priori} suppresses the fraction of agent occupied sites.
Our arguments will initially be based on more primitive notions of
percolation and, following the methods of \cite{CNPR} (see also
\cite{C,CMW}) could, perhaps, be completed along these lines.
However, it turns out to be far simpler to appeal to the graphical
representation just employed for the final stage of the argument.  We
start with the relevant notion of percolation and connection. In the
context of site percolation on $\mathbb Z^{2}$, we may define various
notions of connectivity \cite{Grimmett}.  
Here we define $\diamond$--connectivity to
indicate connection between sites that are no more than two lattice
sites away. This is not to be confused with $\ast$--connectivity which
does not consider a pair of sites to be connected if they are
separated by two units in the vertical or horizontal direction.  We
denote by $p_{c}^{\diamond}$ the threshold for $\diamond$--percolation
on $\mathbb Z^{2}$.  Standard arguments dating to the beginning of the
subject show that $p_{c}^{\diamond} \in (0,1)$; in particular,
$p_{c}^{\diamond}$ is less than the threshold for ordinary, or even
$\ast$--connected, percolation and mean--field type bounds readily
demonstrate that $p_{c}^{\diamond} > \frac{1}{12}$.

The next proposition concerns the relative abundance of, e.g., red sites under the condition $-\alpha \gg 1$ with the other parameters fixed.  

\begin{proposition}
\label{JUO}
Consider the GI--system with parameters $\lambda$, $K$ and $J$ fixed.
Then there is a $\delta_{\alpha} = \delta_{\alpha}(J,K,\lambda)$ with
$\delta_{\alpha} \to 0$ as $\alpha\to-\infty$ such that uniformly in
volume and boundary conditions, for any site $i$ that is away from the
boundary
$$
\mathbb P^{\odot}_{\Lambda}(\eta_{i} = 1) < \delta_{\alpha}.
$$
\end{proposition}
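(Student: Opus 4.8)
The statement asserts that heavily penalizing occupation ($-\alpha\gg 1$) makes agents rare uniformly in the volume, in the boundary data, and in the choice of (interior) site. The economical route is to \emph{integrate out the graffiti field first} and then run a one--site estimate on the resulting agent--only system. The key structural fact is that $\mathscr H$ contains no $g$--$g$ coupling, so that, conditionally on the agent configuration, the $g_i$ are \emph{independent} Gaussians: $g_i$ has variance $1/(2\lambda)$ and mean $b_i/(2\lambda)$ with $b_i := K\eta_i + J\sum_{j\sim i}\eta_j$. Performing these Gaussian integrals site by site shows that the marginal law of the agent variables is a finite--volume Gibbs measure for the effective Hamiltonian
$$
-\mathscr H_{\mathrm{eff}} \;=\; \alpha\sum_i \eta_i^2 \;+\; \frac{1}{4\lambda}\sum_i\Big(K\eta_i + J\sum_{j\sim i}\eta_j\Big)^{2} \;+\; \mathrm{const},
$$
where the constant (a power of $\sqrt{\pi/\lambda}$) plays no role below. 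For a site $i$ at distance at least two from $\partial\Lambda$, every $g$ within distance one of $i$ is interior, hence genuinely integrated out, so this local description near $i$ is correct for \emph{every} boundary condition $\odot$ (including the random, ``red'' one).

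Next I would expand the squares and separate, in $\mathscr H_{\mathrm{eff}}$, the on--site (``activity'') contribution from the interactions. The terms proportional to $\eta_i^2$ combine into an effective activity $\alpha + c_1$ with $c_1 = c_1(J,K,\lambda)<\infty$; all remaining terms that involve $\eta_i$ are finitely many, of finite range (nearest-- and next--nearest--neighbour, the latter generated by the $b_j^2$ with $j\sim i$), with coefficients bounded once $J$, $K$, $\lambda$ are fixed. Since $|\eta|\le 1$, the total absolute value of these remaining terms at $i$ is at most some $c_2(J,K,\lambda)<\infty$, uniformly over the surrounding configuration and over $\odot$.

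Then comes the one--site conditional estimate in the effective model. Conditioning on the agent spins $\eta_{\Lambda\setminus i}$ at all sites other than $i$ (which absorbs the boundary data and every other spin), the Boltzmann weights give
$$
\frac{\mathbb P^{\odot}_{\Lambda}(\eta_i = 1 \mid \eta_{\Lambda\setminus i})}{\mathbb P^{\odot}_{\Lambda}(\eta_i = 0 \mid \eta_{\Lambda\setminus i})}
\;=\; \exp\big(\alpha + c_1 + (\text{interaction terms at }\eta_i=1)\big)
\;\le\; e^{\alpha + C},
\qquad C := c_1 + c_2 .
$$
Hence $\mathbb P^{\odot}_{\Lambda}(\eta_i = 1 \mid \eta_{\Lambda\setminus i})\le e^{\alpha + C}$ pointwise, and averaging yields $\mathbb P^{\odot}_{\Lambda}(\eta_i = 1)\le e^{\alpha + C}=:\delta_\alpha(J,K,\lambda)$, which tends to $0$ as $\alpha\to-\infty$ and is uniform in $\Lambda$, in $\odot$, and in the position of $i$ subject to its being away from the boundary. (If $\eta_i\in\mathbb Z$ rather than $\{0,\pm1\}$ the bound only improves, the activity being $-\alpha\eta_i^2$.)

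The one genuine obstacle---and the reason for eliminating the $g$'s up front---is that the \emph{naive} one--site estimate, conditioning on everything including the graffiti field, does not close: the ratio then carries a spurious factor $\exp\big(J\sum_{j\sim i}g_j\big)$ whose conditional expectation is unbounded because the $g$'s are unbounded. Integrating the $g$'s out beforehand removes this, and is legitimate precisely because there are no quadratic cross terms among them; one only has to ensure that the $g$'s surrounding $i$ are interior, which is exactly the hypothesis that $i$ be away from $\partial\Lambda$. (One could instead run a Peierls/percolation argument directly on $\mathscr H_{\mathrm{eff}}$---the ``more primitive notions of percolation'' alluded to in the text---but the one--site bound above is the shortest path.)
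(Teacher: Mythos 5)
Your proof is correct, and it takes a genuinely different route from the paper's. The paper keeps the graffiti field in play: it uses the FKG property (established in Section 3.2.1) to dominate $\mathbb P^{\odot}_{\Lambda}(g_j>\gamma)$ by the worst-case conditional law in which $\eta_j$ and all its neighbours are red, obtains an explicit Gaussian tail bound $\delta_\gamma$, and then performs the one--site estimate for $\eta_i$ only on the ``good'' event that the five relevant fields are at most $\gamma$; the resulting bound is $\delta_\alpha = 5\delta_\gamma + O(\mathrm{e}^{(4J+K)\gamma}\mathrm{e}^{\alpha})$, and one must let $\gamma$ grow with $|\alpha|$ to send it to zero. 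You instead exploit the absence of $g$--$g$ couplings to integrate the Gaussian fields out exactly, arriving at an agent-only Gibbs measure with activity $\alpha + c_1$ and bounded finite-range interactions, for which the one-site conditional estimate closes immediately and yields the cleaner, explicitly exponential bound $\delta_\alpha = \mathrm{e}^{\alpha + C(J,K,\lambda)}$ with no appeal to FKG or to a truncation parameter. You also correctly identify the obstruction (the unboundedness of the conditioned fields) that your integration sidesteps and that the paper's $\gamma$-truncation is designed to tame, and you handle the boundary correctly by insisting that every $g$ coupled to $\eta_i$ be an interior integration variable. What each approach buys: yours is shorter, sharper, and self-contained; the paper's is more robust (it would survive $g$--$g$ couplings or non-Gaussian field weights, where exact marginalization is unavailable) and sits naturally within the monotonicity framework the paper has already built and reuses in the subsequent theorem.
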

\begin{proof}
Here we employ the preliminary (red $\succeq$ blue) FKG properties
that were established earlier, in \ref{FKGprop}.  
We start with a $\gamma > 0$ (and
somewhat ``large'') and, for $j\in \Lambda$ not too near the boundary,
we consider $\mathbb P^{\odot}_{\Lambda}(g_{j} > \gamma)$.  By the FKG
property, this probability is less than the corresponding conditional
one given that $\eta_{j} = 1$ and that $\eta_{k} = 1$ for all $k$ that are
neighbors of $j$.  This conditional probability is given by a
definitive expression:

\begin{equation}
\nonumber
\mathbb P^{\odot}_{\Lambda}(g_{j} > \gamma) \leq 
\frac{
\int_{g > \gamma}\text{e}^{+(4J + K)g}\text{e}^{-\lambda g^{2}}dg}
{\int_{g}\text{e}^{+(4J + K)g}\text{e}^{-\lambda g^{2}}dg}
:= \delta_{\gamma}
\end{equation}

\noindent
The above can be expressed directly via the error function but in
any case, as is not hard to show,
\begin{equation}
\nonumber
\delta_{\gamma} \leq \frac{1}{2}\text{e}^{-\lambda[ \gamma -\frac{4J + K}{2\lambda}]^{2}},
\end{equation}
\noindent
as long as $\gamma \geq (4J + K)/2\lambda$, which also quantifies
how large $\gamma$ must be. Provided $i$ is a few spaces away from the
boundary, we note that $1 - 5\delta_{\gamma}$ is a valid
estimate of the probability that 
both $g_{i}$ and the $g$--values at the neighbors of $i$
do not exceed $\gamma$.  Let us denote this (good) non--high field
event by $G_{i}$.  Then we may write
\begin{align}
\mathbb P_{\Lambda}^{\odot}(\eta_{i} = 1) & = \mathbb
  P_{\Lambda}^{\odot}(G_{i})\mathbb P_{\Lambda}^{\odot}(\eta_{i} =
  1\mid G_{i}) + \mathbb P_{\Lambda}^{\odot}(G_{i}^{c})\mathbb
  P_{\Lambda}^{\odot}(\eta_{i} = 1\mid G_{i}^{c}) \notag \\ 
&\leq
  5\delta_{\gamma} + \frac{\text{e}^{(4J +
      K)\gamma}\text{e}^{\alpha}} {1 + \text{e}^{(4J +
      K)\gamma}\text{e}^{\alpha} + \text{e}^{-(4J + K)\gamma}
    \text{e}^{\alpha}} := \delta_{\alpha},
\end{align}
where, in various stages we have employed worst case scenarios.
Clearly, for fixed $(J, K, \lambda)$ we may choose $\gamma$ large so
that $\delta_{\gamma}$ is small, and $\alpha$ negative and
large in magnitude, so that $\delta_{\alpha}$ is small.
\end{proof}

\begin{theorem}
Consider the GI--System and suppose that $-\alpha$ is large enough so
that $\delta_{\alpha} < p_{c}^{\diamond}$ as described just prior to
the statement of Proposition \ref{JUO}.  Then there is a unique
limiting Gibbs state featuring rapid decay of correlations.
\end{theorem}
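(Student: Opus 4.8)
\noindent
The plan is to read uniqueness off the criterion of Proposition \ref{REH}, and to verify that criterion through a percolation estimate in which sparsity of agents plays the role that $\lambda\gg 1$ played in Theorem \ref{PKS}. Since the Dobrushin route is again blocked by the unbounded field, the conclusion will pass through the graphical representation: it suffices to show that, uniformly in finite volume (in particular under the red boundary condition), the probability that the origin is connected to the boundary in that representation tends to zero as $\Lambda\nearrow\mathbb Z^2$. Granting this, the familiar mechanism that a cluster isolated from the boundary contributes nothing to the magnetization (cf.\ \cite{ES}) forces $\mathbb E_{\text R}(\eta_0)=0$, hence uniqueness by Proposition \ref{REH}, while the same bound gives $\mathbb E(\eta_i\eta_j)\le\mathbb P(i\leftrightarrow j)$ and therefore exponential decay of the two--point function; the corresponding statement for general correlations is then available from \cite{MC1,MC2}.

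There are two reductions that make the percolation statement concrete. First, Proposition \ref{JUO} must be upgraded from red sites to agent--occupied sites of either colour: the symmetry $(\eta,g)\mapsto(-\eta,-g)$ together with $\mu_{\text R}\ge\mu_{\text B}$ gives $\mathbb P_{\text R}(\eta_i=-1)\le\mathbb P_{\text B}(\eta_i=-1)=\mathbb P_{\text R}(\eta_i=1)$, so that $\mathbb P^{\odot}_{\Lambda}(\eta_i\neq 0)$ is still below $p_c^{\diamond}$ for $-\alpha$ large (at worst after a factor of two is absorbed into $\delta_\alpha$, which the hypothesis permits). Second, the geometric point: a site with $\eta_i=0$ has $r_i=|\eta_i|=0$, so no occupied edge of the graphical representation can point into it, and hence no connected path can contain two consecutive void sites; the agent--occupied sites visited by a connecting path are therefore consecutively at lattice distance at most two. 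Thus any connection from the origin to the boundary forces a $\diamond$--connected chain of occupied sites between them -- it is exactly to accommodate the ``occupied--void--occupied'' step that $\diamond$--connectivity was defined with range two -- and it is enough to prove that the occupied sites do not $\diamond$--percolate.

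To obtain the latter I would invoke the stochastic--domination lemma already used in the proof of Theorem \ref{PKS}, now applied to the \emph{agent} configuration: one shows that the conditional probability that a site is occupied, given the agent configuration elsewhere, is uniformly small, so that the occupied--site process is dominated by independent site percolation on $\mathbb Z^2$ -- read relative to $\diamond$--connectivity -- with a parameter $\delta_\alpha\to 0$ as $\alpha\to-\infty$. Under the hypothesis $\delta_\alpha<p_c^{\diamond}$ this percolation is subcritical, hence almost surely there is no infinite $\diamond$--cluster and the $\diamond$--cluster of the origin has exponentially small tails; transferring back through the two reductions, the cluster of the origin in the graphical representation is almost surely finite in the limit, which is what was required, and the exponential tails also deliver the exponential decay of $\mathbb P(i\leftrightarrow j)$.

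The step I expect to be the main obstacle is the domination estimate, because of the interplay with the continuous, unbounded graffiti field. Conditioning a site's occupation on the \emph{full} surrounding configuration is of no use: an atypically large neighbouring value of $g$ can render occupation at the site likely no matter how negative $\alpha$ is. The remedy is to condition only on the discrete agent variables $\{\eta_j\}_{j\neq i}$; given these, each $g_j$ is Gaussian with mean bounded in modulus by $(4J+K)/2\lambda$, so the Gaussian integrability of the fields bounds the conditional occupation probability uniformly by a constant times $\mathrm e^{\alpha}$ -- this is the honest version of the marginal estimate of Proposition \ref{JUO}, and it feeds the domination lemma directly. Making this precise, together with the bookkeeping in the directed / double--cover cluster structure of the graphical representation that underlies the geometric reduction of the second paragraph, is the technical heart of the argument; the alternative of proceeding entirely through the primitive percolation of \cite{CNPR} is available but, as remarked above, less economical. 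Everything else repeats steps already carried out for Theorem \ref{PKS}, the monotonicity of Proposition \ref{3point4} being used wherever the comparisons need to be made monotone.
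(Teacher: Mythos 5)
Your argument is correct and follows essentially the same route as the paper's: the graphical representation with the isolated--cluster/magnetization mechanism feeding Proposition~\ref{REH}, the observation that the agents within any single cluster form a $\diamond$--connected set (void sites can only serve as intermediate field--nodes), and domination of the occupied sites by subcritical independent $\diamond$--percolation via Proposition~\ref{JUO}, with exponential decay of connectivities giving the decay of correlations. The one place you diverge --- upgrading Proposition~\ref{JUO} from red sites to occupation by either colour, at the cost of a factor of two in $\delta_{\alpha}$ --- is sidestepped in the paper by working in the red boundary condition, where every agent in the cluster containing the boundary is necessarily red, so that only the \emph{red} sites need fail to $\diamond$--percolate and the hypothesis $\delta_{\alpha}<p_{c}^{\diamond}$ suffices exactly as stated; your version is not a gap, but it proves the theorem under a marginally stronger (and easily absorbed) condition.
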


\begin{proof}
By the dominance principle stated at the beginning of the proof of
Proposition \ref{PKS} if $\delta_{\alpha} < p_{c}^{\diamond}$, the red
agents fail to exhibit $\diamond$--percolation regardless of boundary
conditions. Now consider, in the context of the bond--representation, the event
that the origin is connected to $\partial \Lambda$ in the red boundary
conditions, which represents the sole non--vanishing contribution to
$\mathbb E^{R}_{\Lambda}(\eta_{0} = + 1)$.  The bonds of any path
connecting the origin to $\partial \Lambda$ within this cluster may be
envisioned as alternating connections between agents and fields; the
connection to the red boundary ensures that both types of entities
take on the red color.  In particular, all the agents in the cluster
are red so that these agents must (at least) form a
$\diamond$--connected cluster.  Hence, in finite volume, we may bound

\begin{equation}
\nonumber
\mathbb E_{\Lambda}^{\text{R}}(\eta_{0} = +1) \leq \mathbb
P_{\Lambda}^{\text{R}}(0 \underset{\diamond, \text{R}}{\leadsto}
\partial \Lambda)
\end{equation}

\noindent
where $\{0 \underset{\diamond,\text{R}}{\leadsto} \partial \Lambda\}$
is the event of a red $\diamond$--connection between the origin and
the boundary.

When the red agent occupation probabilities are dominated by
independent sites with parameter $\delta_{\alpha} < p_{c}^{\diamond}$,
such probabilities decay exponentially.  Evidently, in the limiting
state, the ``magnetization'' vanishes which by Proposition \ref{REH}
implies a unique state.  Similarly, exponential decay of correlations
is implied by exponential decay of $\diamond$--connectivities.
\end{proof}

\section{The Mean Field Rendition} \label{S:MFHamiltonian}
\noindent
In the previous section, we showed that a phase transition between
well--mixed and clustering configurations exists for the general
Hamiltonian in Eq.\,\eqref{E:Hamiltonian}.  However, finding the exact
or even approximate values of the $J,K,\alpha,\lambda$ parameters for
which the well--mixed to clustering transition occurs is in general a
difficult task.  Moreover, the nature of the transition is not
elucidated by the techniques of the preceding section.  On 
the basis of informal simulations described in the Appendix 
and certain other
considerations it appears that the transition may be discontinuous or
second order depending on where the phase boundary is crossed.  This
cannot be proved in the context of the present model.  We thus
introduce a \textit{mean--field Hamiltonian}, where instead of
nearest--neighbor interactions we consider an all--to--all
(interaction) coupling that is rescaled by the number of sites.
Models of this sort are often referred to as \textit{complete--graph}
systems. The mean--field Hamiltonian allows us to define, in the
thermodynamic limit, a simple mean field free energy per particle.
This free energy can be subjected to exact mathematical analysis
which provides a quantification of the phase transition.  In
particular, we have found that 
the phase boundary between the diffuse states and the
gang--symmetry broken phase can indeed be of either type.

Let us thus consider a lattice of $N$ sites -- where the detailed
geometry is no longer of relevance.  At each site $i$, there is the
same $s_i = (\eta_i, g_i)$ featured in the previous section.  However
now, the Hamiltonian reads
\begin{equation}
\label{MFH}
 - \mathscr{H}^{\textsc{MF}}_{N}(\mathbf{s}) = \frac{1}{N} \sum_{i,j} J \eta_i g_j 
 + \sum_{i} (\alpha
 \eta_i^2- \lambda g_i^2).
 \end{equation}
It is observed that the couplings $J$ and $K$ need no longer be
distinguished.  Indeed, for large $N$, the $g_{i}\eta_{i}$
interaction, and any other \textit{particular} interaction is not of
pertinence.  We now introduce the relevant collective quantities, $n$
$G$, and $b$, obtained via $\eta_i$ and $g_i$, which will allow for a
more convenient analysis.  In particular, if $N^+$ and $N^-$ designate
the number of red and blue lattice agents, respectively, we define
$$
b := \frac{N^+ + N^{-}}{N}
\hspace{1 cm}
\text{and}
\hspace{1 cm}
n :=
\frac{N^{+} - N^-}{N},
$$ as the fraction of the lattice covered by agents of any type and
the excess -- positive or negative -- of this fraction that is of the
red type.  Moreover, we introduce $G = \frac{1}{N} \sum_{i} g_i$ 
to be the average graffiti imbalance.  In this
context, $n$ and $G$ are akin to magnetizations in a standard
one--component spin model, with $n$ corresponding to magnetization in
the agent variables and $G$ in the graffiti field.  For occasional use, we
also define $n^{\pm} = {N^{\pm}}/{N} \leq 1$.  We remark that in these
definitions there is an implicit $N$ dependence which is notationally
suppressed.

\subsection{The partition function}
\noindent
In the forthcoming, we will evaluate, asymptotically, the mean field
partition function $\mathcal{Z}^{\text{MF}}$ defined in accord with the
previous section as the partition sum $
\mathcal{Z}^{\text{MF}}_{N} = \sum_{\mathbf{s}}
e^{-\mathscr{H}_{N}^{\text{MF}}(\mathbf{s})}.  $ Here, for reasons which
will soon become clear, we will treat the graffiti field variables 
slightly differently.  We define
$$
d\mu_{g_{i}} := \sqrt{\frac{\lambda}{\pi}}\text{e}^{-\lambda g_{i}^{2}},
$$
as the normalized Gaussian measure for the individual field variables.  
Letting $\mathbf{g}$ denote the array of these random variables
we may write
\begin{equation}
\nonumber
\mathcal{Z}^{\text{MF}}_{N} :=  \mathbb E_{\mathbf{g}}
\left(
\sum_{\mathbf{\eta}}\text{e}^{J\sum_{i,j}g_i\eta_j +\alpha\sum_{i}\eta_{i}^{2}}
\right),
\end{equation}
\noindent 
where $\mathbb E_{\mathbf{g}}(\cdot)$ denotes expectation with
respect to the free (independent) ensemble of Gaussian random
variables and $\sum_{\mathbf{\eta}}$ denotes the rest of the partition
sum i.e., over the agent configurations.  It is acknowledged that this
differs from the prior definitions by a multiplicative factor of
$[\lambda/\pi]^{N/2}$ which, of course, is inconsequential.

It is at this point, with the current formulation, that the
advantage of the all--to--all coupling is manifest: For any
$\mathbf{s}$ (and any $N$) the quantity in the exponent depends only
on $n$, $b$ and $G$: $\mathcal{Z}^{\text{MF}}_{N} = \mathbb E_{\mathbf
  g}(\sum_{\mathbf{\eta}}\text{e}^{N(JnG + \alpha b)})$.  Concerning
the agent configurations, to perform the summation, we must multiply
the integrand by the number of ways of arranging $N^{+}$ red sites and
$N^{-}$ blue sites among $N$ possible positions.  We denote this
object by $W_{N}(b,n)$ which is given, explicitly, by the trinomial
factor
\begin{equation}
\nonumber
W_{N}(b,n) = \binom{N}{N^{+},N^{-}} =
\binom{N}{\frac{1}{2}N(b+n),\frac{1}{2}N(b-n)}.
\end{equation}
\noindent
As for the graffiti field configurations, it is noted that since $G$ is
proportional to a sum of Gaussian random variables, it is itself a
Gaussian.  Indeed the mean of $NG$ is zero and the variance is
$N[2\lambda]^{-1}$.  Thus the expectation over $\mathbf{g}$ can be
replaced with the expectation over $NG$ leading to
\begin{equation}
\nonumber
\mathcal{Z}^{\text{MF}}_{N} =\sum_{n,b}\mathbb
E_{NG}[W_{N}(b,n)\text{e}^{N[JbG + \alpha b] }]
\propto
\sum_{n,b}W_{N}(n,b)\text{e}^{N[JnG + \alpha b - \lambda G^{2}]} dG
\end{equation}
with the constant of proportionality independent of $N$.
Now, on the basis of the Stirling approximation, 
\begin{equation}
\nonumber
W_{N}(b, n)
\approx 
\left[\left(\frac{b+n}{2} \right)^{\frac{b+n}{2}}
\left(\frac{b-n}{2}\right)^{\frac{b-n}{2}} (1- b)^{1-b} \right]^{-N}.
\end{equation}
Thus, modulo lower order terms, we have
$\mathcal{Z}^{\text{MF}}_{N} \approx\sum_{n,b,G}\text{e}^{-N\Phi(b,n,G)}$
where $\Phi$, the free energy function, is given by 
\begin{equation}
\label{OIQ}
e^{- \Phi(b,n,G)} := e^{(JnG + \alpha b -\lambda G^2)} \left[ 
\left(\frac{b+n}{2}\right)^{\frac{b+n}{2}} \left(\frac{b-n}{2} \right)
^{\frac{b-n}{2}} (1- b)^{1-b} \right]^{-1}.
\end{equation}

\noindent
In accordance with standard asymptotic analysis
\begin{equation}
\nonumber
\lim_{N\to\infty}-\frac{1}{N}\log \mathcal{Z}^{\text{MF}}_{N} =
\min_{b,n,G}\Phi (b,n,G) := F_{\text{MF}}
\end{equation}
\noindent
where $F_{\text{MF}} = F_{\text{MF}}(J,\alpha,\lambda )$ is the
(actual) limiting free energy per site.  While various aspects of the
above scenario for all--to--all coupling models have been long known
and certain cases explicitly proven \cite{E}, there is a general
theorem to this effect that is sufficient for our purposes,
presented in Section 5 of \cite{BC}.
Thus the efforts of a mean--field analysis may be summarized as
follows: we are to minimize $\Phi(b,n,G)$ and the values of $b$, $n$ and $G$
at the minima -- as a function of the couplings -- will determine the
various \textit{phases} of the system.  Even in this simplified
context, as will be seen, the phase transitions can be dramatic.

\subsection{The mean--field equations}
The free energy function is obviously well behaved except at the
extreme values of the variables.  In particular, we would like to
assume that $0 < b < 1$ and $-b < n < +b$ where the strict
inequalities imply that the function is smooth.  Now a direct
calculation of the asymptotics makes it clear that no minimum could
possibly occur near the $b=0,1$ and $b= \pm n$ boundaries. Thus, we can
confine attention to the interior of the above $b$ and $n$ intervals and
proceed by differentiation of $\Phi(b,n,G)$ as defined in Eq.(\ref{OIQ}).
Thus we arrive at the \textit{mean--field equations}:
\begin{align}
-\frac{\partial \Phi}{\partial G} &= J n - 2 \lambda G = 0,
\label{VSR} \\ 
-\frac{\partial
\Phi}{\partial b} &= \alpha + \log(1-b) - \frac{1}{2} \log \left(
\frac{b^2 - n^2}{4} \right) = 0,
\label{VSQ}\\ 
-\frac{\partial \Phi}{\partial n} &= J G
- \frac{1}{2} \log \left( \frac{b+n}{b-n} \right) = 0,
\label{VDP}
\end{align}
Free energy minimization only occurs for values of $(n,b,G)$ that
satisfy the above system.  However, other stationary points for $\Phi$
can -- and, e.g., in the case of discontinuous transitions generically will
-- occur so we must proceed with some caution.  It is noted that
Eq.(\ref{VSR}) allows us to eliminate $G$ altogether.  Defining $\mu =
\frac{J^2}{2\lambda}$, we rewrite Eqs.\,\eqref{VSQ} and \eqref{VDP} as
\begin{align}
4e^{2 \alpha} &= \frac{b^2-n^2}{(1-b)^2} \label{E:mf_b2},\\
\mu n &= \frac{1}{2}\log \left( \frac{b+n}{b-n} \right) \label{E:mf_n2}.
\end{align}
The analysis of this system, along with the minimization it is
supposed to imply will constitute the bulk of the remainder of this
work.  Foremost, it is noted that the presentation in
Eqs.(\ref{E:mf_b2}) and (\ref{E:mf_n2}) are, for all intents and
purposes, the same as would have been obtained from the mean--field version
of the so--called BEG model \cite{BEG}.  As such, some aspects of the
current problem have been treated in \cite{EOT}.  However, the
specifics in \cite{EOT} are not readily translated into that of the
current work and, moreover, our conclusions are achieved by
straightforward methods of analysis.

Our investigation will proceed as follows: It is evident from physical
considerations, and the subject of an elementary mathematical theorem
proved at the end of this subsection, that as the parameters sweep
through their allowed values, a phase transition occurs from the
circumstances where $\Phi$ is minimized by $n = 0$ to those where
$n\neq 0$ is required.  First, we will follow the consequences of the
\textit{assumption} that this happens continuously: i.e., that the
minimizing $n$ goes to zero continuously through small values.  In the
leading order, this provides a purported phase boundary which we
denote by the LSP--curve.
Considerations of higher order terms in the vicinity of the LSP--curve
yield that for certain portions of the curve, the stipulation is
self--consistent and for the rest, it is not.  Detailed analysis will
show that the former is completely consistent.  In particular these
calculations correspond to the true minima of the free energy
function.  By contrast, the latter (non--self--consistent) portion is
a consequence of a discontinuous transition which has ``already''
occurred at prior values of the parameters.  In particular, the
perturbative analysis is highlighting a local extremum and not the
true minimum.

We conclude this subsection with the derivation of the LSP--curve --
as well as the introduction of notation that will be used throughout
the reminder of the analysis.  Assuming $n = 0$, Eq.(\ref{E:mf_n2}) is
trivially satisfied and Eq.(\ref{E:mf_b2}) defines the ``ambient''
value of $b$ which we denote by $b_{R}$:
\begin{equation}
\nonumber
b_R := \frac{2 e^\alpha}{1 + 2 e^{\alpha}}.
\end{equation}
\noindent
Note that $(b = b_{R}, n = G = 0)$ is \textit{always} a solution to
the mean--field system. For simplicity we consider $b_R$ and $\mu$ as
the relevant parameters for our system for the remainder of this
paper.  Let us now consider slight perturbations of $b$ about $b_R$
and of $n$ about zero.  We thus write $b = b_R(1 + \Delta)$ with
$\Delta \ll b_R$ and $|n| > 0$ with $n \ll 1$ and obtain the following
approximations by expanding Eq.(\ref{E:mf_b2}) to lowest order
\begin{equation}
\label{E:relNandDelta}
n^2 \approx 2\Delta \frac{b_R^{2}}{1- b_R},
\end{equation}
while Eq.(\ref{E:mf_n2}), written to a higher approximation than will
be immediately necessary, gives us
\begin{equation}
\label{KIT}
\mu n \approx \frac{n}{b_R} - \frac{n\Delta}{{b_R}} + \frac{n^3}{3{b_R}^3}.
\end{equation}
We pause to observe that Eq.(\ref{KIT}) and, in general,
Eq.(\ref{E:mf_n2}), have the symmetry property that with all other
quantities fixed, if $n$ is a solution then so is $-n$.  Thus, we
might as well assume that $n \geq 0$.  Indeed, we shall adhere to this
convention throughout.
Assuming now that $\mu$ is variable while $b_{R}$ is fixed, the
$n\to 0$ limit of Eq.(\ref{KIT}) and Eq.(\ref{E:relNandDelta}) yields
the tentative phase boundary
\begin{equation}
\label{RDA}
\mu_{S}(b_{R}) = \frac{1}{b_{R}}.
\end{equation}
This \textit{defines} the LSP--curve; the correspondingly tentative
conclusion is that $n > 0$ and $b > b_{R}$ occurs for $\mu > \mu_{S}$
while for $\mu \leq \mu_{S}$, $n = b - b_{R} = 0$.  However, the
viability of these tentative conclusions depends, in a definitive
fashion, on the value of $b_{R}$.  In particular an analysis of the
higher order terms in Eq.(\ref{KIT}) testifies that this picture
cannot possibly be correct for $b_{R} < \frac{1}{3}$; this is the subject of
our next subsection.  However, a more difficult analysis shows that
this picture is indeed correct for $b_{R} \geq \frac{1}{3}$ which is
the subject matter of the final subsection.  First, we must attend to
some necessary details.

\subsubsection{Preliminary analysis}
In this subsection we will establish some basic properties of the
model such as the existence of high-- and low--temperature phases
along with various monotonicity properties.  In particular we show
that at fixed $b_{R}$, the quantity $n$, assumed to be non--negative,
is non--decreasing with $\mu$ and strictly increasing whenever it is
non--zero.  For the benefit of our physics readership, such
contentions might typically be \textit{assumed} and consequently, the
entire subsection could be skipped on a preliminary reading.  However,
it is remarked that in the normal (physics) course of events, such
questions are most often settled by direct perturbative calculation.
Even for continuous transitions, on some occasions, additional
justification is actually required.
Sometimes, as in the present work, when the transition is
discontinuous, the relevant calculations simply cannot be done
analytically and then, indeed, one must rely more heavily on
abstract methods.

In what follows, we shall work with the free energy function given by
Eq.(\ref{OIQ}) with $G$ eliminated in favor of $n$ according to
Eq.(\ref{VSR}) and working with the parameters $\mu$ and $b_{R}$.  
For simplicity, this will be denoted by $\Phi_{b_{R},\mu}(b,n)$ but
with subscripts omitted unless absolutely necessary.  Thus $\Phi(b,n)$
is now notation for the function
\begin{align}
\Phi_{b_{R},\mu}(b,n) :=
-\frac{1}{2}\mu n^{2} -\alpha(b_{R}) b&
\notag
\\
+\left(
\frac{b+n}{2} 
\right)
\log& \left(
\frac{b+n}{2}
\right)
+\left(
\frac{b-n}{2} 
\right)
\log \left(
\frac{b-n}{2} 
\right)
+(1-b)\log(1-b).
\end{align}
It is clear that the minimum of $\Phi(b,n)$ corresponds to the minimum
of the original three variable free energy function $\Phi(n,b,G)$.  
In the following, will use the notation $n(\mu)$ (with $n(\mu)
\geq 0$) as though this defines an unambiguous function.  Of course in
the case of phase coexistence, this will not be true.  In general,
then, $n(\mu)$ will stand for a representative from the set of
minimizers at parameter value $\mu$ and all of the results in this
subsection hold.
We start with some elementary properties of the phase diagram 
generated by the corresponding minimization problem.
\begin{proposition}
\label{YTZ}
Consider $\Phi(b,n)$ with $b_{R}$ fixed and $\mu$ ranging in
$[0,\infty)$.  Then for all $\mu$ sufficiently large, $\Phi(b,n)$ is
  minimized by a non--zero $n$ and for all $\mu$ sufficiently small,
  $\Phi$ is minimized by $(b_{R},0)$.
\end{proposition}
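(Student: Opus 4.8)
The plan is to handle the two assertions by localizing at the point $(b_{R},0)$ — which, as already noted, is a stationary point of $\Phi$ for \emph{every} $\mu$ — and to exploit the fact that the entire $\mu$-dependence of $\Phi$ is the single additive, $b$-independent, concave-in-$n$ term $-\tfrac12\mu n^{2}$; that is, $\Phi_{b_{R},\mu}(b,n)=\Phi_{b_{R},0}(b,n)-\tfrac12\mu n^{2}$. Two preliminary remarks are needed. First, $\Phi$ extends continuously to the compact triangle $\overline D=\{0\le b\le 1,\ |n|\le b\}$ (every $x\log x$-type term has a finite, in fact vanishing, limit on the relevant parts of $\partial\overline D$), so a global minimizer exists in $\overline D$. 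Second, on the slice $\{n=0\}$ the function $b\mapsto\Phi(b,0)=-\alpha b+b\log(b/2)+(1-b)\log(1-b)$ is strictly convex on $[0,1]$ (its second derivative is $1/b+1/(1-b)$), and by the $n=0$ mean--field equation its unique critical point, hence its minimum, is at $b=b_{R}$; so $\min_{\{n=0\}}\Phi=\Phi(b_{R},0)$.

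\emph{Large $\mu$.} A one-line computation gives $\partial^{2}_{n}\Phi\big|_{(b_{R},0)}=\tfrac{1}{b_{R}}-\mu$, which is negative as soon as $\mu>1/b_{R}=\mu_{S}(b_{R})$. Hence for all such $\mu$ (in particular for all sufficiently large $\mu$) there is a small $\varepsilon\neq0$ with $\Phi(b_{R},\varepsilon)<\Phi(b_{R},0)=\min_{\{n=0\}}\Phi$, so the global minimizer of $\Phi$ over $\overline D$ cannot lie on $\{n=0\}$: it must have $n\neq0$. Reassuringly, the threshold that emerges, $1/b_{R}$, is precisely the LSP value derived earlier.

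\emph{Small $\mu$.} Here I would first show that $\Phi_{0}:=\Phi_{b_{R},0}$ is \emph{uniformly} convex on the (convex) open triangle $D$, with Hessian $\succeq I$. A direct computation gives $\partial^{2}_{b}\Phi_{0}=\tfrac{b}{b^{2}-n^{2}}+\tfrac1{1-b}$, $\partial_{b}\partial_{n}\Phi_{0}=\tfrac{-n}{b^{2}-n^{2}}$, $\partial^{2}_{n}\Phi_{0}=\tfrac{b}{b^{2}-n^{2}}$; the determinant collapses to $\tfrac{1}{(b^{2}-n^{2})(1-b)}$, and since the smallest eigenvalue of a positive $2\times2$ matrix is at least $\det/\operatorname{trace}$, one gets $\lambda_{\min}\ge \tfrac{1}{2b-b^{2}-n^{2}}>1$ because $2b-b^{2}-n^{2}\le 1-(1-b)^{2}<1$ throughout $D$. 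Taylor's theorem at the stationary point $(b_{R},0)$ then yields, for all $(b,n)\in\overline D$,
\[
\Phi_{0}(b,n)\ \ge\ \Phi_{0}(b_{R},0)+\tfrac12\bigl[(b-b_{R})^{2}+n^{2}\bigr],
\]
so that $\Phi_{b_{R},\mu}(b,n)-\Phi_{0}(b_{R},0)\ \ge\ \tfrac12(b-b_{R})^{2}+\tfrac{1-\mu}{2}\,n^{2}\ \ge\ 0$ for every $\mu<1$, with equality only at $(b_{R},0)$. Hence $(b_{R},0)$ is the unique global minimizer for all $\mu<1$, in particular for all sufficiently small $\mu$.

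\emph{Main obstacle.} The individual computations are routine; the one place calling for a moment's care is the passage from the open triangle $D$ to its closure. Unlike in the preceding nearest--neighbor analysis, $\Phi_{0}$ does \emph{not} diverge on $\partial\overline D$ — along $b=1$ it tends to the finite value $-\alpha$, and along $b=0$ and along $n=\pm b$ it also stays finite — so one must verify separately that all these boundary values strictly exceed $\Phi_{0}(b_{R},0)$ (which is easy: strict convexity of $b\mapsto\Phi_{0}(b,0)$ handles the $b=1$ edge and the endpoints, while $\Phi_{0}(b,b)=\Phi_{0}(b,0)+b\log2>\Phi_{0}(b,0)$ handles the diagonal edges), thereby legitimizing the treatment of $(b_{R},0)$ as the genuine interior minimizer. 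I expect this bookkeeping, rather than any single estimate, to be the chief — and fairly minor — difficulty.
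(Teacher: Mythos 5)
Your proof is correct, but it takes a genuinely different route from the paper on both halves. For large $\mu$, the paper fixes a nontrivial $n_{0}\in(0,b_{R})$ and observes that the sole $\mu$-dependence, $-\tfrac12\mu n_{0}^{2}$, eventually overwhelms the $\mu$-independent difference $\Phi(b_{R},n_{0})-\Phi(b_{R},0)$; you instead run the second-derivative test at the symmetric point, $\partial_{n}^{2}\Phi|_{(b_{R},0)}=b_{R}^{-1}-\mu$, which destabilizes $(b_{R},0)$ already for all $\mu>\mu_{S}=1/b_{R}$ — a sharper and more informative statement (though, as the later sections show, still not the true transition point when $b_{R}<\tfrac13$). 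For small $\mu$, the paper substitutes $\theta=n/b$ to reduce the stationarity condition to the classical mean-field Ising equation $b\mu\theta=\operatorname{Arctanh}\theta$, which has only the trivial root for $b\mu\le1$, hence for $\mu\le1$; you instead prove a uniform convexity bound on the entropy part — the Hessian determinant does collapse to $1/[(b^{2}-n^{2})(1-b)]$ and $\det/\operatorname{tr}=1/(2b-b^{2}-n^{2})>1$ on the open triangle, so $\operatorname{Hess}\Phi_{0}\succeq I$ — and then absorb the perturbation $-\tfrac12\mu n^{2}$ for $\mu<1$. The two routes land on the same threshold $\mu\le 1$, but yours additionally delivers uniqueness of the minimizer and handles the boundary of the simplex automatically (the Taylor inequality extends to $\overline D$ by continuity along segments from the interior point $(b_{R},0)$, so the separate boundary bookkeeping you flag as the "main obstacle" is in fact unnecessary), whereas the paper's argument leans on its earlier, asymptotic assertion that no minimum can occur near $b=0,1$ or $n=\pm b$ in order to conclude that the minimizer is an interior stationary point.
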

\begin{proof}
We begin with the assertion, gleaned from Eq.(\ref{VSQ}), that along
the curves $n = 0$, $\Phi$ is minimized by $b = b_{R}$.  Thus we may
pick any fixed, nontrivial $n_{0}$, with $0 < n_{0} < b_{R}$, and it
is sufficient to establish that $\Phi(n_{0}, b_{R}) < \Phi(0, b_{R})$
once $\mu$ is sufficiently large.  However, the desired inequality is
manifest for large $\mu$ since the only $\mu$ dependence in $\Phi$ is
in the term $-\frac{1}{2}\mu n_{0}^{2}$ which is, eventually, in
excess of the differences between the $\mu$ independent term and
$\Phi(0, b_{R})$.
The second statement is proved as follows: since $b = 0$ -- which
necessarily implies $n = 0$ -- does not minimize the free
energy function, we may use the variable $\theta := n/b$ so that
Eq.(\ref{E:mf_b2}) now reads
\begin{equation}
\nonumber
b\mu \theta = 
\frac{1}{2}\log
\left(\frac{1 + \theta}{1 - \theta}\right ).
\end{equation}
As is well known from the analysis of mean--field Ising systems
(and can be established, e.g., by further differentiation) the above
equation has only the trivial solution if $b\mu \leq 1$.  Since $b$
cannot be greater than one, the second statement has been proved -- in
fact whenever $\mu \leq 1$.
 \end{proof}

\noindent
The above result establishes, in a limited sense, the existence of a
phase transition.  Here we will sharpen this result by proving that
along the lines of fixed $b_{R}$, there is a single transition from $n
\equiv 0$ to $n > 0$.  This is an immediate corollary to the following
lemma which we state separately for future purposes.
\begin{lemma}
\label{Theorem Y}
Let $\Phi_{\mu}(b,n)$ denote the free energy function with $b_{R}$
fixed and $\mu$ (displayed) in $[0,\infty)$.  Then the minimizing
  $n(\mu)$, if unique, is a non--decreasing function of $\mu$.  More
  generally, if at various values of $\mu$, $\Phi_{\mu}$ has a
  minimizing set of $n$'s then, if $\mu^{\prime} > \mu$, the minimum
  of the minimizers at $\mu^{\prime}$ is greater than or equal to the
  maximum of the minimizers at $\mu$.  Thus, in general any possible
  ``choice'' of $n(\mu)$ is non--decreasing.
\end{lemma}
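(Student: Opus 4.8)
The plan is to exploit the special structure of the dependence of $\Phi_{b_{R},\mu}(b,n)$ on $\mu$: the only place $\mu$ enters is through the single term $-\tfrac12\mu n^{2}$, so for $\mu^{\prime}>\mu$ the difference $\Phi_{\mu^{\prime}}(b,n)-\Phi_{\mu}(b,n)$ equals $-\tfrac12(\mu^{\prime}-\mu)n^{2}$, a function of $n$ alone which, because we have adopted the convention $n\ge 0$, is \emph{strictly decreasing} in $n$. This ``decreasing differences'' property is exactly what forces monotone comparative statics, and the argument below is the elementary self--contained version of that principle.

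First I would note that minimizers exist and are interior: $\Phi$ is continuous on the compact set $\{0\le b\le 1,\ |n|\le b\}$ and, as already observed, no minimizing point lies on the boundary, so for each $\mu$ the minimizing set is nonempty and lies in the region where $\Phi$ is smooth. Now fix $\mu^{\prime}>\mu$, let $(b,n)$ be any minimizing pair at $\mu$ and $(b^{\prime},n^{\prime})$ any minimizing pair at $\mu^{\prime}$. Minimality gives $\Phi_{\mu}(b,n)\le\Phi_{\mu}(b^{\prime},n^{\prime})$ and $\Phi_{\mu^{\prime}}(b^{\prime},n^{\prime})\le\Phi_{\mu^{\prime}}(b,n)$. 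Adding these two inequalities and using $\Phi_{\mu^{\prime}}(b,n)=\Phi_{\mu}(b,n)-\tfrac12(\mu^{\prime}-\mu)n^{2}$ together with the analogous identity at $(b^{\prime},n^{\prime})$, the four $\Phi_{\mu}$ terms cancel in pairs, leaving $-\tfrac12(\mu^{\prime}-\mu)(n^{\prime})^{2}\le -\tfrac12(\mu^{\prime}-\mu)n^{2}$; dividing by the strictly negative quantity $-\tfrac12(\mu^{\prime}-\mu)$ yields $(n^{\prime})^{2}\ge n^{2}$, hence $n^{\prime}\ge n$ since both are nonnegative.

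This in fact gives slightly more than stated, namely that \emph{every} minimizing $n$ at $\mu^{\prime}$ is $\ge$ \emph{every} minimizing $n$ at $\mu$; in particular the smallest minimizer at $\mu^{\prime}$ dominates the largest minimizer at $\mu$, a unique $n(\mu)$ is nondecreasing, and any selection $n(\mu)$ is nondecreasing. The one point to be careful about is that the optimization is over the \emph{joint} variable $(b,n)$ rather than over $n$ alone, and I expect pinning this down cleanly to be the main (though mild) obstacle: it causes no actual trouble precisely because the $\mu$--dependence passes only through $n$, so the $b$--coordinates of the two competing minimizers enter symmetrically and cancel in the step above. Equivalently, one may simply invoke Topkis's theorem on supermodular maximization applied to $-\Phi_{\mu}$, whose increment in $\mu$, namely $\tfrac12 n^{2}$, is nondecreasing in $n$ on $\{n\ge 0\}$; the corollary that along a line of fixed $b_{R}$ there is a single transition from $n\equiv 0$ to $n>0$ then follows by combining this monotonicity with Proposition \ref{YTZ}.
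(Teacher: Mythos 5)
Your proposal is correct and follows essentially the same route as the paper's proof: both exploit the fact that $\Phi_{\mu^{\prime}}(b,n)-\Phi_{\mu}(b,n)=-\tfrac12(\mu^{\prime}-\mu)n^{2}$ and run the standard exchange argument between the two minimizing pairs to conclude $[n^{\prime}]^{2}\geq n^{2}$. The only cosmetic difference is that you add the two minimality inequalities directly while the paper phrases the last step as a contradiction ("$(b,n)$ would have been a better minimizer"); the Topkis remark is a correct but unnecessary embellishment.
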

\begin{proof}
Let $\mu, \mu^{\prime} \in [0, \infty)$ with $\mu^{\prime} > \mu$
and let us denote by $(b^{\prime}, n^{\prime})$ a minimizing pair
for $\Phi_{\mu^{\prime}}$ and similarly for $(b,n)$ at $\mu$.  The
key observation is the meager $\mu$--dependence of the function
$\Phi_{\mu}$.  Indeed, $\Phi_{\mu^{\prime}}(x,y) = \Phi_{\mu}(x,y) -
\frac{1}{2}(\mu^{\prime} - \mu)y^{2}$.  We do this twice:
\begin{align}
\Phi_{\mu^{\prime}}(b^{\prime}, n^{\prime}) &= 
\Phi_{\mu}(b^{\prime},n^{\prime}) - \frac{1}{2}(\mu^{\prime} - \mu)[n^{\prime}]^{2}
\notag
\\
&\geq
\Phi_{\mu}(b,n) - \frac{1}{2}(\mu^{\prime} - \mu)[n^{\prime}]^{2}
\notag
\\
&=
\Phi_{\mu^{\prime}}(b,n) - \frac{1}{2}(\mu - \mu^{\prime})n^{2}
- \frac{1}{2}(\mu^{\prime} - \mu)[n^{\prime}]^{2}
\end{align}
leading to $\Phi_{\mu^{\prime}}(b^{\prime}, n^{\prime}) \geq
\Phi_{\mu^{\prime}}(b,n) + \frac{1}{2}(\mu^{\prime} - \mu)(n^{2} -
    [n^{\prime}]^{2})$.  
This necessarily implies that
    $[n^{\prime}]^{2} \geq n^{2}$ since otherwise, the previous
    inequality would be strict implying that $(b,n)$ would have been a
    ``better minimizer'' for $\Phi_{\mu^{\prime}}$ than $(b^{\prime},
    n^{\prime})$.
\end{proof}

\noindent
Using this result we may now show the following

\begin{corollary}
\label{UUY}
Consider the mean--field model defined by the free energy function
given in Eq.(\ref{OIQ}).  Then for each fixed $b_{R} \in (0,1]$, there
is a transitional value of $\mu$, denoted by $\mu_{T}(b_{R})$, such
that $n\equiv 0$ for $\mu < \mu_{T}$ and $n > 0$ for $\mu > \mu_{T}$.
\end{corollary}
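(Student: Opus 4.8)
The plan is to read the corollary off directly from Proposition~\ref{YTZ} and Lemma~\ref{Theorem Y}, with essentially no additional computation. First I would fix $b_{R}\in(0,1]$ and, for each $\mu\geq 0$, record the set of minimizers of $\Phi_{b_{R},\mu}(b,n)$. As already noted in the preceding discussion, no minimum can occur near the $b=0,1$ or $b=\pm n$ boundaries, so this set is a nonempty compact subset of the open domain; consequently its elements have a well-defined largest and smallest $n$-coordinate, which I will write as $n_{+}(\mu)$ and $n_{-}(\mu)$ (recall the standing convention $n\geq 0$).

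Next I would invoke Lemma~\ref{Theorem Y} in the sharp form in which it is stated: for $\mu<\mu'$ one has $n_{-}(\mu')\geq n_{+}(\mu)$. Two consequences are immediate. (i) Both $n_{+}$ and $n_{-}$ are non-decreasing in $\mu$. (ii) The set $S:=\{\mu\geq 0: n_{+}(\mu)=0\}$ is downward closed --- if $n_{+}(\mu)=0$ then $n_{+}(\mu')=0$ for every $\mu'<\mu$ --- hence an interval of the form $[0,\mu_{T})$ or $[0,\mu_{T}]$. By the second assertion of Proposition~\ref{YTZ}, $S$ contains a neighborhood of $0$ (there the unique minimizer is $(b_{R},0)$), so $\mu_{T}:=\sup S>0$; by the first assertion of Proposition~\ref{YTZ}, for all sufficiently large $\mu$ the minimizer has $n\neq 0$, so such $\mu\notin S$ and therefore $\mu_{T}<\infty$. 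This is the promised $\mu_{T}=\mu_{T}(b_{R})$.

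Finally I would verify the two claimed ranges. For $\mu<\mu_{T}$: since $[0,\mu_{T})\subseteq S$ we have $n_{+}(\mu)=0$, i.e.\ every minimizer has $n=0$, which is the assertion $n\equiv 0$. For $\mu>\mu_{T}$: choose any $\mu'$ with $\mu_{T}<\mu'<\mu$; then $\mu'\notin S$, so $n_{+}(\mu')>0$, and Lemma~\ref{Theorem Y} applied to $\mu'<\mu$ gives $n_{-}(\mu)\geq n_{+}(\mu')>0$, so every minimizer at $\mu$ has $n>0$. That is precisely the statement of the corollary.

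I do not expect a genuine obstacle: the content is carried entirely by the monotonicity of Lemma~\ref{Theorem Y} together with the two end-regime facts of Proposition~\ref{YTZ}. The only point needing care is bookkeeping at coexistence --- a priori one might worry that there is a whole interval of $\mu$ on which both an $n=0$ minimizer and an $n>0$ minimizer occur, leaving a gap between ``$n\equiv 0$'' and ``$n>0$.'' The sharp (min-of-minimizers $\geq$ max-of-minimizers) form of Lemma~\ref{Theorem Y} is exactly what closes this gap: it forces $n_{-}(\mu)>0$ as soon as $\mu$ exceeds some $\mu'$ with $n_{+}(\mu')>0$, so coexistence can survive at most at the single value $\mu=\mu_{T}$.
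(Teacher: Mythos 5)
Your proposal is correct and follows the same route as the paper, whose proof of this corollary is simply the observation that it follows from Proposition \ref{YTZ} together with Lemma \ref{Theorem Y}. You have merely spelled out the bookkeeping (downward-closedness of the set where all minimizers have $n=0$, and the use of the sharp min-of-minimizers $\geq$ max-of-minimizers form of the lemma to handle possible coexistence) that the paper leaves implicit.
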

\begin{proof}
This follows immediately from Proposition \ref{YTZ} and Lemma
\ref{Theorem Y} above.
\end{proof}
\noindent
Also of interest is the following:
\begin{corollary}
\label{JDD}
Consider the mean--field model defined by the free energy function
given in Eq.(\ref{OIQ}).  Let $n(\mu)$ denote any non--negative
function corresponding to a minimizing $n$ at parameter value $\mu$
(usually uniquely determined).  Then for $\mu \geq \mu_{T}$, the
function $n(\mu)$ is strictly increasing.
\end{corollary}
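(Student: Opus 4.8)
The plan is to upgrade the non--strict monotonicity already recorded in Lemma~\ref{Theorem Y} to strict monotonicity on $[\mu_{T},\infty)$. The point is that once $n$ is positive the stationarity (mean--field) equations force $\mu$ to be a genuine function of $n$, so $n(\mu)$ cannot remain constant on any interval of positive length contained in $(\mu_{T},\infty)$; combined with the ordering of minimizers from Lemma~\ref{Theorem Y}, this yields the claim.

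First I would record a preliminary fact: for fixed $b_{R}$ (equivalently $\alpha$) and any $n$ with $0 \le n < 1$, Eq.~\eqref{E:mf_b2} has a unique solution $b$ in the admissible range $n < b < 1$, which I denote $b(n)$. Indeed, setting $h(b) := (b^{2}-n^{2})/(1-b)^{2}$, an elementary computation gives $h'(b) = 2(b-n^{2})/(1-b)^{3}$, which is strictly positive on $(n,1)$ since $b > n \ge n^{2}$; as $h(b)\to 0$ when $b\to n^{+}$ and $h(b)\to\infty$ when $b\to 1^{-}$, the equation $h(b) = 4e^{2\alpha}$ has exactly one root, and $b(0)=b_{R}$. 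Since every minimizer of $\Phi_{b_{R},\mu}$ is an interior point at which $\Phi$ is smooth (as noted just before the mean--field equations), any minimizing pair $(b,n)$ satisfies $b = b(n)$ together with Eq.~\eqref{E:mf_n2}.

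Next I would fix $\mu < \mu'$ in $[\mu_{T},\infty)$ and let $n(\mu), n(\mu')$ be arbitrary minimizing selections. By Lemma~\ref{Theorem Y} we already have $n(\mu) \le n(\mu')$, so it suffices to exclude equality. Suppose $\mu' > \mu_{T}$, so $n(\mu') > 0$ by Corollary~\ref{UUY}; if also $n(\mu) = n(\mu') =: n_{0} > 0$, then the associated $b$--values both equal $b(n_{0})$, and Eq.~\eqref{E:mf_n2} applied at $\mu$ and at $\mu'$ gives
\[
\mu\, n_{0} \;=\; \tfrac{1}{2}\log\!\frac{b(n_{0})+n_{0}}{b(n_{0})-n_{0}} \;=\; \mu'\, n_{0},
\]
forcing $\mu = \mu'$, a contradiction; hence $n(\mu) < n(\mu')$. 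The residual case is $\mu = \mu_{T}$ (with $\mu' > \mu_{T}$): if $n(\mu_{T}) = 0$ then $n(\mu_{T}) = 0 < n(\mu')$ directly, while if $n(\mu_{T}) > 0$ (which can occur at a discontinuous transition) the displayed identity applies verbatim with $\mu = \mu_{T}$ and again yields a contradiction. This proves strict monotonicity of $n(\cdot)$ on $[\mu_{T},\infty)$ for every admissible choice of minimizer.

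The only step that needs real care is the uniqueness of $b(n)$: absent it, two distinct minimizers with the same value of $n$ but different $b$ could in principle sit at different $\mu$, and the key identity above would not close the argument. This is, however, just the short one--variable monotonicity check indicated above; the rest is bookkeeping around the possible phase coexistence at $\mu_{T}$, already controlled by the ordering statement of Lemma~\ref{Theorem Y} and the positivity statement of Corollary~\ref{UUY}.
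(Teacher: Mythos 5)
Your proof is correct, but it closes the argument by a different mechanism than the paper does. Both proofs reduce to the same situation: assume $\mu < \mu'$ with a common minimizing value $n_0>0$, invoke Lemma~\ref{Theorem Y} for the non--strict ordering, and derive a contradiction from the stationarity conditions. The paper uses Eq.~\eqref{E:mf_n2} to solve for the partner, $b = n_0/\tanh(\mu n_0)$, concludes that the partners at $\mu$ and $\mu'$ must differ, and then reruns the free--energy shift identity $\Phi_{\mu'} = \Phi_{\mu} - \tfrac{1}{2}(\mu'-\mu)n^2$ to show that the old pair $(b,n_0)$ would nevertheless be a minimizer at $\mu'$ --- contradicting that it fails the stationarity equation there. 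You instead observe that Eq.~\eqref{E:mf_b2} is $\mu$--independent and, by the elementary monotonicity of $h(b)=(b^2-n^2)/(1-b)^2$ on $(n,1)$, pins down $b=b(n_0)$ from $n_0$ alone; feeding this common $b$ into Eq.~\eqref{E:mf_n2} at both parameter values forces $\mu n_0 = \mu' n_0$ outright. Your route buys a shorter endgame (no free--energy comparison at all) at the price of one extra lemma --- the uniqueness of the root of Eq.~\eqref{E:mf_b2} --- which you correctly identify as the only step needing care and which your computation $h'(b)=2(b-n^2)/(1-b)^3>0$ settles. The handling of the boundary case $\mu=\mu_T$, split according to whether $n(\mu_T)$ vanishes or not, matches the paper's intent and is complete.
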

\begin{proof}
It is seen that \textit{if} $n(\mu_{T}) = 0$ then the statement of
this corollary is self-evident at $\mu =\mu_{T}$. For the rest of this
proof, we may simply assume that $\mu$ is such that $n(\mu) > 0$.
Suppose then that $\mu^{\prime} > \mu$ and that $n = n(\mu)$ is part
of the minimizing pair $(b(\mu), n(\mu))$ at parameter value $\mu$.
Suppose further that at $\mu^{\prime}$ the same $n$ is also part of a
minimizing pair.  Then we claim that the $b(\mu)$ is \textit{not} the
partner at $\mu^{\prime}$ since given $n^{\prime}$ -- purportedly
equal to $n$ -- then $b^{\prime}$ is uniquely determined by
Eq.(\ref{E:mf_n2}). Upon performing some algebraic manipulations the
latter reads $b^{\prime} = n^{\prime}/\tanh \mu^{\prime}n^{\prime}$.
Thus, the equality $n=n'$ would lead to
\begin{equation}
\nonumber
b^{\prime}  = \frac{n^{\prime}}{\tanh \mu^{\prime} n^{\prime}}
= 
\frac{n}{\tanh \mu^{\prime} n}
\neq
\frac{n}{\tanh \mu n}  = b
\end{equation}
so that explicitly $(b,n)$ cannot be a minimizer at parameter value
$\mu^{\prime}$.  Using the appropriate $b^{\prime} \neq b$, we would
have
\begin{equation}
\nonumber
F_{\text{MF}}(\mu^{\prime})  =
\Phi_{\mu^{\prime}}(n, b^{\prime})
= \Phi_{\mu}(n, b^{\prime}) - \frac{1}{2}(\mu^{\prime} - \mu)n^{2} 
\geq 
 \Phi_{\mu}(n, b) - \frac{1}{2}(\mu^{\prime} - \mu)n^{2} = \Phi_{\mu^{\prime}}(n, b)
\end{equation}
in contradiction with the fact that $(b,n)$ is not minimized for the
parameter value $\mu^{\prime}$.
\end{proof}

\subsection{A discontinuous Transition for $b_{R} < \frac{1}{3}$} 
The dividing point of $b_{R} = \frac{1}{3}$ along the LSP--curve $\mu = 1/b_{R}$ is apparent from the higher order terms in Eq.(\ref{KIT}).  Indeed, supposing $\mu = 1/b_{R}  +  \varepsilon$ we obtain, with the additional aid of Eq.(\ref{E:relNandDelta}),
\begin{equation}
\label{HUH}
\varepsilon n =  \frac{n^{3}}{2b_{R}^{3}}(b_{R} - \frac{1}{3}) + \dots
\end{equation}
For $b_{R} \geq \frac{1}{3}$, Eq.(\ref{HUH}) is consistent (and, as it
turns out correct) but in the case of $b_{R} < \frac{1}{3}$
this equation alone precludes the possibility of a continuous
transition.  Indeed since we cannot have $n^{2} < 0$, the only logical
consequence of Eq.\,(\ref{HUH}) is $n \equiv 0$ for $\mu \gtrsim
\mu_{S}(b_{R})$, i.e., the transition occurs later.  But the lower
order term \textit{insisted} that $\mu = \mu_{S}(b_{R})$ was the only
viable candidate for a continuous transition.  Thus: the transition
cannot be continuous and, at least for $b_{R} < \frac{1}{3}$, the
preliminary assumption that $n$ goes to zero continuously can no
longer be sustained. In particular, for $b_{R} < \frac{1}{3}$,
perturbative analysis will never be valid because the relevant
quantities will \textit{never} be small.

This leaves open the possibility of a transition at some
$\mu_{T}(b_{R})$ that is different than $\mu_{S}$.  We
shall show that $\mu_{T} < \mu_{S}$ as a direct consequence of the
following:
\begin{proposition}
\label{TEF}
Consider the mean--field model defined via the free energy function
given by Eq.(\ref{OIQ}).  Then, if $b_{R} < \frac{1}{3}$ at $\mu =
\mu_{S}(b_{R})$ the quantity $n$ is strictly positive.
\end{proposition}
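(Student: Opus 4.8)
The plan is to collapse the two--variable minimization to one variable and then show that, at $\mu = \mu_{S}(b_{R}) = 1/b_{R}$, this reduced free energy is already strictly below its value at $n=0$ for some small positive $n$. For fixed $n$ the map $b\mapsto\Phi(b,n)$ is strictly convex on $(|n|,1)$ -- a direct computation gives $\partial_{b}^{2}\Phi = \frac{b}{b^{2}-n^{2}} + \frac{1}{1-b} > 0$ there -- while $\partial_{b}\Phi\to-\infty$ as $b\downarrow|n|$ and $\partial_{b}\Phi\to+\infty$ as $b\uparrow1$; hence there is a unique interior minimizer $b^{*}(n)$, namely the branch of solutions of Eq.\eqref{E:mf_b2}. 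Set $\phi(n) := \Phi(b^{*}(n),n) = \min_{b}\Phi(b,n)$, so that $\min_{b,n}\Phi = \min_{n}\phi(n)$ and, since Eq.\eqref{E:mf_b2} with $n=0$ forces $b^{*}(0) = b_{R}$ by the very definition of $b_{R}$, we have $\phi(0) = \Phi(b_{R},0)$. Because $\Phi(b,-n) = \Phi(b,n)$, both $b^{*}$ and $\phi$ are even, so the implicit function theorem gives $b^{*}(n) = b_{R} + \beta_{2}n^{2} + O(n^{4})$ with $\phi$ smooth near the origin.

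The next step is to differentiate $\phi$. By the envelope theorem $\phi'(n) = \partial_{n}\Phi(b^{*}(n),n) = -\mu n + \tfrac12\log\frac{b^{*}(n)+n}{b^{*}(n)-n}$. Substituting $b^{*}(n) = b_{R} + \beta_{2}n^{2} + O(n^{4})$ and expanding the logarithm yields $\phi'(n) = \big(\tfrac{1}{b_{R}} - \mu\big)n + \big(\tfrac{1}{3b_{R}^{3}} - \tfrac{\beta_{2}}{b_{R}^{2}}\big)n^{3} + O(n^{5})$, which is precisely the information encoded in Eqs.\eqref{E:relNandDelta}--\eqref{HUH}. To pin down $\beta_{2}$ I would insert $b = b_{R} + \beta_{2}n^{2}$ into Eq.\eqref{E:mf_b2}, use $\frac{b_{R}^{2}}{(1-b_{R})^{2}} = 4e^{2\alpha}$ (equivalent to the definition of $b_{R}$), and equate the $O(n^{2})$ terms; this gives $2b_{R}\beta_{2} = 1 - b_{R}$, i.e.\ $\beta_{2} = \frac{1-b_{R}}{2b_{R}}$, which is Eq.\eqref{E:relNandDelta}. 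Feeding this back in and setting $\mu = \mu_{S}(b_{R}) = 1/b_{R}$ (Eq.\eqref{RDA}), the linear term drops out and $\phi'(n) = \frac{3b_{R}-1}{6b_{R}^{3}}\,n^{3} + O(n^{5})$, which is Eq.\eqref{HUH} with $\varepsilon = 0$.

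The conclusion then follows. When $b_{R} < \tfrac13$ the coefficient $\frac{3b_{R}-1}{6b_{R}^{3}}$ is strictly negative, so $\phi'(n) < 0$ on some interval $(0,\delta)$; hence $\phi(n) < \phi(0) = \Phi(b_{R},0)$ for $0 < n < \delta$, and in particular $\min_{b,n}\Phi < \Phi(b_{R},0)$. Since the infimum of $\Phi$ is attained at an interior point -- by the boundary--asymptotics remark made just after Eq.\eqref{OIQ} -- and the $n\leftrightarrow-n$ symmetry lets us select a minimizer with $n\ge0$, this minimizer cannot have $n=0$: its value would then be at least $\min_{b}\Phi(b,0) = \Phi(b_{R},0)$, contradicting the strict inequality. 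Therefore $n > 0$ at $\mu = \mu_{S}(b_{R})$, as asserted, and combined with Corollary \ref{UUY} this yields $\mu_{T}(b_{R}) < \mu_{S}(b_{R})$.

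I expect the substance here to be bookkeeping rather than a conceptual hurdle. The one point needing genuine care is to carry the expansion of $b^{*}(n)$ far enough, and to verify that the $O(n^{5})$ remainder in $\phi'(n)$ cannot overturn the sign of the $n^{3}$ term for small $n$ -- this is immediate once $b_{R} < \tfrac13$, since that term is then bounded away from zero relative to $n^{3}$. A secondary technical point is to confirm, via the convexity argument above together with the implicit function theorem, that $\phi$ is genuinely $C^{4}$ near $0$ so that the Taylor expansion is legitimate, and to note that the ``no minimum on the boundary'' statement already in the text guarantees that the infimum defining ``the quantity $n$'' in the proposition is actually a minimum.
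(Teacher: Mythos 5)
Your proof is correct, and the underlying computation is the same as the paper's --- both amount to a fourth-order expansion of $\Phi$ about $(b_R,0)$ at $\mu=\mu_S$ whose quartic coefficient, proportional to $(3b_R-1)$, changes sign at $b_R=\tfrac13$ --- but the packaging differs. The paper keeps both variables, writes $\Phi-\Phi(b_R,0)$ as a quadratic form in $(m^2,\Delta)$ with $b=b_R(1+\Delta)$, $n=b_R m$, and observes that for $b_R<\tfrac13$ the form is indefinite, so there is a direction of decrease. You instead minimize out $b$ first, justifying this by the strict convexity $\partial_b^2\Phi=\tfrac{b}{b^2-n^2}+\tfrac{1}{1-b}>0$ together with the boundary divergences of $\partial_b\Phi$, and then study the single-variable $\phi(n)=\Phi(b^*(n),n)$ via the envelope theorem; your curve $b^*(n)=b_R+\tfrac{1-b_R}{2b_R}n^2+O(n^4)$ is exactly the optimal descent direction that the paper's quadratic form selects, and your cubic coefficient $\tfrac{3b_R-1}{6b_R^3}$ for $\phi'$ reproduces Eq.~(\ref{HUH}), integrating up to the same quartic coefficient $\tfrac{3b_R-1}{24b_R^3}$ as the paper's display. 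What your version buys is a cleaner treatment of the remainders: smoothness (indeed analyticity) of $b^*$, hence of $\phi$, comes for free from the implicit function theorem, so the claim that higher-order corrections cannot overturn the sign of the $n^3$ term is genuinely justified, whereas the paper asserts this informally (``small enough to withstand the higher order corrections''). One small caveat: your closing remark that this ``yields $\mu_T<\mu_S$'' via Corollary~\ref{UUY} only gives $\mu_T\le\mu_S$ directly; the strict inequality requires the additional $\delta\mu$ perturbation argument of Theorem~\ref{UKW}. That does not affect the proposition itself.
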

\begin{proof}
We expand the free energy function $\Phi(n,b)$ -- with $G$ eliminated
via Eq.\,\ref{VSR} -- about $b = b_{R}$, $n = 0$ and along the curve
$\mu = \mu_{S}(b_{R})$.  The convenient variables are now chosen as $b
= b_{R}(1 + \Delta)$ and $n = b_{R}m$.  We first note that all odd
terms in $m$ must vanish. In addition, the term linear in $\Delta$
vanishes due to the stationarity of $\Phi$ along the curve $\mu =
\mu_{S}(b_{R})$ and, as it turns out, so does the term which is
quadratic in $m$.  This leaves us with
\begin{eqnarray}
\nonumber
\Phi(b, n)  =  \Phi(b_{R}, 0) +  \frac{1}{2}b_{R}
\left[
\frac{1}{6}m^{4}  +  \frac{1}{1-b_{R}}\Delta^{2}  -  m^{2}\Delta
\right]
+ \dots
\end{eqnarray}
Examining the quadratic form in the variables $m^{2}$ and $\Delta$,
the condition for a local minimum is that
\begin{equation}
\nonumber
\frac{1}{6}\frac{1}{1-b_{R}} > \frac{1}{4}, 
\end{equation}
\noindent
i.e., $b_{R} > \frac{1}{3}$.  
We return to $b_{R}\geq\frac{1}{3}$ in the next subsection.
Of current relevance is the fact that for $b_{R} < \frac{1}{3}$, the
curve $\mu = b_{R}^{-1}$ is of a saddle point nature.  This implies
that there \textit{is} a direction of decrease which, as is easily
seen, is optimized, in the physical direction, when $m^{2} = 3\Delta$. 
It is concluded that under the stated conditions, we can produce a
pair $(b, n)$ with $n^{2} > 0$ (and $b > b_{R}$) such that the free
energy for the non--trivial pair is lower; we just make the
corresponding objects small enough to withstand the higher order
corrections.  Thus the actual minimum also must occur for non--trivial
values of the $n$ observable.
\end{proof}

\noindent
We now have
\begin{theorem}
\label{UKW}
Consider the mean--field system defined by the free energy function as
given in Eq.(\ref{OIQ}).  Then for $b_{R} < \frac{1}{3}$, there is a
discontinuous transition at some positive $\mu_{T}(b_{R}) <
\mu_{S}(b_{R})$.
\end{theorem}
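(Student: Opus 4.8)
The plan is to combine the local computation of Proposition \ref{TEF} with the global monotonicity of Corollaries \ref{UUY}--\ref{JDD} and one elementary fact about the Hessian of $\Phi$ at $(b_{R},0)$. The first ingredient I would record is that, for every $\mu < \mu_{S}(b_{R}) = 1/b_{R}$, the point $(b_{R},0)$ is a strict local minimum of $\Phi_{b_{R},\mu}$. This is a one-line calculation: since $\Phi$ is even in $n$ the mixed second derivative at $(b_{R},0)$ vanishes, while $\partial_{n}^{2}\Phi(b_{R},0) = b_{R}^{-1}-\mu = \mu_{S}-\mu > 0$ and $\partial_{b}^{2}\Phi(b_{R},0) = b_{R}^{-1}+(1-b_{R})^{-1} > 0$, so the Hessian is positive definite. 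Because $\Phi$ is smooth with third derivatives bounded on a fixed neighborhood of $(b_{R},0)$, the radius of the basin of this local minimum can be taken bounded below, uniformly for $\mu$ in any compact subinterval of $[0,\mu_{S})$.

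Next I would show $0 < \mu_{T}(b_{R}) < \mu_{S}(b_{R})$. Proposition \ref{TEF} produces a pair $(b^{\ast},n^{\ast})$ with $n^{\ast}>0$ and $\Phi_{b_{R},\mu_{S}}(b^{\ast},n^{\ast}) < \Phi_{b_{R},\mu_{S}}(b_{R},0)$. Since $\mu\mapsto\Phi_{b_{R},\mu}(b,n)$ is affine at fixed $(b,n)$, this strict inequality survives on a left neighborhood: $\Phi_{b_{R},\mu}(b^{\ast},n^{\ast}) < \Phi_{b_{R},\mu}(b_{R},0)$ for all $\mu\in(\mu_{S}-\delta,\mu_{S}]$ and some $\delta>0$. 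As the minimum of $\Phi$ over the locus $\{n=0\}$ is attained only at $(b_{R},0)$ (recall $\Phi(\cdot,0)$ is strictly convex in $b$ with minimizer $b_{R}$), the minimizer of $\Phi$ at any such $\mu$ must have $n>0$; by Corollary \ref{UUY} this forces $\mu_{T}(b_{R}) < \mu_{S}(b_{R})$. Positivity of $\mu_{T}$ is immediate from Proposition \ref{YTZ}, which gives $n\equiv 0$ whenever $\mu\le 1$, so $\mu_{T}\ge 1$.

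It then remains to see the transition at $\mu_{T}$ is genuinely discontinuous. By the monotonicity of Corollary \ref{JDD} the one-sided limit $n(\mu_{T}^{+}):=\lim_{\mu\downarrow\mu_{T}}n(\mu)$ exists; I would show it is strictly positive, which together with $n\equiv 0$ for $\mu<\mu_{T}$ exhibits the jump. Suppose, for contradiction, $n(\mu_{T}^{+})=0$. Along the mean--field relation $4e^{2\alpha}=(b^{2}-n^{2})/(1-b)^{2}$, valid at any interior minimizer and forcing $|b-b_{R}|=O(n^{2})$ near $(b_{R},0)$, the minimizing pairs $(b(\mu),n(\mu))$ then converge to $(b_{R},0)$ as $\mu\downarrow\mu_{T}$. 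But for $\mu$ in a small right neighborhood of $\mu_{T}$ one has $\mu<\mu_{S}$ with $\mu_{S}-\mu$ bounded away from zero, so the first step applies with a $\mu$--uniform basin radius: once $(b(\mu),n(\mu))$ enters that basin we get $\Phi_{b_{R},\mu}(b(\mu),n(\mu))\ge\Phi_{b_{R},\mu}(b_{R},0)$, contradicting the fact (Corollaries \ref{UUY} and \ref{JDD}) that for $\mu>\mu_{T}$ no minimizer has $n=0$, hence $F_{\mathrm{MF}}(\mu)<\Phi_{b_{R},\mu}(b_{R},0)$. Therefore $n(\mu_{T}^{+})>0$ and the transition at $\mu_{T}$ is discontinuous.

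I expect the only real friction to be in this last step: formulating ``the minimizers escape every fixed neighborhood of $(b_{R},0)$'' cleanly in the presence of possible non--uniqueness of minimizers, and checking that the local--minimum basin produced in the first step is genuinely uniform on intervals reaching down to $\mu_{T}$. Both are soft continuity/compactness matters rather than delicate estimates; the substantive content — that the $n=0$ configuration ceases to be the global minimum strictly before $\mu_{S}$ when $b_{R}<\tfrac{1}{3}$ — has already been extracted in Proposition \ref{TEF}.
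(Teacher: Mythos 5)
Your proposal is correct, and it splits into a part that coincides with the paper and a part that is genuinely your own. The step establishing $\mu_{T}<\mu_{S}$ is exactly the paper's: take the pair $(b_{\star},n_{\star})$ from Proposition \ref{TEF}, use that $\mu\mapsto\Phi_{b_{R},\mu}(b,n)$ differs only by the term $-\tfrac{1}{2}\mu n^{2}$, and conclude the strict inequality survives for $\mu=\mu_{S}-\delta\mu$; combined with Lemma \ref{Theorem Y} and Corollary \ref{UUY} this pins $\mu_{T}$ strictly below $\mu_{S}$. Where you diverge is in certifying that the transition at $\mu_{T}$ is actually discontinuous. The paper disposes of this by appealing to the perturbative discussion around Eq.(\ref{HUH}): for $b_{R}<\tfrac13$ the expansion of the mean--field equations admits no small--$n$ branch bifurcating continuously, so ``at $\mu=\mu_{T}$ the quantity $n$ is already positive.'' That argument reasons about solutions of the stationarity equations rather than directly about global minimizers, and is the least formal step of the paper's proof. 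Your replacement --- observing that the Hessian of $\Phi_{b_{R},\mu}$ at $(b_{R},0)$ is diagonal with entries $\mu_{S}-\mu$ and $b_{R}^{-1}+(1-b_{R})^{-1}$, hence $(b_{R},0)$ is a strict local minimum with a basin radius uniform for $\mu$ bounded away from $\mu_{S}$, so that once $\mu_{T}<\mu_{S}$ is known the global minimizers (which have $n>0$ and strictly lower free energy than $\Phi_{b_{R},\mu}(b_{R},0)$ for $\mu>\mu_{T}$) cannot accumulate at $(b_{R},0)$ as $\mu\downarrow\mu_{T}$ --- is a clean variational argument that buys rigor and self--containment at the cost of a few lines; it also mirrors the compactness argument the paper itself deploys later for the continuous case $b_{R}\geq\tfrac13$. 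The only care needed, which you correctly flag, is handling non--uniqueness of minimizers, but Lemma \ref{Theorem Y} already orders the minimizing sets so the one--sided limit is well defined.
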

\begin{proof}
\noindent
That a transition occurs at \textit{some} $\mu_{T} > 0$ is the
statement of Corollary \ref{UUY}.  Moreover, Lemma \ref{Theorem Y} and
the above analysis implies $\mu_{T} \leq \mu_{S}$.  The discussion
prior to Proposition \ref{TEF} demonstrates that at $\mu = \mu_{T}$,
the quantity $n$ is already positive.  It only remains to show that
the inequality relating $\mu_{S}$ and $\mu_{T}$ is strict.
To this end, let us reimplement the heretofore unnecessary notation
for the full dependence of the free energies on parameters.  We have
learned that for $b_{R} < \frac{1}{3}$, there is an $n_{\star} > 0$
and a $b_{\star}$ (with $b_{\star} > b_{R}$) such that
\begin{equation}
\nonumber
F_{\text{MF}}(b_{R},\mu_{S})  =  
\Phi_{b_{R},\mu_{S}}(b_{\star}, n_{\star}) < 
\Phi_{b_{R},\mu_{S}}(b_{R}, 0).
\end{equation}
Invoking Lemma \ref{Theorem Y}, it is now
sufficient to show that there is a $\delta\mu > 0$ such that for some
nonzero $\tilde{n}$, and some $\tilde{b}$, the inequality
$\Phi_{b_{R},\mu_{S}-\delta\mu}(\tilde{b}, \tilde{n}) <
\Phi_{b_{R},\mu_{S} - \delta\mu}(b_{R}, 0)$ can be shown to hold.
Once again, the key is the simple dependence of the free energy
functions on the parameter $\mu$.  Indeed, using $n_{\star}$ and
$b_{\star}$ as trials, we obtain
$$
\Phi_{b_{R},\mu_{S}-\delta\mu}(n_{\star},b_{\star})  =  F^{\text{MF}}_{b_{R},\mu_{S}}
 + \frac{1}{2}[\delta\mu ]n_{\star}^{2}
$$ while $\Phi_{b_{R},\mu_{S} - \delta\mu}(b_{R}, 0) \equiv
 \Phi_{b_{R},\mu_{S}}(b_{R}, 0) < F^{\text{MF}}_{b_{R},\mu_{S}}$.
 Thus, the desired inequality will indeed hold for all $\delta\mu$
 sufficiently small.
\end{proof}

\subsection{A continuous transition for $b_{R} \geq \frac{1}{3}$}
The starting point in our analysis is to show that at the purported
critical curve, the quantity $n$ actually vanishes.
\begin{proposition}
\label{UYV}
For $b_{R} > \frac{1}{3}$ and $\mu = b_{R}^{-1} =: \mu_{S}$, the
unique solution to the mean--field equations is $n = 0$ with $b =
b_{R}$.  In particular, $\Phi_{b_{R},\mu_{S}}(b_{R}, 0) <
\Phi_{b_{R},\mu_{S}}(b, n)$ for any $(b,n) \neq (b_{R}, 0)$.
\end{proposition}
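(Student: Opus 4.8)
The plan has two stages. First I would show that at $\mu=\mu_S=b_R^{-1}$ the mean--field equations \eqref{E:mf_b2}--\eqref{E:mf_n2} have no solution other than $n=0$, $b=b_R$. Second I would upgrade this to the strict global minimum statement, which is then immediate: as observed in the derivation of the mean--field equations, the asymptotics exclude any minimizer of $\Phi_{b_R,\mu_S}$ on the boundary of the $(b,n)$--domain, so every minimizer is an interior critical point, hence a solution of \eqref{E:mf_b2}--\eqref{E:mf_n2}; if that solution set is the single point $(b_R,0)$ then $(b_R,0)$ is the unique minimizer, i.e.\ $\Phi_{b_R,\mu_S}(b_R,0)<\Phi_{b_R,\mu_S}(b,n)$ for every $(b,n)\neq(b_R,0)$.

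For the only substantive stage, I would use the $n\mapsto-n$ symmetry to restrict to $n\ge0$; if $n=0$ then \eqref{E:mf_b2} together with the identity $4e^{2\alpha}=b_R^2/(1-b_R)^2$ forces $b=b_R$, so it remains to exclude $n>0$. On the interior $b\pm n>0$, so put $\theta:=n/b\in(0,1)$. Solving \eqref{E:mf_b2} for $b$ — the positive root being the relevant one since $0<b,b_R<1$ — gives
\[
b=B(\theta):=\frac{b_R}{\,b_R+(1-b_R)\sqrt{1-\theta^{2}}\,}\in(b_R,1),
\]
and substituting $n=B(\theta)\theta$ into \eqref{E:mf_n2} collapses the pair of equations to the single scalar equation
\[
\frac{\theta}{\,b_R+(1-b_R)\sqrt{1-\theta^{2}}\,}\;=\;\frac12\log\frac{1+\theta}{1-\theta},\qquad\theta\in(0,1).
\]
So it suffices to prove this equation has no root in $(0,1)$ whenever $b_R\ge\frac13$.

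I would do this by a monotonicity argument. Set $\tilde g(\theta):=\frac12\log\frac{1+\theta}{1-\theta}-\frac{\theta}{\,b_R+(1-b_R)\sqrt{1-\theta^2}\,}$, so $\tilde g(0)=0$ and we want $\tilde g>0$ on $(0,1)$. Writing $s:=\sqrt{1-\theta^{2}}\in(0,1)$ and $P:=b_R+(1-b_R)s$, a direct differentiation gives $\tilde g'(\theta)=N(s)/(s^{2}P^{2})$ with $N(s)=P^{2}-s\bigl(1-b_R(1-s)\bigr)$; since $N(1)=0$ this factors as $N(s)=(s-1)\,Q(s)$ with $Q(s)=(1-3b_R+b_R^{2})\,s-b_R^{2}$. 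Now $Q$ is affine in $s$ with $Q(0)=-b_R^{2}<0$ and $Q(1)=1-3b_R\le0$ — this is precisely where $b_R\ge\frac13$ enters — hence $Q(s)=(1-s)Q(0)+sQ(1)<0$ for every $s\in(0,1)$; combined with $s-1<0$ this gives $N(s)>0$, so $\tilde g'>0$ on $(0,1)$. Since $\tilde g(0)=0$, we conclude $\tilde g>0$ on $(0,1)$, so the scalar equation — and therefore the mean--field system at $\mu_S$ — has no non--trivial solution, and the proposition follows.

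The main obstacle is exactly this last step: the perturbative expansion around $n=0$ controls $\tilde g$ only for small $\theta$, whereas one needs $\tilde g>0$ on all of $(0,1)$. What makes it tractable is the factorization $N(s)=(s-1)\bigl[(1-3b_R+b_R^{2})s-b_R^{2}\bigr]$, after which the sign of $N$ on $(0,1)$ is read off from the two endpoint values of the affine factor $Q$, with the threshold $b_R=\frac13$ appearing naturally as the condition $Q(1)=1-3b_R\le0$. Eliminating $G$ and then $b$, and the differentiation itself, are routine.
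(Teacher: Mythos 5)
Your proof is correct and follows essentially the same route as the paper: eliminate $G$ and $b$ in favor of $\theta=n/b$, reduce to the single comparison $\operatorname{Arctanh}\theta$ versus $\ell(\theta)=\theta/[b_R+(1-b_R)\sqrt{1-\theta^2}]$, and establish it via the derivative inequality. Your organization of the final algebra — factoring $N(s)=(s-1)\bigl[(1-3b_R+b_R^2)s-b_R^2\bigr]$ and reading off the sign from the two endpoint values of the affine factor — is a cleaner way to close the argument than the paper's chain of inequalities in $R$ and $Q$, but it is the same method.
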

\begin{proof}
Assuming $n > 0$ the agent fraction $b$ can be eliminated in favor of the ratio
$$
\theta := \frac{n}{b}.
$$ 
Note that while this is the same substitution as before, here it is
$b$ rather than $n$ that is being eliminated.  Notwithstanding,
$\theta$ still satisfies $0 < \theta \leq 1$.  In these variables,
the mean--field equations, Eq.(\ref{E:mf_b2}) and Eq.(\ref{E:mf_n2})
respectively become
\begin{equation}
\label{ASA}
n = \frac{R\theta}{R + \sqrt{1-\theta^{2}}}
\end{equation}
\begin{equation}
\label{BSB}
n = b_{R}\text{Arctanh}\hspace{.05 cm} \theta
\end{equation}

\noindent
where in the above, $R := b_{R}/(1 - b_{R})$.  Let us now
define $\ell(\theta)$ as
\begin{equation}
\nonumber
\ell(\theta) := \frac{1}{b_{R}}
\hspace{.1 cm} \frac{R\theta}{R + \sqrt{1-\theta^{2}}} = \frac{(1 +
 R)\theta}{R + \sqrt{1-\theta^{2}}} = \frac{(1 + R)\theta}{R +Q}.
\end{equation}
where $Q = Q(\theta) :=\sqrt{1 - \theta^{2}}$.
To prove the current proposition we need to show that for all $\theta > 0$,
\begin{equation}
\nonumber
\text{Arctanh}\hspace{.05 cm} \theta > \ell(\theta) 
\end{equation}
demonstrating that there cannot be a non--trivial solution to the
mean--field equations under the conditions stated.  Note that for $0 <
\theta \ll 1$ the desired inequality can be explicitly demonstrated.
In general, it is sufficient to show, for $0 < \theta \leq1$, that
$\ell^{\prime}(\theta) < 1/(1 - \theta^{2})$, i.e., in the $Q$
variable that
\begin{equation}
\nonumber
\frac{1}{Q^{2}}  >  \frac{(1+R)(R + Q + (1-Q^{2})/Q)}{(R+Q)^{2}}.
\end{equation}
\noindent
Although both sides diverge as $Q\to 0$ the divergence on the left
hand side is clearly stronger so we actually only need consider $Q >
0$ limiting us to $Q \in (0,1)$.  After some manipulation, the
inequality we need to prove is equivalent to
\begin{equation}
\nonumber
(R+Q)^{2}  >  (1 + R)(RQ^{2} + Q)  =  (1 + R)(RQ^{2} + Q^{2}) + (1+R)(Q - Q^{2}).
\end{equation}
That is, we now wish to show
\begin{equation}
\nonumber
R (R + 2Q + RQ)(1 - Q) > (1+R)Q(1 - Q).
\end{equation}
Since $Q \neq 1$, the above is equivalent to 
\begin{equation}
\nonumber
R^{2} + QR + R^{2}Q > Q.
\end{equation}
Finally, since also $Q < 1$ it is enough to show that $2R^{2}Q^{2} +RQ \geq
Q$ i.e., that $2R^{2} + R \geq 1$ which occurs for $R \geq
\frac{1}{2}$. This corresponds to $b_{R} \geq \frac{1}{3}$.
\end{proof}
\noindent
We can finally show
\begin{theorem}
Consider the mean--field GI--system defined by the free energy
function given in Eq.(\ref{OIQ}).  Then, for $b_{R} \geq \frac{1}{3}$,
as a function of $\mu$ with $b_{R}$ fixed, there is a continuous
transition at $\mu = \mu_{S} = 1/b_{R}$.  I.e., $n(\mu) \equiv 0$ for
$\mu < \mu_{S}$ and $n(\mu) > 0$ for the $n$--component of any
minimizing pair $(n(\mu), b(\mu))$ while, if $\mu \downarrow \mu_{S}$,
it is found that $n(\mu)\downarrow 0$.
\end{theorem}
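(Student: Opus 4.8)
The plan is to combine the structural results already in hand with one short perturbative computation. By Corollary~\ref{UUY} there is a single transition value $\mu_{T}(b_{R})$ with $n\equiv 0$ for $\mu<\mu_{T}$ and $n>0$ for $\mu>\mu_{T}$, and by Lemma~\ref{Theorem Y} the selected $n(\mu)$ is non-decreasing; so it suffices to establish (i) $\mu_{T}\ge\mu_{S}$, (ii) $\mu_{T}\le\mu_{S}$, and (iii) $n(\mu)\downarrow 0$ as $\mu\downarrow\mu_{S}$. Part (i) I would read off Proposition~\ref{UYV}: at $\mu=\mu_{S}$ the unique minimizer of $\Phi_{b_{R},\mu_{S}}$ is $(b_{R},0)$, so $n(\mu_{S})=0$, and by Lemma~\ref{Theorem Y} every minimizer at any $\mu\le\mu_{S}$ has $n\le n(\mu_{S})=0$, hence $n(\mu)\equiv 0$ there; in particular $\mu_{T}\ge\mu_{S}$.

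The crux is (ii): $n(\mu)>0$ for every $\mu>\mu_{S}$. Here I would reuse the Taylor expansion of $\Phi_{b_{R},\mu_{S}}$ about $(b_{R},0)$ derived in the proof of Proposition~\ref{TEF}: with $b=b_{R}(1+\Delta)$ and $n=b_{R}m$, $\Phi_{b_{R},\mu_{S}}(b,n)=\Phi_{b_{R},\mu_{S}}(b_{R},0)+\tfrac12 b_{R}\big[\tfrac16 m^{4}+\tfrac{1}{1-b_{R}}\Delta^{2}-m^{2}\Delta\big]+\dots$. For $b_{R}>\tfrac13$ the quadratic form in $(\Delta,m^{2})$ is positive definite (consistent with the global statement of Proposition~\ref{UYV}); but eliminating $\Delta$ at its optimal value $\Delta=\tfrac{1-b_{R}}{2}m^{2}+\dots$ leaves $\Phi_{b_{R},\mu_{S}}(b(m),b_{R}m)-\Phi_{b_{R},\mu_{S}}(b_{R},0)=\tfrac{b_{R}(3b_{R}-1)}{24}m^{4}+O(m^{6})$, a \emph{quartic} with positive coefficient. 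Using the elementary identity $\Phi_{b_{R},\mu}(b,n)=\Phi_{b_{R},\mu_{S}}(b,n)-\tfrac12(\mu-\mu_{S})n^{2}$ (the sole $\mu$-dependence of the free energy), along the same curve $\Phi_{b_{R},\mu}(b(m),b_{R}m)-\Phi_{b_{R},\mu}(b_{R},0)=\tfrac{b_{R}(3b_{R}-1)}{24}m^{4}-\tfrac12(\mu-\mu_{S})b_{R}^{2}m^{2}+\dots$, which is strictly negative for all sufficiently small $m>0$ once $\mu>\mu_{S}$ (the minimum over $m^{2}$ sitting near $m^{2}\sim(\mu-\mu_{S})$). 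Hence $(b_{R},0)$ is not the global minimizer of $\Phi_{b_{R},\mu}$; and since along $n=0$ the free energy is $\mu$-independent and, as shown in the proof of Proposition~\ref{YTZ}, minimized at $b=b_{R}$ (cf.\ Eq.\,\eqref{VSQ}), no configuration with $n=0$ can be the global minimizer, so $n(\mu)>0$. This works for all $\mu\in(\mu_{S},\mu_{S}+\varepsilon_{0})$ for some $\varepsilon_{0}>0$, and Lemma~\ref{Theorem Y} propagates it to all $\mu>\mu_{S}$; thus $\mu_{T}=\mu_{S}$. (In effect this reads Eq.\,\eqref{HUH} as a minimization rather than merely a stationarity condition.)

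For (iii), the limit $n_{0}:=\lim_{\mu\downarrow\mu_{S}}n(\mu)$ exists (monotone and bounded) and is $\ge 0$; I would exclude $n_{0}>0$ by a compactness-versus-uniqueness argument. The preliminary observation that no minimum lies near the $b=0,1$ or $b=\pm n$ boundaries, together with the boundedness of $-\tfrac12\mu n^{2}$ on any compact region, keeps the minimizing pairs $(b(\mu),n(\mu))$ inside a fixed compact subset of $\{0<b<1,\ |n|<b\}$ for $\mu$ in a neighborhood of $\mu_{S}$. Taking $\mu_{k}\downarrow\mu_{S}$ and a convergent subsequence of minimizers $(b(\mu_{k}),n(\mu_{k}))\to(b^{\ast},n^{\ast})$ with $n^{\ast}=n_{0}$, joint continuity of $(\mu,b,n)\mapsto\Phi_{b_{R},\mu}(b,n)$ and the bound $\Phi_{b_{R},\mu_{k}}(b(\mu_{k}),n(\mu_{k}))\le\Phi_{b_{R},\mu_{k}}(b_{R},0)=\Phi_{b_{R},\mu_{S}}(b_{R},0)$ give $\Phi_{b_{R},\mu_{S}}(b^{\ast},n^{\ast})\le\Phi_{b_{R},\mu_{S}}(b_{R},0)$; but Proposition~\ref{UYV} says $(b_{R},0)$ is the \emph{unique} minimizer of $\Phi_{b_{R},\mu_{S}}$, forcing $(b^{\ast},n^{\ast})=(b_{R},0)$ and $n_{0}=0$. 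Hence $n(\mu)\downarrow 0$ (and $b(\mu)\to b_{R}$), completing the continuity claim.

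I expect step (ii) to be the main obstacle. Proposition~\ref{UYV} delivers only the \emph{bare} positivity $\Phi_{b_{R},\mu_{S}}(b,n)>\Phi_{b_{R},\mu_{S}}(b_{R},0)$; the real work is upgrading this to the quantitative statement that the gap is $o(n^{2})$ near $(b_{R},0)$, which is precisely what lets an arbitrarily small $-\tfrac12(\mu-\mu_{S})n^{2}$ perturbation overturn the trivial minimizer and pin $\mu_{T}$ to $\mu_{S}$ rather than to some strictly larger value. One loose end is the endpoint $b_{R}=\tfrac13$, included in the theorem but excluded from the hypothesis of Proposition~\ref{UYV}: its proof's chain of estimates in fact remains strict for $\theta\in(0,1)$ at $R=\tfrac12$, so the uniqueness at $\mu_{S}$ still holds there, while the destabilization in step (ii) then rests on the next non-vanishing (sixth-order) term of the expansion in place of the quartic $\tfrac{b_{R}(3b_{R}-1)}{24}m^{4}$, which vanishes at $b_{R}=\tfrac13$ — or, alternatively, one passes to the limit $b_{R}\to\tfrac13^{+}$ by the same compactness argument used in (iii).
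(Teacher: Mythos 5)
Your proposal is correct and follows essentially the same route as the paper's proof: Proposition \ref{UYV} plus Lemma \ref{Theorem Y} handle $\mu\le\mu_{S}$, the quartic expansion from Proposition \ref{TEF} together with the identity $\Phi_{b_{R},\mu}=\Phi_{b_{R},\mu_{S}}-\tfrac12(\mu-\mu_{S})n^{2}$ destabilizes $(b_{R},0)$ for $\mu>\mu_{S}$, and a compactness--continuity--uniqueness argument forces $n(\mu)\downarrow 0$. Your extra care at the endpoint $b_{R}=\tfrac13$ is a fair observation about the mismatch with the stated hypothesis of Proposition \ref{UYV} (whose proof in fact yields $R\ge\tfrac12$, i.e.\ $b_{R}\ge\tfrac13$), though for step (ii) the sixth-order term is not actually needed: one only requires the correction to be $o(n^{2})$, not that the quartic coefficient be positive.
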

\begin{proof}
We will marshal the facts at our disposal and then proceed in a more
abstract vein than has been the case in the more recent of our
arguments. In what is to follow, $n(\mu)$ and the corresponding
$b(\mu)$ is, once again, notation for a minimizing pair without any
claims to uniqueness.
By the preceding proposition, we know that at $\mu = \mu_{S}$, the
quantity $n(\mu)$ is unambiguous and vanishes for $\mu < \mu_{S}$ by
Lemma \ref{Theorem Y}. Conversely, for $\mu > \mu_{S}$ we may write,
adhering to the notation in the proof of Theorem \ref{UKW}, our usual
expression:
\begin{equation}
\nonumber
\Phi_{b_{R},\mu}(b,n)  =  \Phi_{b_{R},\mu_{S}}(b,n) - \frac{1}{2}(\mu - \mu_{S})n^{2}.
\end{equation}
\noindent
For $n^{2} \propto b - b_{R} \ll1$ from Proposition \ref{TEF}, we know
that the quantity $\Phi_{b_{R}, \mu_{S}}(b, n)$ agrees with
$\Phi_{b_{R}, \mu_{S}}(b_{R}, 0)$ up to quartic order in $n$.  Thus
allowing $n^{2} \ll 1$ with $n^{2}(\mu - \mu_{s}) \gg n^{4},
(b-b_{R})^{2}$ we find a non--zero $n$ corresponding to a free energy
lower than that of $\Phi_{b_{R}, \mu_{S}}(b_{R}, 0)$.  Therefore,
again by Lemma \ref{Theorem Y}, we have $n(\mu) > 0$ for all $\mu >
\mu_{S}$.  It remains to establish that $n\downarrow 0$ as $\mu
\downarrow \mu_{S}$. Note, that along any decreasing sequence of
$\mu$'s the corresponding possible $n$'s must be monotone by Corollary
\ref{JDD} -- or even Lemma \ref{Theorem Y} -- and hence
$n\downarrow 0$ as $\mu \downarrow \mu_{S}$. 
Now let us suppose otherwise: that for some
sequence of $\mu$'s decreasing to $\mu_{S}$ there is an associated
sequence of minimizers, $(b(\mu), n(\mu))$ that has $n(\mu) \downarrow
n_{\star} > 0$.  Let $b_{\star}$ denote the associated limit for the
$b(\mu)$ along a further subsequence if necessary.  Since
\begin{equation}
\nonumber
\Phi_{b_{R}, \mu}(b(\mu), n(\mu))  < \Phi_{b_{R}, \mu}(b_{R}, 0) \equiv 
\Phi_{b_{R}, \mu_{S}}(b_{R}, 0)
\end{equation} 
we would have, by continuity, $\Phi_{b_{R}, \mu_{S}}(b_{\star},
n_{\star}) \leq \Phi_{b_{R}, \mu_{S}}(b_{R}, 0)$ indicating that
\textit{at} $\mu = \mu_{S}$, there is a minimizer with positive
magnetization in contradiction with Proposition \ref{UYV} above.  It
follows that, under the stated condition $b_{R} \geq \frac{1}{3}$, the
limit of $n(\mu)$ is zero as $\mu \to \mu_{S}$ while it vanishes below
and is positive above.  By this (and any other) criterion, the
transition at $\mu_{S}$ is continuous. This completes the proof.
\end{proof}

\section{Discussion} \label{S:discussion}

In this work, we have formulated a lattice model for gang
territoriality where red and blue gang agents interact solely through
graffiti markings. Using a contour argument, we showed that a phase
transition occurs between a well mixed, ``high-temperature'' phase and
an ordered, ``low-temperature'' one as the coupling parameter $J$
between gang members and graffiti becomes stronger while the graffiti
evaporation parameter $\lambda$ decreases. In the mean field limit of
all--to--all lattice site couplings, we can also identify the
tricritical point in phase space that distinguishes the occurrence of
a continuous phase transition from a first order one. We find this
point to be located at $b_R = 1/3$ which corresponds, in terms of the
original variables of the problem, to the gang proclivity 
term $\alpha
= - 2 \log 2 $.  In particular, for $b_R \geq 1/3$ the phase
transition is continuous and occurs at $\mu = 1 / b_R$. Thus, in the
mean-field limit, for fixed $\alpha \geq -2 \log 2$ the ordered ``low
temperature" phase arises for $J^2 > \lambda / (e^{-\alpha} +2)$, and
the ``high temperature'' one is attained on the other side of this
inequality. The transition between the two occurs in a continuous
manner across the $J^2 = \lambda/ (e^{-\alpha}+ 2)$ locus.  In the
opposite case of $b_R < 1/3$ (or $\alpha < -2 \log 2$) the phase
transition is discontinuous. Here, we also are able to prove that the
transition between high and low temperature phases occurs not at $\mu
= 1/ b_R $, but rather along the $\mu = \mu_T < 1 /b_R$ curve, so that
the phase change occurs earlier in $J$ and along a separatrix $J^2 =
J_c^2 < \lambda / (e^{-\alpha} + 2)$.

In the context of gang--graffiti interactions, we may identify the low
temperature, clustered phase as pertaining to a high level of
antagonism between between rival gangs, where segregation leads to
conflict along boundaries. Vice versa, the high temperature, well mixed
configuration can be interpreted as a peaceful state,
where despite different affiliations, gang members share the same
turf. Our mean field results indicate that the confrontational state
is surely attained, whether in a continuous or first order manner, for
$J^2 > \lambda / (e^{-\alpha} + 2)$, which represents high
gang-graffiti territoriality $J$, low external intervention in
graffiti removal $\lambda$ and high proclivity $\alpha$ for
individuals to become gang members.  Gang clustering can be avoided
by intervening in all three directions: by externally eliminating
graffiti ($\lambda$), but also, from a deeper sociological point of
view, by decreasing the lure of graffiti tags
or of joining gangs in the
first place ($J,\alpha$).
The emergence of a (continuous or discontinuous) phase transition
shows that it is possible to obtain segregation in a lattice model
without invoking direct agent--to--agent coupling; it is certain that adding
such coupling terms to the Hamiltonian would allow for even more
favorable segregation conditions.

Although our work was conceived within the context of gang
interactions, the proposed model Hamiltonian and the tools used are
general enough that our fundamental results may be applicable to
several other contexts where territoriality is played out through
markings and not through direct contact between players.  Many
animals, among which wolves, foxes and coyotes, are known to
scent--mark their territories as a way of warning intruders of their
presence and to exchange internal communication \cite{Levin}.  At
times, buffer zones can originate between distinct animal clusters
where prey species, such as deer or moose, may thrive \cite{White}.
Insects, such as beetles and bees, are also known to avoid previously
marked locations as a way to optimize foraging patterns.  Similarly to
the role of gang graffiti markings, foreign scents lead ``others'' to
retreat from already occupied turf or visited patches. Our work also
applies to these contexts. Although some stochastic treatments have
been recently presented \cite{Giuggioli}, classical ecological studies
of territoriality are usually carried out via reaction--diffusion
equations where focal points such as dens, burrows or nests are often
included \cite{Lewis, Murray}, leading to segregation. Within this
work on the other hand -- whether first order or continuous -- agent
clustering is a natural consequence of a probabilistic treatment
without the need to include any anchoring sites. Finally, we are able
to connect local microscopic parameters -- $J, K, \lambda, \alpha$ --
to the emergence of large scale territorial patterns, be they gang
clusters or animal groupings.

\vspace{.5cm}


\noindent \textbf{Acknowledgments: }
This work was supported by NSF grants DMS--0968309 (A.B. and L.C.), DMS--0805486 (L.C.), DMS--0719642 and DMS--1021850 (M.R.D.) and by ARO grants  W911NF--11--1--0332 (A.B. and M.R.D.) and W911NF--10--1--0472 (A.B.).


\vspace{.5cm}
\appendix
\section{}
\noindent
Here we present a brief description of the informal simulations
mentioned in Section\,\ref{S:MFHamiltonian}.  We consider a 100
$\times$ 100 square lattice with periodic boundary conditions
initialized at $t=0$ so that each site is populated with either red or
blue agents, or a mixture of both.  We assume a random distribution of
$10^5$ blue and $10^5$ red agents and do not impose any restriction on
the number of individuals on each site, so that multiple agents can
occupy the same location at any given time.  Initial conditions are
completed by assuming that at $t=0$ there is no graffiti present. At
each time step of the simulation agents leave their graffiti on-site
with a probability $p_m$ that depends on the current graffiti level.
In particular, $p_m=0.1$ if the site is not marked by the opposite
gang, and $p_m=1$ otherwise.  
The agent then
moves to any of its nearest neighbor sites $j$ with probability
$e^ {-g_j} / \sum_{j} e^{g_j}$, where $g_j$ is the amount of
the opposite gang's graffiti at location $j$.  Agents will thus
preferentially relocate to nearest neighbor sites tagged by the least
amount of the opposite gang's graffiti.  Finally, at each site,
graffiti is removed according to a probability $p_g$.  Similarly to
the number of agents, we impose no restriction on the amount of
graffiti at each site.

\begin{table}[t]
\begin{center}
\begin{tabular}{|c|c|c|}
\hline
& & \\
gang \hspace{0.73cm} graffiti & gang \hspace{0.73cm} graffiti &
gang \hspace{0.73cm} graffiti \\ 
\scalebox{0.15}{\includegraphics{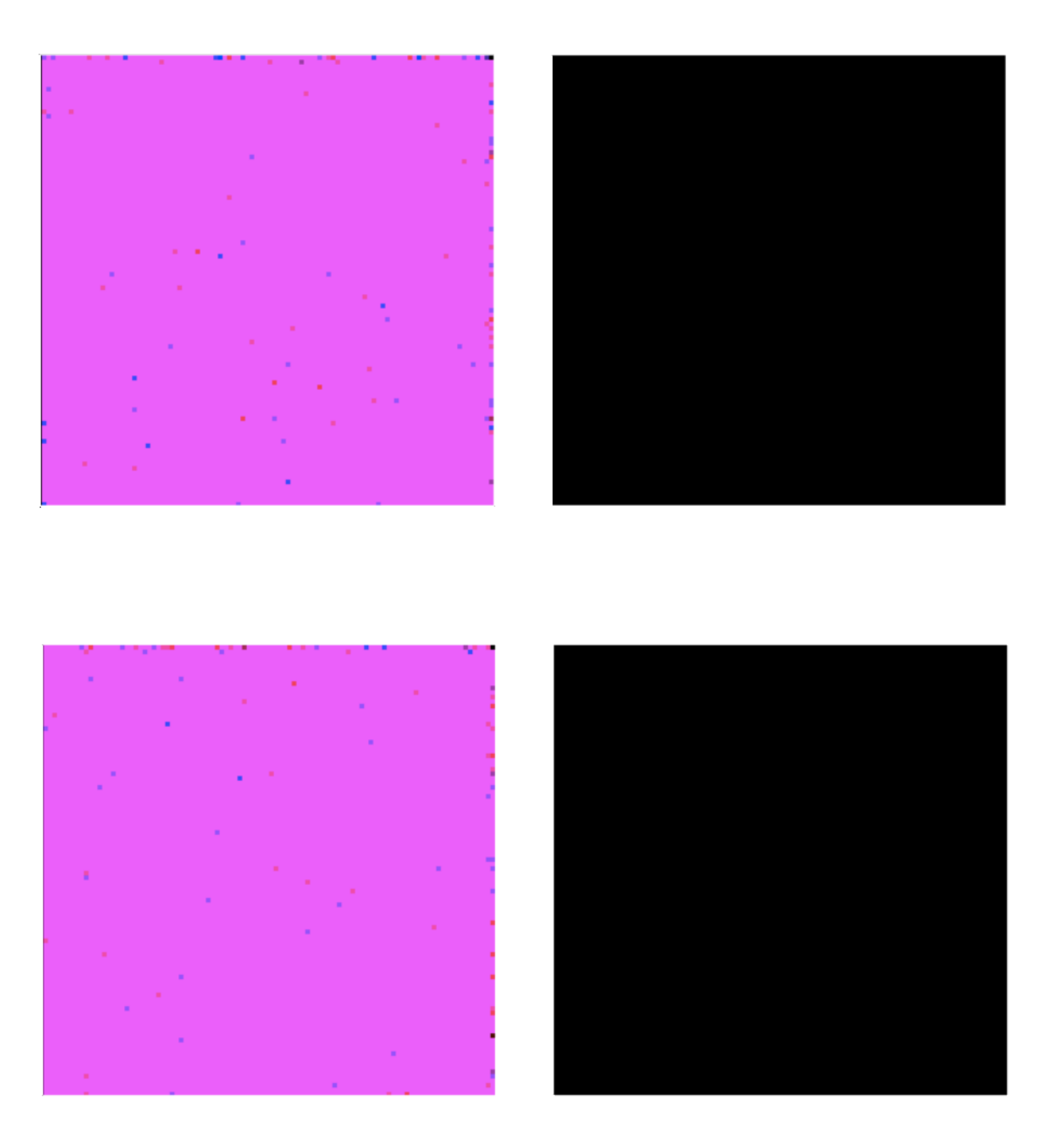}}&
\scalebox{0.15}{\includegraphics{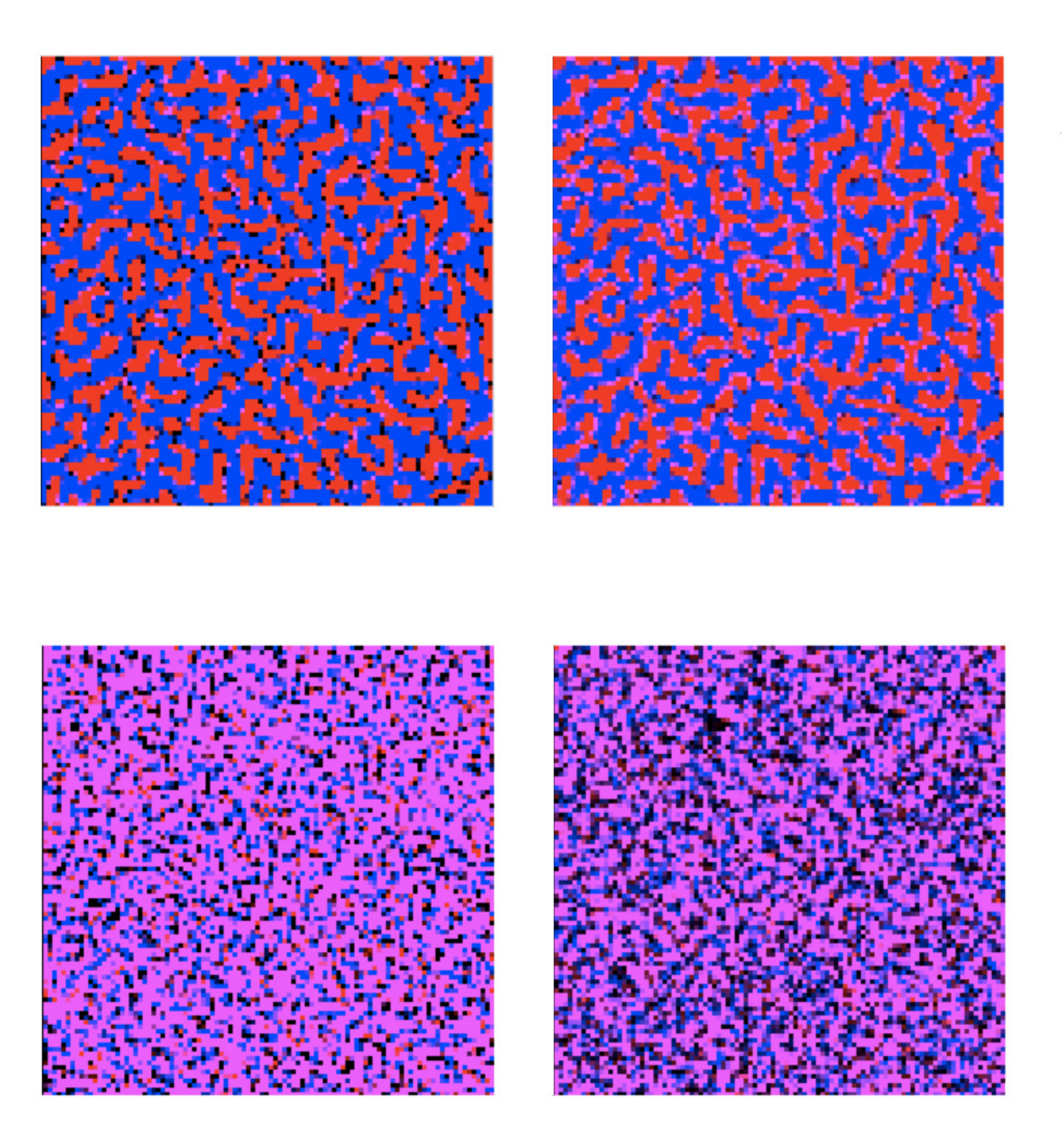}}&
\scalebox{0.15}{\includegraphics{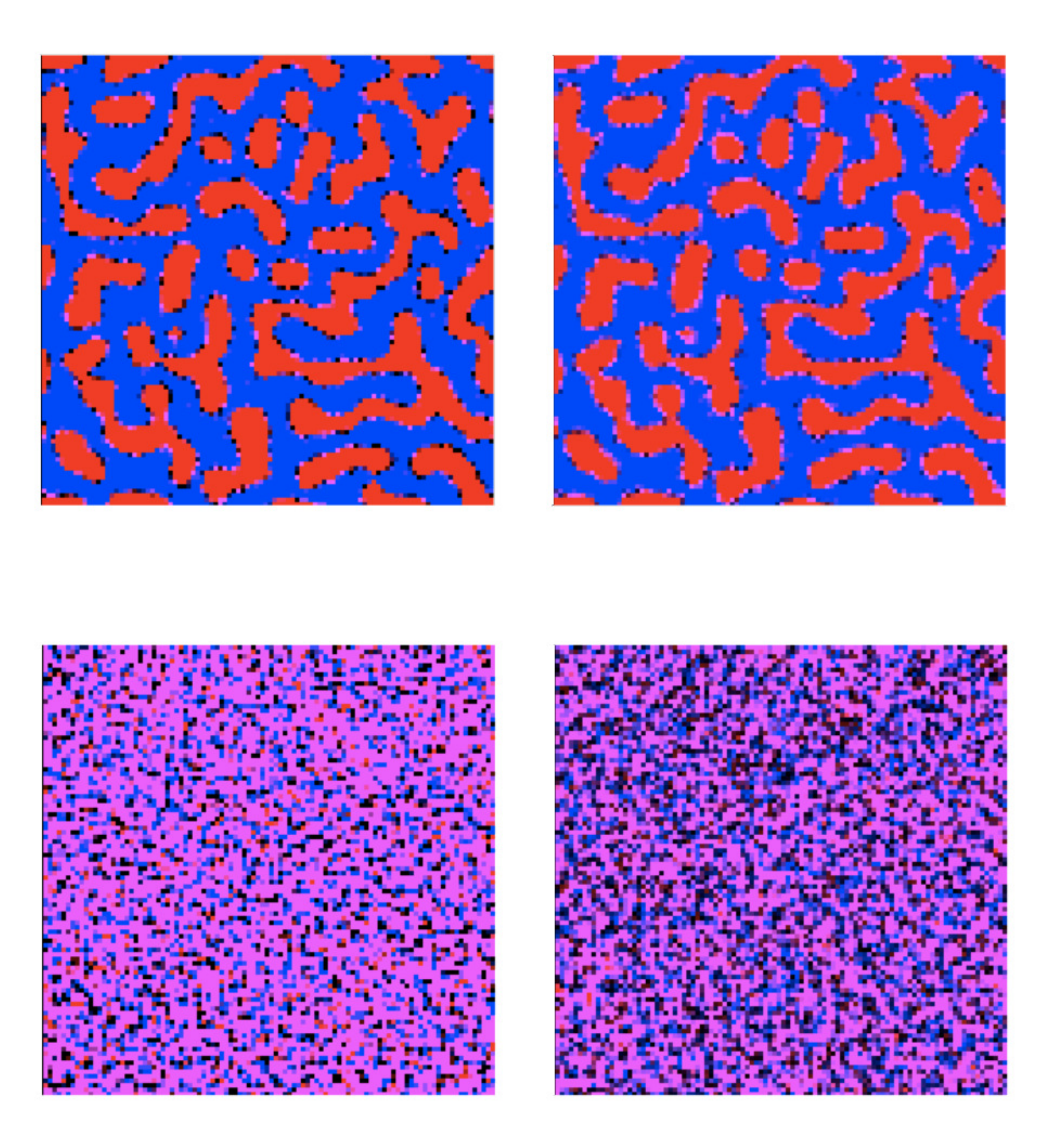}}\\ 
(a) t=0 & (b) t=100
& (c) t= 1000 \\  & & \\ \hline
& & \\
 gang \hspace{0.73cm} graffiti & gang \hspace{0.73cm} graffiti &
 gang \hspace{0.73cm} graffiti \\
 \scalebox{0.15}{\includegraphics{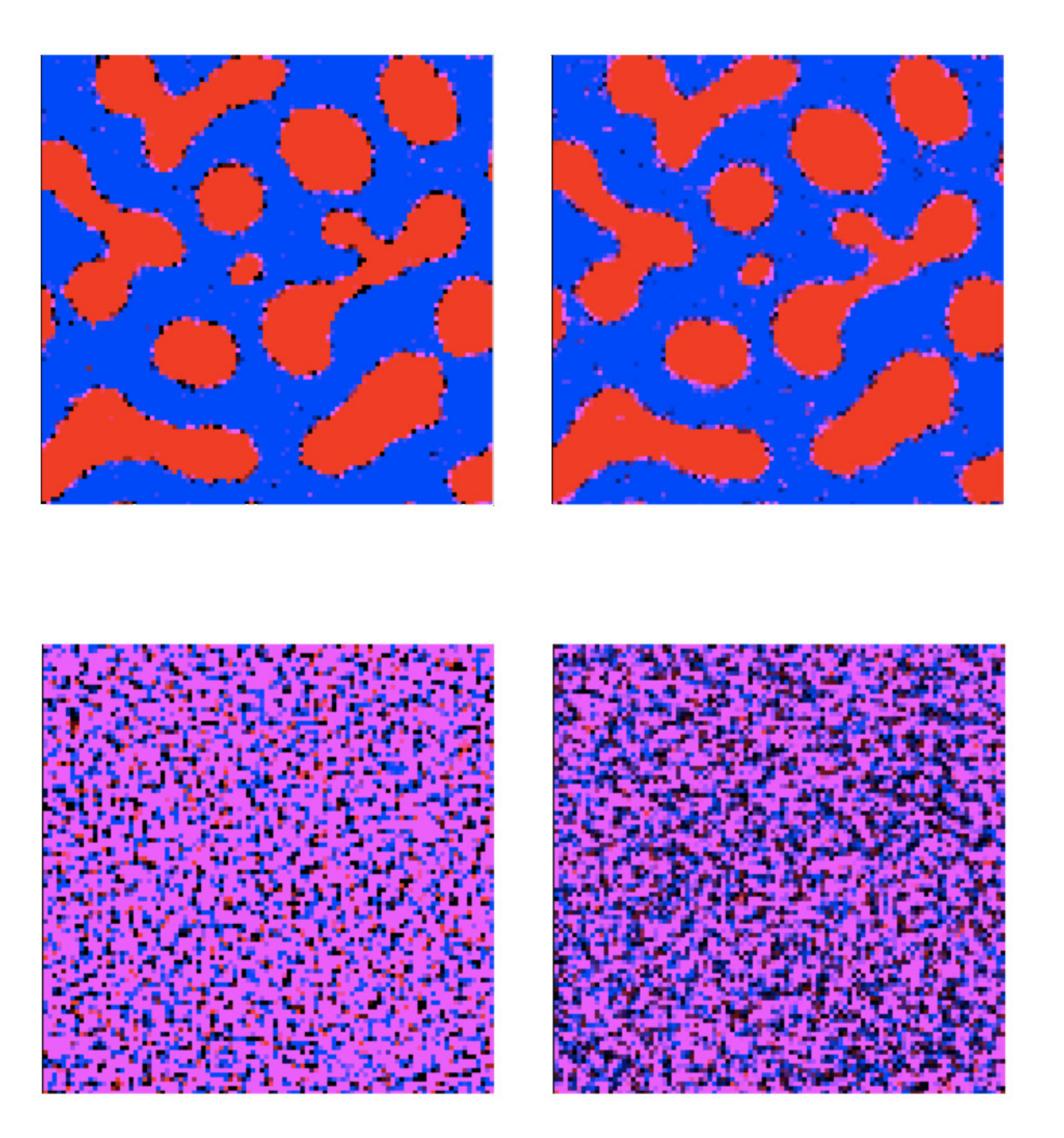}}&
 \scalebox{0.15}{\includegraphics{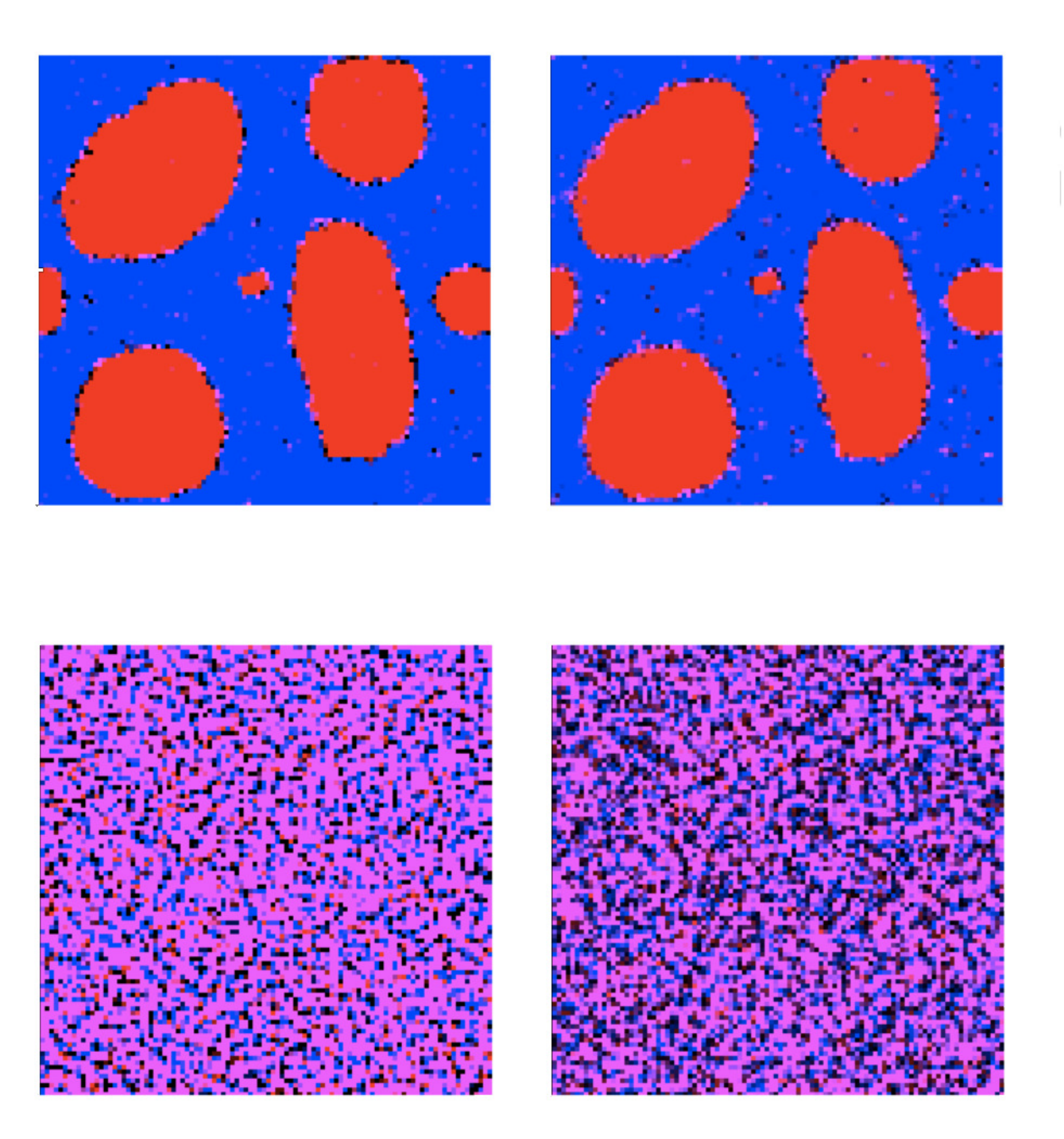}}&
\scalebox{0.15}{\includegraphics{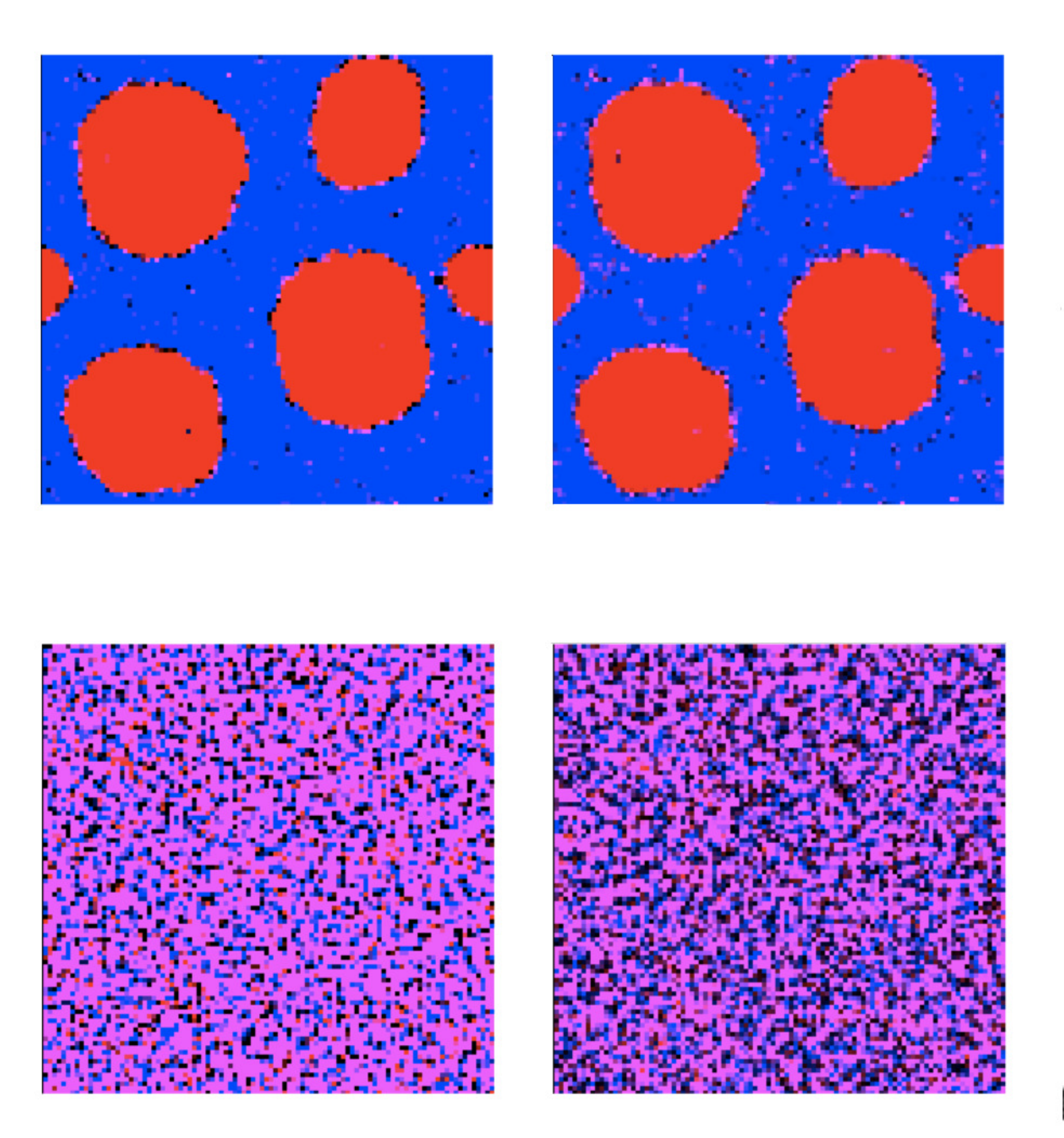}}\\ 
(d) t= 10000 &
(e) t= 100000 & (f) t= 150000 \\ &  & \\ \hline
\end{tabular}
\vspace{0.2cm}
\caption{Snapshots of a Monte Carlo simulation of gang dynamics on a
  100 $\times$ 100 square lattice with periodic boundary conditions.
  For each image in the sequence, the upper left panel represents gang
  agent populations while the upper right panel is the corresponding
  graffiti distribution. Iteration time is measured in arbitrary units.
  At $t=0$, $10^5$ red and $10^5$ blue gang members are placed at
  random on the lattice with possible overlaps. Magenta indicates a
  mixture and black indicates a void in gang agents or graffiti.
  Agents tag their sites with probability $p_m=0.1$ if the site is not
  marked by the opposite gang's graffiti and with probability $p_m=1$
  otherwise.  In the upper panels of the table entries, $p_g=0.25$ so
  that at each time step, graffiti will persist with a $75 \%$
  possibility. The lower panels, where $p_g=0.75$, mirror the upper
  ones but with a much lower graffiti persistence, of $25 \%$. Note
  the different outcomes of the simulations at long times: when
  graffiti is allowed to persist longer, segregation occurs with the
  formation of islands of red and blue gangs. In this work, just as in
  our current model, there is no direct interaction between gang
  members, underlying the importance of the graffiti field as an
  indirect coupling between agents.}
\label{tab:FIG1}
\end{center}
\end{table}

While the simulation rules described here are similar in spirit to the
model we analyze in this work, they do not directly lead to the
Hamiltonian in Eq.\,\ref{E:Hamiltonian}.  These informal simulations
however provided us with a playing ground to investigate any phase
transitions that may take place upon varying relevant parameters, such
as the graffiti removal probability $p_g$.  For example, in Table\,
\ref{tab:FIG1}, we track the dynamic progression of two sets of
parameters.  In the top row we set $p_g = 0.25$ so that 75$\%$ of the
graffiti is retained at each iteration, while in the lower one we set
$p_g = 0.75$ so that only 25$\%$ of the graffiti is kept. In Table
\,\ref{tab:FIG1} the left and right hand side plots show agent and
graffiti distributions, respectively.  Red and blue pixels indicate
site occupied by respective gang agents, black pixels represents no
agents, and magenta shades indicate coexistence of both red and blue
agents.  Just as in the main body of this paper, we do not include any
direct coupling between red and blue agents who interact only via the
graffiti field.  Similarly to what we later found in the main
analysis, the degree of persistence of the graffiti field -- which can
be related to $\lambda$ in the Hamiltonian in Eq.\,\ref{E:Hamiltonian}
-- yields different qualitative behaviors and, if sufficiently large,
may lead to aggregation patterns with distinct red and blue
phases. The emergence of separate clusters from these simulations
motivated the more extensive study 
presented in this work.

\end{document}